\newtheorem{theorem}{Theorem}
\newtheorem{cor}[theorem]{Corollary}
\newtheorem{lemma}[theorem]{Lemma}
\newtheorem{obs}[theorem]{Observation}
\newtheorem{defn}[theorem]{Definition}
\newtheorem{remark}[theorem]{Remark}
\newtheorem{fact}[theorem]{Fact}
\newif\ifFULL
\newcommand{\IGNORE}[1]{}
\newcounter{note}[section]
\newcommand{\haotian}[1]{\refstepcounter{note}$\ll${\bf Haotian~\thenote:}
  {\sf \color{blue}  #1}$\gg$\marginpar{\tiny\bf HJ~\thenote}}
\global\long\def\cirt#1{\raisebox{.5pt}{\textcircled{\raisebox{-.9pt}{#1}}}}
\newcommand{\pbintdisc}{Interval Discrepancy\xspace}
\newcommand{\pbone}{Online Interval Discrepancy\xspace}
\newcommand{\pbtreebalance}{Online Tree Balancing\xspace}
\newcommand{\pbtwo}{Online Stripe Discrepancy\xspace}
\newcommand{\pbthree}{Online Envy Minimization\xspace}
\tikzset{
    >=stealth',
    pil/.style={
           ->,
           thick,
           shorten <=2pt,
           shorten >=2pt,}
}
\tikzset{->-/.style={decoration={
  markings,
  mark=at position .5 with {\arrow{>}}},postaction={decorate}}}
\newcommand{\calC}{\mathcal{C}}
\newcommand{\A}{\ensuremath{\mathcal{A}}\xspace}
\newcommand{\B}{\ensuremath{\mathcal{B}}\xspace}
\newcommand{\p}{\ensuremath{\mathbb{P}}}
\newcommand{\E}{\mathbb{E}}
\newcommand{\T}{\mathcal{T}}
\newcommand{\eat}[1]{}
\newcommand{\hide}[1]{{\Large \color{red} Contents here are hidden! To reveal contents, remove this command.}}
\newcommand{\calP}{\ensuremath{\mathcal{P}}}
\newcommand{\envy}{\ensuremath{\mathsf{envy}}\xspace}
\newcommand{\child}{\ensuremath{\mathsf{Child}}}
\newcommand{\disc}{\ensuremath{\mathsf{disc}}}
\newcommand{\polylog}{\ensuremath{\mathsf{polylog}}\xspace}
\newcommand{\poly}{\ensuremath{\mathsf{poly}}}
\newcommand{\col}{\ensuremath{\chi}}
\newcommand{\dang}{\mathsf{dangerous}}
\newcommand{\calS}{\mathcal{S}}
\newcommand{\valuation}{\mathbf{v}}
\newcommand{\Def}{\textrm{def}}
\newenvironment{proofof}[1]{\smallskip\noindent{\bf Proof of #1.}}%
        {\hspace*{\fill}$\Box$\par}
\title{ 
Online Geometric Discrepancy for Stochastic Arrivals\\ with Applications to Envy Minimization
}
\author{
Haotian Jiang
\thanks{Paul G. Allen School of Computer Science \& Engineering, University of Washington, Seattle, USA. Email: \texttt{jhtdavid@cs.washington.edu.}}
\and Janardhan Kulkarni
\thanks{Microsoft Research, Redmond, USA. Email:\texttt{jakul@microsoft.com}.}
\and Sahil Singla
\thanks{Computer Science Department at Princeton University and School of Mathematics at Institute for Advanced Study, USA. Email:\texttt{singla@cs.princeton.edu}.}
}
\date{\today }
\begin{document}
\maketitle
\thispagestyle{empty}

\begin{abstract}

Consider a unit interval $[0,1]$ in which $n$ points arrive one-by-one independently and uniformly at random.  On arrival of  a point, the problem is to immediately and irrevocably color it in $\{+1,-1\}$ while ensuring that every interval $[a,b] \subseteq [0,1]$ is nearly-balanced. We define \emph{discrepancy} as the largest imbalance of any interval during the entire process. If all the arriving points were known upfront then we can color them alternately to achieve a discrepancy of $1$. What is the minimum possible expected discrepancy when we color the points online? 

We show that the discrepancy of the above problem is sub-polynomial in $n$  and that no algorithm can achieve a constant discrepancy. This is a substantial improvement over the trivial random coloring that only gets an $\widetilde{O}(\sqrt n)$ discrepancy.  We then obtain similar results for a natural generalization of this problem to $2$-dimensions where the points arrive uniformly at random in a unit square. This generalization allows us to improve recent results of Benade et al.~\cite{BenadeKPP-EC18} for the \emph{online envy minimization} problem when the arrivals are stochastic.

\end{abstract}
\newpage


\setcounter{page}{1}


\section{Introduction} \label{sec:intro}

Given a set $V$ of $n$ elements and a set system $\calS \subseteq 2^{V}$, the (combinatorial) discrepancy minimization problem is to color  the elements $\col \in \{+1,-1\}^{V}$  to minimize the maximum \emph{imbalance} of a set $S \in \calS$, i.e., we want to find  \emph{discrepancy} of set system $\calS$:
\[
{ \disc(\calS)  \overset{\Def}{=} \min_{\col} \max_{S \in \calS} \Big|\sum_{i\in S} \col(i) \Big|.}
\]
\IGNORE{
\haotian{The term discrepancy is defined with respect to a coloring?
\[
{ \disc(\calS, \col)  \overset{\Def}{=} \max_{S \in \calS} \Big|\sum_{i\in S} \col(i) \Big|.}
\]}}

This problem has been extensively studied because of its various applications in  approximation algorithms, pesudorandomness,  irregularities of distributions, sparsification, and differential privacy; see~\cite{Matousek-Book09,Chazelle-Book01, HobergR17, Nikolov-Thesis14,Bansal-Notes19} for more details.  
For general set systems, it is easy to show that a random assignment of colors gets $O(\sqrt{n \log |\calS|})$ discrepancy. 
Much of the discrepancy theory deals with when can this trivial coloring  be improved.
In a seminal result, Spencer showed that for any $\calS$ one can beat random coloring to obtain $O\big(\sqrt{n\log (|\calS|/n)}\big)$ discrepancy~\cite{Spencer85}.
Another important  line of work bounds the discrepancy in terms of frequency parameter $t$, which is the maximum number of sets in $\calS$ in which an element appears. In particular,  the result of Beck and Fiala ~\cite{BeckFiala-DAM81} says that the  discrepancy is at most $2t-1$ and   Banaszczyk's bound~\cite{Banaszczyk-Journal98} gives the  discrepancy of  $O(\sqrt{t \log n})$. 
A prominent open question in the field asks whether one can get $O(\sqrt t)$ discrepancy in the Beck-Fiala setting, which would generalize Spencer's result to sparse set systems.
On the algorithmic front, since the breakthrough result of Bansal \cite{Bansal-FOCS10}, there has been a remarkable progress in getting polynomial time algorithms matching these bounds~\cite{Lovett-Meka-SICOMP15, Bansal2013, BansalDGL18, BansalG17, BansalDG16, LevyRR17, Rothvoss14, EldanS18}). 

When the set systems have additional structures, much smaller  bounds on the discrepancy are often possible compared to the two bounds mentioned above.
Geometric set systems are some of the well known examples.
The simplest case is that  we are given $n$ points on the unit interval $[0,1]$ and $\calS$ is formed by all sub-intervals. 
Here  $\disc(\calS) \leq 1$ because we can color  odd and even points alternatively $+1$  and $-1$. 
A more interesting example is the classic  Tus\'nady's problem, where we are given $n$ points in a unit square and  $\calS$ consists of all  axis-parallel rectangles. Here the discrepancy is known to be between $\Omega(\log n)$~\cite{Beck-Combinatorica81} and $O(\log n)^{1.5}$~\cite{Nikolov-Mathematika19}.  
A line of work~\cite{EzraL19, BansalMeka-SODA19,HobergRothvoss-SODA19, Franks19} also bounds the discrepancy of {\em stochastic} set systems in the Beck-Fiala setting, where 
one can indeed obtain $O(\sqrt t)$ discrepancy (under some very mild assumptions)~\cite{BansalMeka-SODA19, EzraL19}.

\IGNORE{If the number of elements is much larger than the number of sets, this bound can be further improved to $O(1)$ \cite{EzraL19, HobergRothvoss-SODA19,Franks19}.}


Can we design \emph{online} algorithms for the discrepancy minimization problems that beat random coloring? This question was first posed by Spencer~\cite{Spencer77}.
In the online setting, the elements arrive one-by-one and 
upon arrival of an element we know the sets to which the element belongs.
An online algorithm has to immediately and irrevocably color the elements without knowing the future input.
Very recently, Bansal and Spencer~\cite{BansalSpencer-arXiv19} study this question in the context of stochastic online vector balancing problem. 
In the online setting, stochasticity is also a {\em necessary} assumption as for adversarial arrivals it is known that no online algorithm can achieve a smaller  discrepancy than random coloring~\cite{Spencer-Book87}  (also see~Chapter~15 in \cite{alonspencer}).
Bansal and Spencer \cite{BansalSpencer-arXiv19} show that for the online vector balancing problem with random inputs, one can get $O(\sqrt n)$ discrepancy matching the offline result.

In this paper, we continue this line of investigation and study  \emph{online}  discrepancy minimization for  \emph{geometric}   set systems. We show in Section~\ref{sec:lowerbounds} that again nothing better than random coloring is possible for adversarial arrivals. This leads us to the following basic question:
\vspace{-0.2cm}
\begin{quote} 
\emph{Can  we design \emph{online} algorithms for stochastic inputs that (approximately) achieve the smaller offline discrepancy bounds of geometric set systems?}
\end{quote}
\vspace{-0.2cm}

We believe that online geometric discrepancy minimization problems are interesting in their own right. However, specific problems we study in this paper are motivated by  applications to online envy minimization, which we discuss in Section~\ref{subsec:OnlineEnvyMin}.


\subsection{Our Results}
The first problem that we consider is the stochastic analog of  $n$ points on the unit interval. 
%

\noindent \textbf{\pbone:} \emph{Suppose $n$ points arrive one-by-one independently and uniformly at random on the unit interval $[0,1]$. The set system $\calS$ consists of all \emph{intervals} $[a,b]$ for $0\leq a< b\leq 1$. If we have to 
immediately and irrevocably color an element on its arrival, what is the minimum possible expected discrepancy?}

%

As mentioned above, this problem is trivial in the {offline setting}. We get discrepancy $1$ by alternately coloring points $\{+1,-1\}$. For online decisions the answer is no longer straightforward since we do not know if the next element will be odd or even in the final order. Indeed, in Section~\ref{sec:lowerbounds} we show that no online algorithm can obtain a constant discrepancy. Randomly coloring the elements $\{+1,-1\}$ gives only an $\widetilde{O}(\sqrt{n})$ discrepancy. Can we  beat random coloring? We answer this affirmatively.

\begin{restatable}{theorem}{thmpbone}
\label{thm:pbone}
There is an efficient online algorithm for the \pbone problem that gives $O(n^{c/\log\log n})$  discrepancy w.h.p. for some universal constant $c$. 
\end{restatable}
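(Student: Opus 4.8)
The plan is to build a hierarchical (dyadic) decomposition of $[0,1]$ and recursively balance points at each scale, so that the discrepancy of an arbitrary interval is controlled by summing contributions across $O(\log n)$ levels, but with the recursion arranged so that each level contributes only a sub-polynomial factor rather than a constant. Concretely, I would fix a branching parameter $k$ (to be chosen around $2^{\sqrt{\log n}}$ or $n^{\Theta(1/\log\log n)}$) and consider a tree of depth $d = \log_k n = O(\log n / \log k)$ whose nodes at level $j$ correspond to the $k^j$ canonical dyadic-in-base-$k$ subintervals of $[0,1]$. An arriving point lands in a root-to-leaf path of length $d$. The coloring rule is: maintain, for every tree node, a running ``balance token'' that records the signed count of points routed into that node so far; when a new point arrives, color it so as to push the balances along its path toward zero, e.g.\ by a greedy/potential argument that at each node decides $\pm1$ based on the current imbalance profile of that node's children. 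Since any query interval $[a,b]$ decomposes into $O(k)$ canonical nodes per level (the standard interval-to-canonical-pieces decomposition, now with base $k$), its discrepancy is at most $O(k)$ times the maximum per-node discrepancy summed appropriately over $d$ levels.

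The heart of the argument is to show that, under uniform stochastic arrivals, the per-node imbalance stays polylogarithmic (or even $O(\mathrm{polylog})$ in the number of points that ever enter that node), \emph{and} that the way errors propagate down the tree multiplies by only a small factor per level. The key probabilistic fact to exploit is that when $m$ i.i.d.\ uniform points are split among the $k$ children of a node, the split is a multinomial that concentrates: each child gets $m/k \pm \widetilde{O}(\sqrt{m/k})$ points. So if we could recolor offline at each node we'd pay only $O(1)$ there; online we cannot, but we can charge the online ``regret'' at a node to the fluctuation of how many points have arrived versus the idealized even split. I would set up a recursion $D(m) \le f(k)\cdot D(m/k) + (\text{error at this node})$ for the worst-case discrepancy $D(m)$ of a subtree that receives $m$ points, where $f(k)$ is a small polynomial in $k$ coming from the interval decomposition, and the per-node error is $\widetilde{O}(\sqrt{\text{something}})$ by concentration; unrolling the recursion over $d = O(\log n/\log k)$ levels gives $D(n) \le f(k)^{O(\log n/\log k)} \cdot \mathrm{polylog}(n)$. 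Choosing $\log k \approx \sqrt{\log n \log f(k)}$, or more crudely $k = n^{\Theta(1/\log\log n)}$ so that $\log n/\log k = O(\log\log n)$ and $f(k)^{O(\log\log n)} = n^{o(1)}$, yields the claimed $n^{O(1/\log\log n)}$ bound. One must also handle the tails: a union bound over the $\mathrm{poly}(n)$ tree nodes and over a discretized net of query intervals, using that a failure at any node is exponentially unlikely in the concentration step, delivers the high-probability guarantee.

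The main obstacle I anticipate is controlling \emph{how the online coloring decision at a node interacts with the decisions forced at its descendants}: a point's color is a single $\pm1$ that must simultaneously help all $d$ nodes on its path, so one cannot independently optimize each level. The right way around this is to decouple the problem via the tree structure itself --- decide the color using only the imbalance profile at the \emph{current} node along the path, treating the coloring as a composition of $d$ online one-dimensional balancing subroutines, and argue that the stochasticity (uniform arrivals, hence near-even multinomial splits at every node, independently enough across levels) prevents adversarial correlation from building up. Making ``independently enough'' precise --- i.e.\ showing the conditional arrival process into a subtree is still close to uniform given the history --- is the delicate part, and I would handle it by conditioning on the vector of arrival counts at each node and showing these counts alone determine the relevant error terms, with the residual positions being exchangeable. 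A secondary technical point is efficiency: the algorithm maintains $O(n)$ counters and does $O(d) = O(\log n)$ work per arrival, so it is clearly polynomial-time, but I would state the update rule explicitly (the greedy potential choice at each node) to make ``efficient'' unambiguous.
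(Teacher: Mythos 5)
Your high-level architecture --- embed $[0,1]$ into an $m$-ary tree of depth $h = \Theta(\log\log n)$ with $m = n^{\Theta(1/\log\log n)}$, keep per-node imbalances small online, and control a query interval by summing over its $O(mh)$ canonical tree-node pieces plus two leaf fragments with $\widetilde{O}(m)$ arrivals each --- is exactly the paper's reduction to \pbtreebalance, with matching parameter choices. The gap is precisely the step you yourself flag as ``the delicate part'': showing that a single online $\pm 1$ can simultaneously keep all $h+1$ imbalances along the arriving point's root-to-leaf path small. Your proposal to run ``a composition of $d$ online one-dimensional balancing subroutines'' and to ``decide the color using only the imbalance profile at the current node'' cannot be carried out literally: there is only one bit to assign, the $h+1$ nodes on the path may want opposite signs, and nothing in your sketch arbitrates the conflict. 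This arbitration is the entire mathematical content of the problem, and it is missing.

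The paper resolves it with a greedy rule minimizing a global potential $\Phi(t) = \sum_v \cosh(\lambda d_v)$ for $\lambda = 1/\log n$, and the crux is their Separation Lemma: when $\Phi$ is large, for a uniformly random root-leaf path $\calP_t$ the signed terms $\sinh(\lambda d_v)$ do not destructively cancel, i.e.\ $\E\bigl[\bigl|\sum_{v\in\calP_t}\sinh(\lambda d_v)\bigr|\bigr]$ is within an $\exp(O(h))$ factor of $\E\bigl[\sum_{v\in\calP_t}\cosh(\lambda d_v)\bigr]$. This gives a negative drift and hence $O(\log^2 n)$ imbalance at \emph{every} node w.h.p.\ simultaneously. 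The lemma is proved by induction on $h$, crucially using that a node's imbalance equals the sum of its children's imbalances, so most children cannot be ``dangerous'' (sign-opposed to the incoming value). Your ``charge the regret to the multinomial fluctuation'' idea does not touch this obstruction: even with perfectly even arrival splits, the online algorithm must commit colors before knowing parities, and that difficulty is orthogonal to concentration of counts. Separately, the recursion $D(m)\le f(k)\cdot D(m/k)+\text{error}$ with $f(k)$ ``a small polynomial in $k$'' unrolls to $f(k)^{\Theta(\log n/\log k)} = k^{\Theta(\log n/\log k)} = n^{\Theta(1)}$, which is \emph{not} subpolynomial; the paper avoids any per-level multiplicative blowup by proving a uniform polylog bound at all nodes directly, so that the $n^{c/\log\log n}$ comes only from the $\widetilde{O}(m)$ leaf fragments and the $O(mh)$ canonical pieces.
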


Throughout the paper, ``w.h.p.'' stands for ``with high probability'' and it means with $1-1/\poly(n)$ probability where the exponent of the polynomial can be made as large as desired, depending on the constant $c$. 
The assumption of stochastic arrivals is crucial in obtaining $o(\sqrt{n})$ discrepancy. In Section~\ref{sec:lowerbounds}, we show discrepancy is $\Omega(\sqrt{n})$ for any online algorithm with adversarial arrivals. 

Next we generalize \pbone to two-dimensions. Here, points arrive uniformly at random in a unit square and the goal is to minimize  \pbone after projecting the points on both the axes.

\noindent \textbf{\pbtwo:} \emph{Suppose $n$ points arrive one-by-one independently and uniformly at random\footnote{In fact, our approach can be used to handle any product distribution on the unit square. One can use the probability integral transformation to reduce any product distribution to the uniform distribution without increasing the discrepancy.} on the unit square $[0,1]\times [0,1]$. The set system $\calS$ consists of all \emph{stripes} $[a,b]\times [0,1]$ and $[0,1]\times [a,b]$ for $0\leq a <b \leq 1$.  If we have to immediately and irrevocably color an element on its arrival, what is the minimum possible expected discrepancy?}


The above problem has  also been studied in the offline setting where we know the location of the $n$ points upfront. Usually it is stated as given two permutations on $n$ elements, color the elements to minimize the discrepancy of every interval of both the permutations. The  problems are equivalent as the two permutations correspond to the orders given after projecting the points on both the axes. 
A clever proof of Spencer~\cite{Spencer-Book87} shows that in the offline setting the discrepancy is always bounded by $2$.\footnote{Whether the discrepancy  for three permutations is $O(1)$ was a ``tantalizing'' open question~\cite{Matousek-Book09}. It was finally resolved by Newman et al.~\cite{NewmanNN-FOCS12} who showed that the discrepancy can be $\Omega(\log n)$.} In the online setting, again randomly coloring the elements $\{+1,-1\}$ gives an $\widetilde{O}(\sqrt{n})$ discrepancy, and the question is if we can obtain smaller upper bounds.

\begin{restatable}{theorem}{thmpbtwo}
 \label{thm:StripeProblem}
There is an efficient online algorithm for the \pbtwo problem that gives $O(n^{c/\log\log n})$  discrepancy w.h.p. for some universal constant $c$. 
\end{restatable}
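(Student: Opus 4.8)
The plan is to reduce the \pbtwo problem to a two-tree version of the \pbtreebalance machinery used for Theorem~\ref{thm:pbone}. Fix a branching parameter $b\approx\log n$ and build a $b$-ary tree $\T_x$ whose nodes are nested vertical strips $[a,b]\times[0,1]$ of the square, of depth $\approx\log n/\log\log n$, so that $\T_x$ has $\approx n$ leaves, each a strip of width $\approx 1/n$ receiving $\widetilde{O}(1)$ points w.h.p.; symmetrically build a $b$-ary tree $\T_y$ of nested horizontal strips $[0,1]\times[a,b]$. Every arriving point is assigned a leaf of $\T_x$ by its $x$-coordinate and, independently, a leaf of $\T_y$ by its $y$-coordinate; because the points are uniform on the square, a point lands in a given node's strip with probability exactly its width, and these draws are mutually independent across the two trees and across arrivals — precisely the ``smooth stochastic load'' setting in which \pbtreebalance operates, now for two trees simultaneously.

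The reduction from stripes to node imbalances is the easy part. Any vertical stripe $[a,b]\times[0,1]$ decomposes, along $\T_x$, into $O(b\cdot\mathrm{depth})=\polylog n$ canonical strips each equal to the strip of some node of $\T_x$, plus the $O(b)$ boundary leaf-strips; hence its discrepancy is at most $\polylog n$ times the largest imbalance of a $\T_x$-node, plus the $\widetilde{O}(1)$ per-leaf term. Symmetrically, a horizontal stripe's discrepancy is controlled by the largest $\T_y$-node imbalance. So it suffices to run an online coloring that keeps the imbalance of every node of $\T_x$ \emph{and} of every node of $\T_y$ at most $n^{O(1/\log\log n)}$ throughout; the theorem then follows with a union bound over the $\poly(n)$ nodes and $n$ time steps, optimizing $b\approx\log n$ exactly as in the one-dimensional case.

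The real work, and the main obstacle, is the online coloring itself: the color of an arriving point simultaneously shifts all of its $\T_x$-ancestors and all of its $\T_y$-ancestors by $\pm1$, so the ``local'' preferences of the two root-to-leaf paths can conflict and one cannot perfectly balance both trees at once. The plan is to extend the recursive, level-by-level scheme behind Theorem~\ref{thm:pbone}: maintain a small ``carry'' of uncolored (or fractional) mass at each node of each tree, push it up the two trees as points arrive, and choose each color so as to cancel carries, arguing that the \emph{independence} of the $\T_x$- and $\T_y$-assignments prevents the two processes from systematically fighting — so that the loss stays $O(1)$ per level and the node imbalances remain $2^{O(\mathrm{depth})}=n^{O(1/\log\log n)}$. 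Controlling this interaction — rather than, say, paying a $\sqrt n$ from an uncoordinated random walk at high nodes — is exactly where stochasticity and the hierarchical structure must be used together, and is the crux of the argument. As sanity checks: the offline optimum here is $O(1)$ by Spencer's two-permutation bound, so a sub-polynomial online bound is the right target; and the $\Omega(\sqrt n)$ adversarial lower bound of Section~\ref{sec:lowerbounds} shows the stochastic assumption cannot be dropped.
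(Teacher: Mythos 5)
Your reduction skeleton — two trees $\T_x,\T_y$ of nested strips, one per axis; decompose a stripe into $\polylog(n)$ canonical node-strips plus boundary pieces; union-bound over nodes and times — is the same as the paper's, and your closing sanity checks are apt. But the parameters and, more importantly, the actual coloring rule and its analysis diverge in ways that leave a real gap. The paper takes height $h \leq \log\log n/C$ with fan-out $m = n^{1/(h+1)}$, not depth $\approx\log n/\log\log n$ with branching $\approx\log n$. This choice is forced: the paper's algorithm is a greedy potential-minimization rule on $\Phi = \sum_v \cosh(\lambda d_v)$, and the drift argument $\E[\Delta\Phi]\leq -\frac{\lambda}{2f(h)}\E[Q] + \lambda^2\E[Q] \leq 0$ hinges on a Separation Lemma whose loss factor $f(h) = \Theta(C^h)$ is \emph{exponential} in the tree height (and this is shown to be tight in Section~\ref{sec:SepLemmaTightness}). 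To beat $\lambda = 1/\log n$ one must have $f(h) = O(\log n)$, i.e.\ $h = O(\log\log n)$. At your depth $\log n/\log\log n$, $f(h)$ is super-polynomial and the drift vanishes. The paper instead pays the sub-polynomial $n^{O(1/\log\log n)}$ at the leaves (each has $\widetilde{O}(m)$ arrivals), while node imbalances stay $O(\log^2 n)$; you try to pay it through $2^{O(\mathrm{depth})}$ growth over many levels — a plausible-sounding alternative accounting, but one you would need a different recursion to support, and you don't supply it. Your ``carry''/fractional-mass scheme is a different (unspecified) algorithm from the paper's, and no argument is given that it maintains a per-level $O(1)$ multiplicative loss in two dimensions.

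The deeper gap is the coordination between the two trees, which you correctly identify as ``the crux'' but do not resolve. The paper resolves it via a case analysis inside Lemma~\ref{lem:2DStripePotentialIncrease}: if the root is light, fold the $L$-term of the lighter tree into the entering value for the heavier tree's Separation Lemma; if the root is heavy, observe that the dangerous set of $d_r$ is an interval centered at $-d_r$, hence any $q_x,q_y$ that both land in $\dang(d_r)$ automatically share a sign, so $|L_x+L_y| = |L_x|+|L_y|$ with no cancellation — and by the \emph{independence} of $\calP_x,\calP_y$ this joint event, or its complement where the Separation Lemma applies directly, occurs with constant probability. Saying ``independence prevents the two processes from systematically fighting'' is the right intuition, but the mechanism (same-sign when both dangerous, via the geometry of the dangerous set) is the actual content of the proof and is absent from your proposal.
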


Besides being a natural problem, next we show that the \pbtwo problem has applications to  envy minimization.


\subsection{Applications to  Envy Minimization}
\label{subsec:OnlineEnvyMin}

The goal of fair division is to allocate items between competing players ``fairly''. A popular measure of fairness is an \emph{envy-free} solution, i.e., everyone values their allocation more than any other player's allocation~\cite{FoleyEssay67,ThomsonVarian-Essay85}. When the items are \emph{indivisible}, however, envy-free allocations are not always possible, e.g., a single item and two players. So instead we want an allocation that minimizes envy.

Given a set $V$ of $n$ indivisible items  and valuations $\valuation^i = (v^i_1, v^i_2, \ldots, v^i_n)$ of two players for $i\in \{1,2\}$, the \emph{envy minimization} problem is to {allocate} these items to the players, i.e. find a subset $S \subseteq V$ for the first player and the remaining items $\overline{S} = V\setminus S$ for the second player, to minimize
\begin{align} 	\label{eq:envyDefn}
 \envy(\valuation^1, \valuation^2 , S , \overline{S}) \overset{\Def}{=}	\max \Big\{ v^1(\overline{S}) - v^1(S)~,~ v^2(S) - v^2(\overline{S}) \Big\}
\end{align}
where $v^i(S)$ denotes $\sum_{j\in S} v^i_{j}$. A simple round-robin algorithm where the players alternately select their best of the remaining items ensures    envy is at most $\max_{i,j} \{v^i_j\}$~\cite{LiptonMMS-EC04,Budish-Journal11}.

Motivated by applications in food banks, a recent work of Benade et al.~\cite{BenadeKPP-EC18} studies an \emph{online} version of envy minimization. Here items arrive one-by-one, i.e. on arrival item $j$ reveals its valuations $v^i_j$, and the algorithm has to immediately and irrevocably  allocate the item. Assuming that all the valuations lie in $[0,1]$, \cite{BenadeKPP-EC18} show that the minimum possible envy is  $\widetilde{\Theta}(\sqrt{n})$ and is achieved  by the trivial algorithm that randomly allocates the items. Since this bound is tight for adversarial arrival of items, and also because the algorithm is uninteresting,  we ask  whether smaller envy is possible for stochastic arrivals.

\noindent \textbf{\pbthree:} \emph{Given probability distributions $D_i$ over $[0,1]$ for $i\in \{1,2\}$, suppose $n$ items independently  draw   their valuations $v^i_j \sim D_i$  for $j \in [n]$\footnote{By $[n]$ we denote the set $\{1,2,\ldots,n\}$.}.  If the  valuations $v^i_j$ of these items are  revealed one-by-one and we have to  immediately and irrevocably allocate an item when its valuations are revealed, what is the minimum possible expected envy?}

Our next result is to reduce the \pbthree problem to the \pbtwo problem.  This allows us to obtain the following result using Theorem~\ref{thm:StripeProblem}.

\begin{restatable}{theorem}{thmpbthree}
 \label{thm:envyMin}
There is an efficient online algorithm for the \pbthree  problem that gives $O(n^{c/\log\log n})$  envy w.h.p. for some universal constant $c$. 
\end{restatable}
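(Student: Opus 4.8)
The plan is to reduce \pbthree to \pbtwo and then invoke Theorem~\ref{thm:StripeProblem}. First I would identify an allocation with a coloring $\col \in \{+1,-1\}^{[n]}$: let $\col(j)=+1$ mean item $j$ goes to player~$1$ (so $S=\{j:\col(j)=+1\}$) and $\col(j)=-1$ mean it goes to player~$2$. Then $v^1(\overline{S})-v^1(S) = -\sum_{j}\col(j)\,v^1_j$ and $v^2(S)-v^2(\overline{S}) = \sum_{j}\col(j)\,v^2_j$, so that $\envy(v^1,v^2,S,\overline{S}) \le \max_{i\in\{1,2\}}\big|\sum_{j}\col(j)\,v^i_j\big|$. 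Hence it suffices to produce, online, a coloring for which both weighted signed sums $\sum_j\col(j)v^i_j$ have small absolute value.

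Next I would pass to the unit square via the probability integral transform. When item $j$ arrives and reveals $(v^1_j,v^2_j)$, set $u^i_j := F_i\big((v^i_j)^-\big) + W^i_j\cdot D_i(\{v^i_j\})$ for $i\in\{1,2\}$, where $F_i$ is the c.d.f.\ of $D_i$ and the $W^i_j$ are fresh independent $\mathrm{Uniform}[0,1]$ variables (the atom term only matters when $D_i$ has mass at $v^i_j$). Standard properties of this transform give that each $u^i_j\sim\mathrm{Uniform}[0,1]$, that $u^1_j \perp u^2_j$ (since $v^1_j\perp v^2_j$ and the transforms use independent randomness), and hence that the points $(u^1_j,u^2_j)_{j\in[n]}$ are i.i.d.\ uniform on $[0,1]^2$ — exactly the input distribution of \pbtwo. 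Moreover, in each coordinate the $u^i$-order refines the $v^i$-order with probability~$1$, so for every threshold $t$ the set $\{j:v^i_j>t\}$ coincides with $\{j:u^i_j>c\}$ for some $c=c_i(t)\in[0,1]$, i.e.\ with the set of points lying in a single axis-$i$ stripe. I would then feed the points $(u^1_j,u^2_j)$ as they arrive to the online algorithm of Theorem~\ref{thm:StripeProblem}, let $\col(j)\in\{+1,-1\}$ be its output, and allocate item $j$ to player~$1$ iff $\col(j)=+1$; this coloring has stripe-discrepancy $\Delta = O(n^{c/\log\log n})$ w.h.p.

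It then remains to bound the envy by $\Delta$. Since $v^i_j\in[0,1]$, the layer-cake identity gives $\sum_{j}\col(j)\,v^i_j = \int_0^1\big(\sum_{j:v^i_j>t}\col(j)\big)\,\dd t$; by the previous paragraph, for each fixed $t$ the inner sum is the signed number of points inside one axis-$i$ stripe, so its absolute value is at most $\Delta$. Integrating over $t\in[0,1]$ yields $\big|\sum_j\col(j)v^i_j\big|\le\Delta$ for $i\in\{1,2\}$, and therefore $\envy\le\Delta = O(n^{c/\log\log n})$ w.h.p. The algorithm is efficient: it only needs to evaluate $F_1,F_2$ (available from the given distributions) plus one call to the efficient subroutine of Theorem~\ref{thm:StripeProblem}.

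The substance of the argument is entirely in Theorem~\ref{thm:StripeProblem}; the only delicate point in the reduction is that $D_1$ and $D_2$ need not be continuous, which is why the transform randomizes within atoms — this is what makes (i) the transformed points genuinely uniform on the square and (ii) every event ``$v^i_j$ exceeds a threshold'' correspond to a stripe rather than to a union of two partial stripes. I expect this atom-handling, and the bookkeeping that it does not disturb the independence and uniformity that Theorem~\ref{thm:StripeProblem} assumes, to be the only (minor) obstacle.
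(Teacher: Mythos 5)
Your proof is correct, and it reaches the same reduction (CDF/probability-integral transform to $[0,1]^2$, then feed to the \pbtwo algorithm) but by a genuinely more direct route for the final inequality.  Where the paper defines \emph{ordinal envy}, gives three equivalent characterizations of it (Definitions~\ref{defn:OrdinalEnvy}--\ref{defn:OEnvyEqualsWorstVEnvy} and Theorem~\ref{thm:OrdEnvyEquiv}), and then chains $\disc_O \ge \envy_O \ge \envy_C$ via Lemma~\ref{lem:DiscUpperBoundsOEnvy} and Corollary~\ref{cor:OEnvyUpperBoundsVEnvy}, you collapse these two links into one step with the layer-cake identity
\[
\sum_j \col(j)\,v^i_j \;=\; \int_0^1 \Big(\sum_{j:\,v^i_j>t}\col(j)\Big)\,\dd t,
\]
pointing out that each inner sum is exactly the imbalance of one axis-$i$ stripe and so has magnitude at most $\Delta$.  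This is a cleaner and shorter derivation of the inequality $\envy\le\Delta$; what it does not give you is the standalone notion of ordinal envy and its cancellation-procedure characterization, which the paper advertises as being of independent interest.  Your atom-handling (randomize within the atom using fresh $\mathrm{Uniform}[0,1]$ noise so the transformed points are genuinely i.i.d.\ uniform and the $u$-order almost surely refines the $v$-order) is also somewhat tighter than the paper's footnote-level ``perturb each point mass into a continuous bump'' argument; both are fine, but yours avoids having to argue that the perturbation changes the cardinal envy only negligibly.  One small thing to state explicitly if you write this up: because the transformed points are a.s.\ pairwise distinct, for every fixed $t$ the set $\{j:v^i_j>t\}$ really does equal $\{j:u^i_j>c\}$ for an appropriate $c$ (pick $c$ strictly between the relevant $u$-values), so the stripe correspondence is exact rather than approximate.
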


The proof of Theorem~\ref{thm:envyMin} goes via a stronger notion of envy which might be of independent interest. We show that our bound on   \pbtwo  in Theorem~\ref{thm:StripeProblem} implies a bound  on the ``ordinal envy'', which in turn implies a bound on the ``cardinal envy'' defined in~\eqref{eq:envyDefn}. Here, the ordinal envy of a player is essentially the worst cardinal envy that is consistent with a particular ordering (Lemma~\ref{lem:OEnvyEqualsWorstVEnvy}). 

\IGNORE{Here,  ordinal envy of a player roughly denotes the  number of items in another player's set that cannot be matched to a higher value item in their set (Definition~\ref{defn:OrdinalEnvy}).}



\subsection{Our Approach via \pbtreebalance}

The  approach of all our results is to go via  the following  \emph{\pbtreebalance} problem. We think that this problem is of independent interest and will find further applications. 

\noindent \textbf{\pbtreebalance:} \emph{Given a complete $m$-ary tree of height $h$, suppose $n$ points arrive one-by-one independently and uniformly at random at the leaves of this tree with possible repetitions. The set system $\calS$ consists of all subtrees, i.e., all arrivals in a subtree correspond to a set in $\calS$.  If we have to immediately and irrevocably color an element on its arrival, what is the minimum possible expected discrepancy?
}

\IGNORE{
\begin{wrapfigure}{L}{0.45\textwidth}
\vspace{-7mm}
	
\input{figs/tree}

\vspace{-10mm}


\label{fig:tree}
\end{wrapfigure}}

The idea for defining  \pbtreebalance  is that we can ``approximately'' reduce \pbone to it by embedding  the unit interval onto a tree. This is achieved by breaking the unit interval into $m^h$ disjoint pieces, and an arrival in the $i$th piece $[\frac{i-1}{m^h},\frac{i}{m^h} ]$ corresponds to an arrival in the $i$th leaf. The losses due to subintervals within any piece are small and can be easily bounded.


\smallskip
\noindent \textbf{Solving \pbtreebalance.} 
Our algorithm is based on a  potential function $\Phi(\cdot)$. More precisely, after $t$ arrivals we define $\Phi(t)$ as the sum of hyperbolic cosines (recall, $\cosh( x) = \exp( x)/2+ \exp(- x)/2$) of the \emph{imbalance} of every subtree (see~\eqref{eq:potential}).  The algorithm simply greedily assigns the next arrival a color such that the increase in the potential  is minimized. 
The use of hyperbolic cosine 
is natural here because it behaves like the softmax function but does not depend on the sign of  imbalance, e.g., see Chazelle's book~\cite{Chazelle-Book01} and Bansal-Spencer's recent paper~\cite{BansalSpencer-arXiv19}. 
We show that w.h.p. the potential of our algorithm is always bounded by $\poly(n)$, which directly implies that the imbalance of every subtree (and hence  discrepancy) is ``small''. To achieve this, our main claim is that the potential has  a ``drift'' towards $0$. Roughly, we prove that if $\Phi(t) > n^{10}$ then $\E[\Delta \Phi(t) ]<0$, so that  w.h.p. the potential  remains always smaller than $n^{20}$ (Lemma~\ref{lem:RandTreeBal}).  Most of our effort goes in proving the existence of this drift. 
 
To analyze the drift, we use the standard idea of  bounding $\Delta \Phi(t) $  using the Taylor expansion  of $\Phi(t)$. It is not difficult to see that this gives an expression of the form
 \[\Delta \Phi(t) ~\leq~  L\cdot  \col(t) + Q,
 \]
 where $L = \sum_{i \in [h]} \sinh(d_i)$ is the sum of  hyperbolic sines of a vector of imbalances  
  $\mathbf{d} = (d_1, \ldots, d_h)$ along a root-leaf path in the tree and  $Q = \sum_{i \in [h]} \cosh(d_i)$  is the sum of hyperbolic cosines of the \emph{same} imbalance  vector $\mathbf{d}$. Since we are free to choose $\col(t) \in \{-1,+1\}$, we get 
$ \Delta \Phi(t) \leq  -  |L|  +  Q$ (notice $Q$ is always non-negative). Thus, to prove $\Delta \Phi(t) <0$, it suffices to show $|L| \approx Q$. 
Alternately, since $|\sinh(x)| = \sqrt{\cosh^2(x)-1} \approx \cosh(x)$, it suffices to show that for $\mathbf{d}$  the magnitude function roughly ``separates'' over the sum of hyperbolic sines, i.e.,
\begin{align}\label{eq:sepLemStat}
\textstyle{\Big| \sum_{i\in [h]} \sinh(d_i) \Big| \approx \sum_{i\in [h]} |\sinh(d_i)|.}
\end{align}

\noindent \textbf{A Separation Lemma on Trees.}
A separation statement like~\eqref{eq:sepLemStat} is clearly false for an arbitrary imbalance vector $\mathbf{d}$, e.g., consider $\mathbf{d} = (+1,-1,+1,-1,\ldots)$ where the  LHS is (close to) zero. The heart of our proof lies in proving  a \emph{Separation Lemma} (see Lemma~\ref{lem:ScalarSeparationLemma}) that for a uniformly random root-leaf path in the tree,  in expectation the randomly generated imbalance vector $\mathbf{d}$ will satisfy~\eqref{eq:sepLemStat}. 
Since the arrivals are uniformly random, we can exploit the tree structure of our problem and use induction on the height of the tree. The key to our inductive proof is the definition of a ``safe subtree'' which is a subtree rooted at a child $s \in \child(r)$ of the root $r$ such that $|\sinh(d_s) + \sinh(d_r)|$ roughly separates into $|\sinh(d_s)| + |\sinh(d_r)|$. Any subtree that doesn't satisfy this property is called a ``dangerous subtree''. 
We crucially use the fact that the imbalance of the root $r$ equals the sum of the imbalances of its children, which is true because the children partition arrivals in their parent. 
Next we show that this implies that a large fraction of the subtrees of the root $r$ are safe, which allows us to apply induction hypothesis directly to the safe subtrees in the case where they are ``heavy'', i.e., they constitute a large fraction of the mass of  $Q$. On the other hand, if the safe subtrees are ``light'', we cannot directly apply induction hypothesis to the dangerous subtrees as they might cancel out the value of $\sinh(d_r)$ from the root. Nevertheless, we show that there exists a way to modify the imbalance of the  subtrees by incurring small losses so that we can apply induction hypothesis to the dangerous subtrees after the modification.


A lemma similar in spirit to our Separation lemma was also shown in the recent work of Bansal and Spencer for online vector balancing~\cite{BansalSpencer-arXiv19}. Their proof, however,  quite crucially exploits the fact that each coordinate is uniformly and independently distributed in $\{+1,-1\}$
\footnote{It is unclear how to extend their proof to the case where each coordinate is uniformly and independently in $\{0,1\}$.}. 
In contrast, in our setting each coordinate is in $\{0,1 \}$ and there are {\em correlations} among the coordinates due to the tree structure.
These correlations along with the fact that we want to show a discrepancy bound  significantly smaller than $O(\sqrt n)$ introduce several new technical difficulties in our problem. 
But on the other hand, the tree structure allows us to establish the necessary  properties required for our inductive proof.

\smallskip
\noindent \textbf{Solving \pbtwo.} 
We again go via \pbtreebalance. This time we embed the unit square $[0,1]\times [0,1]$ into two trees, one for each dimension after projecting the points on the corresponding axis. The new potential function $\Phi(t)$ equals $\Phi_x(t) + \Phi_y(t)$ where $\Phi_x$ and $\Phi_y$ are defined for the trees corresponding to both the axes. We again argue that this potential is  bounded by $\poly(n)$ due to a drift towards $0$. The primary difference is that now we have
\[ \Delta \Phi(t) ~\leq~  \big(L_x\cdot  \col(t) + Q_x \big) + \big(L_y \cdot  \col(t) + Q_y \big).
\]
So although  the $1$-d argument implies there is a color with $\E\big[-|L_x| + Q_x\big] < 0$ and a color with $\E \big[-|L_y| + Q_y\big] < 0$,  it is not clear if the two colors are consistent with each other. In other words, the challenge is that the 
the two axes might cancel the effect of each other. 
We overcome this  by using the independence of the $x$ and $y$ coordinates to argue that with $\Omega(1)$ probability such a cancellation does not happen.



\subsection{Further Related Work and Open Problems}

The \emph{$\ell$-permutations} problem consists of $\ell$ permutations over $n$ elements, and the goal is to minimize discrepancy over every interval of  the permutations. As mentioned earlier, for $\ell \in \{1,2\}$ we know  discrepancy is $O(1)$~\cite{Spencer-Book87}, but for $\ell=3$ discrepancy becomes $\Theta(\log n)$~\cite{NewmanNN-FOCS12}. Our results in Theorem~\ref{thm:pbone} and Theorem~\ref{thm:StripeProblem} can be viewed as obtaining sub-polynomial bounds for the $1$-permutation and $2$-permutation problems in an online model with stochastic arrivals. An immediate open question is whether one can obtain $\polylog~n$ bounds? 
In Section~\ref{sec:lowerbounds} we  give for these problems $\Omega(\sqrt{n})$ lower bounds  for adversarial arrivals and $\Omega(\polylog n)$ lower bounds for stochastic arrivals. 
For general $\ell$, it is known how to obtain an $O(\sqrt{\ell} \log n)$ discrepancy in the offline setting~\cite{SpencerST-SODA97}.  It will be interesting to extend our online results to $\ell$ permutations (dimensions), i.e., for uniform arrivals in $[0,1]^{\ell}$. 
Another nice question is to get $\polylog~n$ bounds for the online stochastic Tus\'nady's problem where $\calS$ consists of every axis parallel rectangles.

\IGNORE{
The  best known bounds on discrepancy in the Beck-Fiala setting (i.e., each element is in at most $t$ sets) are $2t - \log^* t$~\cite{Bukh-Journal16} and $O(\sqrt{t \log n})$~\cite{Banaszczyk-Journal98,BansalDG-FOCS16}. This problem has been recently also studied for stochastic set systems where $O(\sqrt{t})$ discrepancy is possible~\cite{EzraLovett-Random19,BansalMeka-SODA19,HobergRothvoss-SODA19}.  Our work (along with Bansal-Spencer~\cite{BansalSpencer-arXiv19}) initiates the study of these stochastic systems when the algorithm also makes online decisions. This raises the question if one can obtain a $\poly\big(t\log(n)\big)$ discrepancy using an online algorithm for stochastic set systems.
}

For  various notions of envy free allocations  we refer the readers to~\cite{AzizGMW-AI15}.
The  envy minimization problem in the online setting was first considered  by Benade et al.~\cite{BenadeKPP-EC18}. In their model an adversary adaptively decides the values of the next item for both the players in $[0,1]$. They give a deterministic algorithm with  $\widetilde{O}(\sqrt{n})$ envy, and show that their bound is tight up to $\polylog~n$ factors. Our Theorem~\ref{thm:envyMin} shows that one can obtain much smaller envy under the assumption of stochastic item values. It is an interesting open question if our results can be  extended  to a setting where values of the two players are correlated for the same item (but are i.i.d. over different items). Extending our results to more than two players will also be interesting.


\subsection{Roadmap}
In Section~\ref{sec:OnlineTreeBal} we discuss the \pbtreebalance problem. We prove a Separation Lemma  and  use it to obtain $O(\polylog(n))$ bounds on the discrepancy of trees of height $h = O(\log\log n)$. In Section~\ref{sec:1DTusnady} we use our  \pbtreebalance results to obtain sub-polynomial bounds on the discrepancy for \pbone. Here we prove Theorem~\ref{thm:pbone} and show a super-constant lower bound. We also discuss lower bounds for \pbone in adversarial settings in Section~\ref{sec:lowerbounds}. In Section~\ref{sec:OnlineRand2DStripe} we prove Theorem~\ref{thm:StripeProblem} to obtain sub-polynomial bounds on the discrepancy for \pbtwo. Finally, in Section~\ref{sec:envyMinim} we discuss our applications to envy minimization. In  particular, we discuss the notion of ordinal envy and prove Theorem~\ref{thm:envyMin} by reducing  \pbthree to \pbtwo via the ordinal envy.



\section{\pbtreebalance}
\label{sec:OnlineTreeBal}

In this section we focus on \pbtreebalance and obtain $O(\polylog(n))$ discrepancy for it. 
In Section~\ref{sec:1DTusnady} we show how this   immediately implies a sub-polynomial discrepancy for \pbone.

Suppose we have a complete $m$-ary tree $\T$ with root $r$ and height\footnote{Define the height of the trivial tree that contains a single node to be $0$.} $h$.
The {\em imbalance} $d_v$ of each node $v \in \T$ is initially 0.
There are $n$ online arrivals where  the $t$th arrival picks a root-leaf path $\calP_t$ uniformly at random.
Upon picking $\calP_t$, we need to immediately and irrevocably assign $\col(t) \in \{-1,+1\}$ to arrival $t$, which updates the imbalance of all nodes $v$ as:
\[
d_v ~\leftarrow~ d_v +\col(t) \qquad \forall v \in \calP_t.
\]
The  goal is to minimize the discrepancy of the tree, i.e., the worst imbalance after $n$ arrivals:
\[
\disc(\T)  ~\overset{\Def}{=} ~ \max_{v \in \T} \{ |d_v|\}.
\]
If we randomly color each arrival then the discrepancy can be bounded by $\widetilde{\Theta}(\sqrt{n})$. The following is our main result for \pbtreebalance which  shows one can do much better for height $h = O(\log \log n)$ and fan-out $m = \Omega(1)$. The assumptions in Theorem~\ref{thm:RandTreeBalMainLemma} will be satisfied\footnote{For our purpose of proving Theorem~\ref{thm:pbone}, we will be taking $h = \log \log n/C$ for sufficiently large constant $C$ and $m = n^{\frac{1}{h+1}} \gg 100$.} in our reduction from \pbone in Section~\ref{subsec:TreeBaltoRand1DTus}.

\IGNORE{
\begin{figure}[]
\centering
\input{figs/tree}
\caption{An illustration of a  random trajectory $\calP_t$ on a tree with $m=2$ and $h=4$.} 
\label{fig:tree}
\end{figure}
}

\begin{theorem}
\label{thm:RandTreeBalMainLemma}
For \pbtreebalance with $n$ arrivals on a complete $m$-ary tree of height $h \leq \log \log n/C$ for sufficiently large constant $C$ and  fan-out $m \geq 100$, 
there is an efficient algorithm that satisfies w.h.p.
\[
\disc(\T)  ~=~  O( \log^2(n)).
\]
\end{theorem}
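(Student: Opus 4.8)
## Proof Plan for Theorem~\ref{thm:RandTreeBalMainLemma}

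The plan is to run the natural \emph{greedy} algorithm against an exponential potential and reduce the whole analysis to a \emph{Separation Lemma} on trees, proved by induction on the height $h$. Fix $\lambda=\Theta(1/\log n)$ (small enough for the Taylor estimates, but large enough that a polynomially bounded potential forces $O(\log^2 n)$ imbalance), and after $t$ arrivals set
\[
\Phi(t)\;:=\;\sum_{v\in\T}\cosh\bigl(\lambda\, d_v(t)\bigr).
\]
The algorithm picks $\col(t)\in\{-1,+1\}$ minimizing $\Phi(t)$. Only the nodes on the picked path $\calP_t$ change, so using $\cosh(\lambda(d+\col))=\cosh\lambda\cosh(\lambda d)+\col\sinh\lambda\sinh(\lambda d)$ the greedy choice is $\col(t)=-\sign(L_t)$ and it gives, \emph{exactly},
\[
\Delta\Phi(t)\;=\;(\cosh\lambda-1)\,Q_t\;-\;\sinh\lambda\,\bigl|L_t\bigr|,\qquad
Q_t:=\!\!\sum_{v\in\calP_t}\!\!\cosh(\lambda d_v(t-1)),\quad L_t:=\!\!\sum_{v\in\calP_t}\!\!\sinh(\lambda d_v(t-1)).
\]
Since a uniformly random root--leaf path meets a depth-$i$ node with probability $m^{-i}$, conditioned on the past $\F_{t-1}$ we have $\E[Q_t]=\widetilde\Phi(t-1):=\sum_v m^{-\mathrm{depth}(v)}\cosh(\lambda d_v(t-1))$, and $\E[|L_t|]$ is precisely the quantity the Separation Lemma controls.

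The Separation Lemma (Lemma~\ref{lem:ScalarSeparationLemma}) asserts that for imbalances obeying the consistency relation $d_v=\sum_{u\in\child(v)}d_u$ (automatic, since children partition their parent's arrivals) and a uniformly random root--leaf path $\calP$,
\[
\E_{\calP}\!\left[\Bigl|\sum_{v\in\calP}\sinh(\lambda d_v)\Bigr|\right]\;\ge\;(1-\delta)^h\,\E_{\calP}\!\left[\sum_{v\in\calP}\bigl|\sinh(\lambda d_v)\bigr|\right]\;-\;O(h)
\]
for a fixed constant $\delta\in(0,1)$. Since $|\sinh x|\ge\cosh x-1$, the right side is at least $(1-\delta)^h\bigl(\widetilde\Phi(t-1)-O(h)\bigr)-O(h)$. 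For $h\le\log\log n/C$ with $C$ large we have $(1-\delta)^h\ge 2\tanh(\lambda/2)>(\cosh\lambda-1)/\sinh\lambda$, and combining the two displays shows that whenever $\Phi(t-1)$ is polynomially large (hence so is $\widetilde\Phi(t-1)$, since every weight is at least $m^{-h}\ge 1/n$ in our parameter regime) one gets $\E[\Delta\Phi(t)\mid\F_{t-1}]\le -\Omega\bigl(\lambda^2\,\widetilde\Phi(t-1)\bigr)<0$: the potential has a drift towards $0$. A single arrival changes $\Phi$ by at most a multiplicative $1+O(\lambda h)$ factor, so combining bounded increments with this strong negative drift above the threshold, a standard supermartingale/concentration argument (in the style of~\cite{BansalSpencer-arXiv19}) yields $\Phi(t)\le n^{20}$ for all $t\le n$ with probability $1-1/\poly(n)$. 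On that event $\cosh(\lambda d_v(t))\le n^{20}$ for every $v,t$, hence $|d_v(t)|\le 21\ln n/\lambda=O(\log^2 n)$, i.e.\ $\disc(\T)=O(\log^2 n)$; efficiency is immediate, since each arrival only computes $L_t$ in $O(h)$ time along its path.

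The heart of the proof, and the step I expect to be the main obstacle, is the Separation Lemma, which I would prove by induction on $h$. Write a random path as the root $r$ followed by a uniformly random path $\calP'$ in a uniformly random child subtree $\T_s$, so $\sum_{v\in\calP}\sinh(\lambda d_v)=\sinh(\lambda d_r)+S_s$ with $S_s:=\sum_{v\in\calP'}\sinh(\lambda d_v)$; the task is to compare $\E[\,|\sinh(\lambda d_r)+S_s|\,]$ with $|\sinh(\lambda d_r)|+\E[\,|S_s|\,]$. Call a child $s$ \emph{safe} if adjoining $\sinh(\lambda d_r)$ causes essentially no cancellation (roughly, $|\sinh(\lambda d_s)+\sinh(\lambda d_r)|$ separates into $|\sinh(\lambda d_s)|+|\sinh(\lambda d_r)|$), and \emph{dangerous} otherwise. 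Using $d_r=\sum_s d_s$ together with $m\ge 100$ one argues that the safe subtrees carry a constant fraction of the total $\cosh$-mass: when they are heavy, apply the induction hypothesis directly to them; when they are light, first perturb the imbalances of the dangerous subtrees at a small additive cost so that the induction hypothesis still applies there. Each level costs the factor $(1-\delta)$, which is exactly why $h$ must stay $O(\log\log n)$ — precisely the range in which $(1-\delta)^h$ still exceeds $\tanh(\lambda/2)=\Theta(1/\log n)$, the quantity needed for the drift to be negative.
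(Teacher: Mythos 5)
Your plan is correct and follows essentially the same route as the paper: exponential $\cosh$ potential with $\lambda=\Theta(1/\log n)$, greedy color minimizing $\Delta\Phi$, a Separation Lemma proved by induction on $h$ via the safe/dangerous-subtree dichotomy driven by $d_r=\sum_s d_s$ and $m\geq 100$, negative drift once $\Phi$ is polynomially large, and a stopped-process/Markov-style concentration argument. Two small caveats: the additive error you would actually obtain in the Separation Lemma is $O(h\cdot n^2)$ (not $O(h)$), since the ``entering values'' at each level must have magnitude $\gtrsim \log n/\lambda$ to stay outside dangerous sets and hence contribute $\mathrm{poly}(n)$ when perturbed---this is harmless because $\E[Q]$ is already polynomially large in the regime where you use the lemma; and the Separation Lemma as you state it (with no ``entering value'' parameter) is not directly provable by your sketched induction, since pushing the root's $\sinh(\lambda d_r)$ into the subtree requires strengthening the induction hypothesis to allow an arbitrary external $\sinh(\lambda x)$ term and a replaced root imbalance, exactly as the paper does in Lemma~\ref{lem:ScalarSeparationLemma}.
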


The remaining section focuses on the proof of Theorem~\ref{thm:RandTreeBalMainLemma}.  

\IGNORE{
\begin{theorem}
\label{thm:RandTreeBalMainLemma}
For \pbtreebalance with $n$ arrivals on a complete $m$-ary tree of height $h = O(\log \log n)$,  fan-out $m = \Omega(\log^3 n)$, and the number of nodes $(m^{h+1}-1)/(m-1) < n$, 
there is an efficient algorithm that satisfies w.h.p.
\[
\disc(\T)  =  O( \log^2(n)).
\]
\end{theorem}
}

\subsection{Proof of Theorem~\ref{thm:RandTreeBalMainLemma} using a Potential Based Algorithm} 
\label{subsec:AlgTreeBal}
To describe our algorithm we need a potential function $\Phi(t)$ for $t \in [n]$ and $\lambda \overset{\Def}{=} \frac{1}{\log n}$:
\begin{equation}
\label{eq:potential}
\Phi(t) ~ \overset{\Def}{=} ~ \sum_{v \in \T} \cosh(\lambda d_v).
\end{equation}
The algorithm simply assigns $\col(t) \in \{-1,+1\}$ to minimize the  increase in the potential.

Our plan to proving Theorem~\ref{thm:RandTreeBalMainLemma} is to show that the final potential value $\Phi(n)$ is polynomially bounded. This suffices because it guarantees every $d_v$ is at most $O( \log^2(n))$, as otherwise the potential becomes super-polynomial.


To analyze the increase in  potential at the $t$th arrival, we use the standard step of bounding $\Delta \Phi(t)$ by the Taylor expansion (e.g., see~\cite{BansalSpencer-arXiv19}). Since $\cosh'(x) = \sinh(x)$ and $\sinh'(x) = \cosh(x)$, we get
\begin{align*}
\Delta \Phi(t) & = \sum_{v \in \calP_t}\Big( \lambda \sinh(\lambda d_v) \cdot  \col(t) + \frac{\lambda^2}{2!} \cosh(\lambda d_v) \cdot  \col(t)^2 + \frac{\lambda^3}{3!} \sinh(\lambda d_v) \cdot  \col(t)^3 + \ldots \Big) \\
 &\leq \lambda \Big(\sum_{v \in \calP_t} \sinh(\lambda d_v) \Big)\cdot  \col(t) + \lambda^2 \Big( \sum_{v \in \calP_t} \cosh(\lambda d_v) \Big),
\end{align*}
where we use $|\sinh(x)| \leq \cosh(x)$ for all $x$, magnitude $|\col(t)| = 1$, and that $\lambda=o(1)$. By defining
\[ L \overset{\Def}{=} \sum_{v \in \calP_t} \sinh(\lambda d_v) \quad \text{ and } \quad Q \overset{\Def}{=} \sum_{v \in \calP_t} \cosh(\lambda d_v),
\]
we  can rewrite the last inequality as
\begin{align*}
\Delta \Phi(t) \quad \leq  \quad \lambda L \cdot \col(t) + \lambda^2 Q \qquad \leq \qquad -\lambda |L| + \lambda^2 Q.
\end{align*}
Here, the second inequality is because we can assign $\col(t)\in \{-1,+1\}$ such that $L\cdot \col(t) = -|L|$, and since our algorithm picks $\col(t)$ to minimize $\Delta \Phi(t)$, the  increase in potential of our algorithm could only be smaller. Below we will show that $\Phi(t)$ can never be very large because   otherwise  $\E[\lambda |L|] > \E[\lambda^2 Q]$, so we have $\E[\Delta \Phi(t)] <0$.
To state this formally, we need some  notation. Let $\beta = 100$ be a constant. Define $f(h)$ to be an increasing function of $h$ with $f(0) = 4$ and $f(h) = 200 \beta f(h-1)$ for $h \geq 1$, i.e., $f(h) = 4 \cdot (200 \beta)^h$.

\IGNORE{\begin{itemize}[topsep=0mm,itemsep=0mm]
    \item Let $\beta = 100$ be a constant.
    \item Let $f(h)$ be some increasing function of $h$ with $f(0) = 4$ and $f(h) = 200 \beta f(h-1)$ for $h \geq 1$, i.e., $f(h) = 4 \cdot (200 \beta)^h$.
\end{itemize}}

\begin{lemma}
\label{lem:RandTreeBal}
Consider an instance of \pbone that satisfies the assumptions in Theorem~\ref{thm:RandTreeBalMainLemma}. If $n^{10} \leq \Phi(t) \leq n^{20}$, then 
\[
\E[|L|] ~ \geq ~ \frac{1}{2 \cdot f(h)} \cdot \E[Q].
\]
Since $h \leq \log \log n/C$ for sufficiently large constant $C$, this implies that
\[
\E[\Delta \Phi(t)] \quad \leq \quad -\frac{\lambda}{2 \cdot f(h)} \E[Q] + \lambda^2 \E[Q] \quad \leq \quad 0.
\]
\end{lemma}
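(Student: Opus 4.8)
The plan is to establish the first inequality $\E[|L|] \geq \frac{1}{2f(h)}\E[Q]$ via the Separation Lemma (Lemma~\ref{lem:ScalarSeparationLemma}), and then the second (chain of) inequalities follows by pure arithmetic from the Taylor bound $\Delta\Phi(t)\leq -\lambda|L|+\lambda^2 Q$ already derived, taking expectations over the random root-leaf path $\calP_t$ and using that $\lambda = 1/\log n$ together with $f(h)=4(200\beta)^h$ is $\poly\log(n)$ when $h\leq \log\log n/C$ for $C$ large (so that $\frac{\lambda}{2f(h)}\geq \lambda^2$, i.e. $f(h)\leq \frac{\log n}{2}$). Let me spell out the two pieces.

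\textbf{Reducing to the Separation Lemma.} Recall $L=\sum_{v\in\calP_t}\sinh(\lambda d_v)$ and $Q=\sum_{v\in\calP_t}\cosh(\lambda d_v)$, where $\calP_t$ is a uniformly random root-leaf path, so $\mathbf{d}=(d_v)_{v\in\calP_t}$ is a random imbalance vector of length $h+1$. The key identity driving the induction is that for each internal node, its imbalance is the sum of the imbalances of its children (the children partition the arrivals passing through the parent); this is exactly the structural hypothesis the Separation Lemma exploits. The Separation Lemma should give us, for this randomly generated $\mathbf{d}$,
\[
\E\Big[\,\big|\textstyle\sum_{v\in\calP_t}\sinh(\lambda d_v)\big|\,\Big] \;\geq\; \frac{1}{f(h)}\,\E\Big[\,\textstyle\sum_{v\in\calP_t}|\sinh(\lambda d_v)|\,\Big].
\]
Then I would use the elementary pointwise bound $|\sinh(x)|=\sqrt{\cosh^2(x)-1}\geq \cosh(x)-1$, hence $\sum_v|\sinh(\lambda d_v)|\geq Q-(h+1)$. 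Combined with the assumption $\Phi(t)\geq n^{10}$: since $\Phi(t)=\sum_{v\in\T}\cosh(\lambda d_v)$ is large, an averaging argument over the (at most $n$) root-leaf paths shows $\E[Q]$ is at least $\Phi(t)/(\text{number of leaves})$ times the path length, which is $\gg h+1$; more simply, $\E[Q]\geq \cosh(\lambda\cdot\max_v|d_v|)\geq \Phi(t)/n \geq n^{9}$, dwarfing the additive $h+1\leq \log\log n$ correction. So $\E[Q]-(h+1)\geq \frac12\E[Q]$, giving $\E[|L|]\geq \frac{1}{2f(h)}\E[Q]$ as claimed. (The precise constant is why the statement carries the factor $2$.)

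\textbf{The arithmetic step.} Taking expectation over $\calP_t$ in $\Delta\Phi(t)\leq -\lambda|L|+\lambda^2 Q$ and substituting,
\[
\E[\Delta\Phi(t)] \;\leq\; -\lambda\,\E[|L|] + \lambda^2\,\E[Q] \;\leq\; -\frac{\lambda}{2f(h)}\E[Q] + \lambda^2\E[Q] \;=\; \Big(\lambda^2 - \frac{\lambda}{2f(h)}\Big)\E[Q].
\]
Since $f(h)=4(200\beta)^h$ and $h\leq \log\log n/C$, for $C$ a sufficiently large constant we have $2f(h)\leq 1/\lambda=\log n$, so $\lambda^2 \leq \frac{\lambda}{2f(h)}$ and the bracket is $\leq 0$; as $\E[Q]\geq 0$ this yields $\E[\Delta\Phi(t)]\leq 0$.

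\textbf{Main obstacle.} The genuinely hard part is the Separation Lemma itself — proving that the magnitude of $\sum\sinh(\lambda d_v)$ does not collapse due to sign cancellations along a typical root-leaf path. Everything above is bookkeeping once that is in hand; the induction on height $h$, the safe/dangerous subtree dichotomy, and the modification argument for the "light" case all live inside Lemma~\ref{lem:ScalarSeparationLemma}, not here. Within the present lemma, the only subtlety to be careful about is making the averaging argument that lower-bounds $\E[Q]$ by a polynomial in $n$ rigorous (so the $h+1$ additive loss is negligible and the factor $2$ suffices), which is where the hypothesis $\Phi(t)\geq n^{10}$ is used; the upper bound $\Phi(t)\leq n^{20}$ is not needed for this lemma and is presumably carried along for the surrounding supermartingale/union-bound argument that concludes Theorem~\ref{thm:RandTreeBalMainLemma}.
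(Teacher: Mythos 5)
Your overall plan — invoke the Separation Lemma and then clean up with bookkeeping, followed by the arithmetic step — is the same as the paper's, and your arithmetic step is correct. But you have mis-stated what Lemma~\ref{lem:ScalarSeparationLemma} actually gives, and this creates a real gap in the reduction. The Separation Lemma does not yield the clean inequality
$\E[|L|] \geq \frac{1}{f(h)}\E\bigl[\sum_v|\sinh(\lambda d_v)|\bigr]$.
Its statement is necessarily relative to an auxiliary \emph{entering value} $x$ (needed for the induction on height): for any $x$ with $|x|\geq\frac{\log n}{\lambda}+\frac{h\log 10}{\lambda}$ and $d_r\notin\dang(x)$, it gives
$\E[|L+\sinh(\lambda x)|]\geq\frac{1}{f(h)}\E[Q+\cosh(\lambda x)]-hn^2$.
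To use this here one must (a) show that a valid $x$ exists in the window $\bigl[\frac{\log n}{\lambda}+\frac{h\log 10}{\lambda},\,\frac{2\log n}{\lambda}\bigr]$, which relies on the dangerous set being narrow compared to $\log n/\lambda$; (b) observe that with such an $x$ one has $|\sinh(\lambda x)|\leq n^2 \leq \E[Q]/n^7$; and (c) then peel off $|\sinh(\lambda x)|$ and $hn^2$ from the Separation Lemma bound, absorbing both into the $\frac{1}{2f(h)}\E[Q]$ slack. Your intermediate step using $|\sinh(y)|\geq\cosh(y)-1$ is not needed because the Separation Lemma's right-hand side is already phrased in terms of $Q$. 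Your remark that the upper bound $\Phi(t)\leq n^{20}$ is ``not needed for this lemma'' is also wrong: it is a hypothesis of the Separation Lemma and is therefore used precisely here when applying it. None of these are cosmetic — the auxiliary entering value $x$ and the $-hn^2$ error term are what make the Separation Lemma provable by induction, and the present lemma's job is exactly to choose $x$ carefully so that the extra $\sinh(\lambda x)$ and $hn^2$ terms are dominated by $\E[Q]\geq n^9$; skipping that is a genuine missing step.
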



An important consequence of Lemma~\ref{lem:RandTreeBal} is that w.h.p. the potential never reaches a value $\geq n^{20}$ as there is negative drift towards zero  for $ \Phi(t) \in [n^{10}, n^{20}]$. This immediately implies Theorem~\ref{thm:RandTreeBalMainLemma}.

\begin{proofof}{Theorem~\ref{thm:RandTreeBalMainLemma}} 
Initially $\Phi(0) = n$. By Lemma~\ref{lem:RandTreeBal}, we have the following bounds on the change in $\Phi(t)$:  (1) When $\Phi(t) < n^{10}$, we have $\E[\Delta \Phi(t)] \leq \lambda^2 \Phi(t) < n^{10}$; (2) When $n^{10} \leq \Phi(t) \leq n^{20}$, we have $\E[\Delta  \Phi(t)] \leq 0$; (3) When $\Phi(t) > n^{20}$, change in potential $\Delta \Phi(t)$ can be arbitrary.

To handle Case (3), we define the following stochastic process $\widetilde{\Phi}(t)$: suppose $\widetilde{\Phi}(t)$ stays the same as $\Phi(t)$ before $\Phi(t)$ becomes larger than $n^{20}$. 
After the first $t$ where $\Phi(t) >n^{20}$, we set $\widetilde{\Phi}(t') = \Phi(t)$ for every $t' \geq t$, i.e., $\widetilde{\Phi}(t)$ stays fixed after time $t$. 
This means that whenever $\widetilde{\Phi}(t)$ exceeds $n^{20}$, we have $\Delta \widetilde{\Phi}(t') = 0$ for any $t' \geq t$.
Therefore, we always have $\E[\Delta \widetilde{\Phi}(t)]  \leq  n^{10}$,
which gives $\E[\widetilde{\Phi}(t)] \leq n^{11}$ for any $t \in [n]$.
Using Markov's inequality followed by a union bound, we have
\[
\p\Big[\exists t \in [n], \Phi(t) > n^{20} \Big] \quad = \quad \p\Big[\exists t \in [n], \widetilde{\Phi}(t) > n^{20}\Big] \quad \leq \quad \frac{1}{n^8}.
\]
Notice when $\Phi(t) \leq n^{20}$, the discrepancy of the tree $d(\T) = O(\log^2 (n)$,
which finishes the proof. 
\end{proofof}



\subsection{Proof of Lemma~\ref{lem:RandTreeBal} using a Separation Lemma}
\label{lem:LExceedsQ}

We define the following notion of a dangerous set for some value $x$. 
Essentially for any value $y$ in the dangerous set of $x$, hyperbolic sine $\sinh(\lambda y)$ will cancel out a significant fraction of $\sinh(\lambda x)$. 

\begin{defn}[Dangerous Set]
\label{defn:DangerousSet}
For any $x \in \mathbb{R}$ s.t. $|x| \geq \frac{\log n}{\lambda}$, we define the dangerous set of $x$ to be 
\[
\dang(x) ~ \overset{\Def}{=} ~ \left[-x - \frac{\log 10}{\lambda}, -x + \frac{\log 10}{\lambda} \right].
\]
\end{defn}

One immediate property is that $y \in \dang(x)$ is equivalent to $x \in \dang(y)$ if both $|x| \geq \frac{\log n}{\lambda}$ and $|y| \geq \frac{\log n}{\lambda}$. The following two facts follow immediately from Definition~\ref{defn:DangerousSet} and the properties of hyperbolic functions. We give their proofs in Appendix~\ref{sec:MissProofSec2}.
\begin{fact}
\label{claim:DangerousSetRatioBound}
For any $x \in \mathbb{R}$ s.t. $|x| \geq \frac{\log n}{\lambda}$ and $y \in \dang(x)$, we have
\[
\max \left\{ \frac{\cosh(\lambda x)}{\cosh(\lambda y)} ~,~ \frac{\cosh(\lambda y)}{\cosh(\lambda x)} \right\} ~ \leq ~ 11 \qquad \text{and} \qquad \max \left\{ \frac{|\sinh(\lambda x)|}{|\sinh(\lambda y)|} ~,~ \frac{|\sinh(\lambda y)|}{|\sinh(\lambda x)|} \right\} ~ \leq ~ 11.
\]
\end{fact}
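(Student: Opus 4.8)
The plan is to establish the two ratio bounds directly from elementary properties of $\cosh$ and $\sinh$, exploiting the fact that $y \in \dang(x)$ forces $\lambda y$ to be within $\log 10$ of $-\lambda x$. First I would set $u = \lambda x$ and $w = \lambda y$, so that $|u| \geq \log n$ and $|w + u| \leq \log 10$, i.e.\ $w = -u + \delta$ with $|\delta| \leq \log 10$. Since $\cosh$ is even, $\cosh(w) = \cosh(-u+\delta) = \cosh(u - \delta)$, so everything reduces to comparing $\cosh(u)$ with $\cosh(u-\delta)$ for $|\delta| \leq \log 10$ and $|u|$ large.

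For the $\cosh$ bound, I would use that for large $|u|$, $\cosh(u) = \tfrac12(e^{|u|} + e^{-|u|})$ is sandwiched between $\tfrac12 e^{|u|}$ and $e^{|u|}$. Writing the ratio $\cosh(u)/\cosh(u-\delta)$ and bounding numerator above by $e^{|u|}$ and denominator below by $\tfrac12 e^{|u-\delta|} \geq \tfrac12 e^{|u| - |\delta|}$, we get the ratio is at most $2 e^{|\delta|} \leq 2 \cdot 10 = 20$ — actually I should be slightly more careful to land on $11$. A cleaner route: $\cosh(a)/\cosh(b) \leq e^{|a-b|}$ whenever this makes sense is false in general but is true up to a small factor; instead I would directly note $\cosh(u-\delta) \geq \tfrac12 e^{|u|-|\delta|} \geq \tfrac{1}{20} e^{|u|} \geq \tfrac{1}{20}\cosh(u) \cdot \tfrac{1}{1}$... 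Let me instead use the sharper two-sided estimate $\tfrac12 e^{|t|} \leq \cosh(t) \leq \tfrac12 e^{|t|}(1 + e^{-2|t|})$ and, since $|u|, |u-\delta| \geq \log n - \log 10$ is large, the correction factors are $1 + o(1) \leq 1.1$. Then $\cosh(u)/\cosh(u-\delta) \leq 1.1 \cdot e^{|u| - |u-\delta|} \leq 1.1 \cdot e^{|\delta|} \leq 1.1 \cdot 10 = 11$. By symmetry (swapping $x$ and $y$, using $x \in \dang(y)$) the reciprocal ratio is bounded the same way, giving the first inequality.

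For the $\sinh$ bound, I would use the identity $|\sinh(t)| = \sqrt{\cosh^2(t) - 1}$, so that $\tfrac{|\sinh(u)|}{|\sinh(w)|} = \sqrt{\tfrac{\cosh^2(u)-1}{\cosh^2(w)-1}}$. Since $|u|, |w| \geq \log n - \log 10$ are large, $\cosh^2(t) - 1 = \cosh^2(t)(1 - \operatorname{sech}^2(t))$ with $1 - \operatorname{sech}^2(t) \in [1/2, 1]$ comfortably (in fact $1 - o(1)$), so $|\sinh(t)|$ and $\cosh(t)$ agree up to a factor bounded by $\sqrt{2}$, indeed $1+o(1)$. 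Combining with the already-proved $\cosh$ ratio bound of $11$ and absorbing the $(1+o(1))$ slack into the constant — or being slightly more careful and tracking that the true $\cosh$-ratio is at most $10(1+o(1))$, so the $\sinh$-ratio is at most $10(1+o(1))^2 \leq 11$ — yields the second inequality.

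I expect the only real subtlety to be bookkeeping the small multiplicative slack: the dangerous set half-width is chosen as $\tfrac{\log 10}{\lambda}$ precisely so that the dominant factor is $10$, and the assumption $|x| \geq \tfrac{\log n}{\lambda}$ (so $\lambda|x| \geq \log n \to \infty$) is what makes all the $\operatorname{sech}$ and $e^{-2|u|}$ corrections negligible enough to push $10$ up to at most $11$. No deep idea is needed; the main obstacle is just being disciplined about which direction each inequality ($\tfrac12 e^{|t|} \leq \cosh t \leq e^{|t|}$, etc.) is applied so the constants come out below $11$ on both ratios.
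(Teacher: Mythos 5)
Your proof is correct and is essentially the paper's argument: both proceed by writing $\cosh$ via its exponential representation, using the half-width $\log 10$ of the dangerous set to produce the dominant factor of $10$, and using $\lambda|x| \geq \log n$ to make the $e^{-2\lambda|x|}$-type correction terms negligible enough to keep the bound at $11$. The one small organizational difference is that you derive the $\sinh$ ratio from the $\cosh$ ratio via $|\sinh(t)| = \cosh(t)\tanh|t|$ with $\tanh|t| = 1 - o(1)$, whereas the paper simply repeats the exponential calculation for $\sinh$; your route is slightly cleaner, but the underlying estimates are the same.
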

On the other hand, if $y \notin \dang(x)$, then $\sinh(\lambda y)$ will cancel only a small portion of $\sinh(\lambda x)$.
\begin{fact}
\label{claim:DangerousSetNoCancel}
For any $x \in \mathbb{R}$ s.t. $|x| \geq \frac{\log n}{\lambda}$ and $y \notin \dang(x)$, we have
\IGNORE{
\[
\max \big\{|\sinh(\lambda x)|~,~ |\sinh(\lambda y)| \big\}  ~ \geq ~  9 \cdot \min\big\{|\sinh(\lambda x)|~,~ |\sinh(\lambda y)|\big\}.
\]
This implies that
}
\[
|\sinh(\lambda x)+\sinh(\lambda y)| ~ \geq ~ \frac{8}{9} \cdot \max \big\{|\sinh(\lambda x)|~,~ |\sinh(\lambda y)|\big\}.
\]
\end{fact}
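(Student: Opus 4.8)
The plan is to reduce everything to a one‑variable estimate via the substitution $a \overset{\Def}{=} \lambda x$ and $b \overset{\Def}{=} \lambda y$. Then the hypotheses read $|a| \ge \log n$ and, since $y \notin \dang(x)$ unpacks to $|y+x| > \tfrac{\log 10}{\lambda}$, also $|a+b| > \log 10$; the claim becomes $|\sinh a + \sinh b| \ge \tfrac{8}{9}\max\{|\sinh a|,|\sinh b|\}$. Because $\sinh$ is odd and strictly increasing, $\max\{|\sinh a|,|\sinh b|\} = |\sinh a|$ once we assume without loss of generality that $|a| \ge |b|$; and replacing $(a,b)$ by $(-a,-b)$ if needed preserves all hypotheses and all relevant magnitudes, so I may also assume $a \ge 0$, hence $a \ge \log n \ge \log 10$ (for $n \ge 10$).

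Next I would split on the sign of $b$. If $b \ge 0$, then $\sinh a$ and $\sinh b$ are both nonnegative, so $|\sinh a + \sinh b| = |\sinh a| + |\sinh b| \ge |\sinh a|$ and we are done with room to spare. If $b < 0$, write $\sinh a + \sinh b = \sinh a - \sinh|b| \ge 0$ (using $a \ge |b|$), so it suffices to show $\sinh|b| \le \tfrac19 \sinh a$. From $a \ge |b|$ together with $|a+b| > \log 10$ we get $a - |b| = a + b > \log 10$, i.e.\ $a > |b| + \log 10$.

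The final step is the numerical estimate: since $a \ge \log 10$ we have $e^a - e^{-a} \ge \tfrac{99}{100} e^a$, and therefore
\[
\frac{\sinh|b|}{\sinh a} \;\le\; \frac{e^{|b|}}{e^a - e^{-a}} \;\le\; \frac{100}{99}\, e^{|b| - a} \;<\; \frac{100}{99}\cdot\frac{1}{10} \;=\; \frac{10}{99} \;<\; \frac19 ,
\]
which yields $\sinh a + \sinh b = \sinh a - \sinh|b| \ge \tfrac89 \sinh a = \tfrac89 |\sinh a|$, as required. I do not anticipate any genuine obstacle; the only care needed is bookkeeping the two normalization reductions and confirming that the constants close — the $\tfrac{99}{100}$ slack coming from $e^a - e^{-a} \ge \tfrac{99}{100}e^a$ is comfortably beaten by the $e^{-\log 10} = \tfrac{1}{10}$ gain from $a > |b| + \log 10$, so the stated bound $\tfrac89$ (indeed even a slightly better constant) goes through.
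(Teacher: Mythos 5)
Your proof is correct and uses essentially the same approach as the paper's: split on the sign of $y$ (or $b$), observe that the nonnegative case is trivial, and in the other case use the gap $|a+b| > \log 10$ together with $a \ge \log n$ to bound $\sinh|b|/\sinh a$ by a constant below $1/9$. The only real difference is cosmetic — the paper fixes $x > 0$ and then treats three cases for $y$, whereas you first reduce by symmetry to $|a| \ge |b|$ and $a \ge 0$ (a WLOG that is valid here since the hypothesis $|x| \ge \log n/\lambda$ is automatically inherited by whichever of $a,b$ has larger magnitude, and the other hypothesis $|a+b| > \log 10$ is symmetric), which collapses the paper's Cases 2 and 3 into one.
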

To prove Lemma~\ref{lem:RandTreeBal}, we need the following lemma which forms the heart of our proof.

\IGNORE{\begin{lemma}[Separation Lemma]
\label{lem:ScalarSeparationLemma}
Consider any $m$-ary tree with height $h$ and fan-out $m \geq 100$. 
If $\Phi(t)\leq n^{20}$, then for any $x, \widetilde{d}_r \in \mathbb{R}$ s.t. $|x| \geq \frac{\log n}{\lambda} + \frac{h \log 10}{\lambda} $ and $\widetilde{d}_r \notin \dang(x)$, we have 
\begin{align}
\label{eqn:ScalarSeparationLemma}
\E \big[|\widetilde{L} + \sinh( \lambda x)| \big] ~ \geq ~ \frac{1}{f(h)} \cdot \E \big[\widetilde{Q} + \cosh(\lambda x)\big] - h n^2,
\end{align}
where $\widetilde{L}$ (resp. $\widetilde{Q}$) is obtained from $L$ (resp. $Q$) by replacing $d_r$ with $\widetilde{d_r}$, i.e.
\[
\widetilde{L} ~\overset{\Def}{=}~ \sinh(\lambda \widetilde{d_r}) + \sum_{v \in \calP_t \backslash \{r\}} \sinh(\lambda d_v) \qquad \text{and} \qquad  \widetilde{Q} ~\overset{\Def}{=} ~\cosh(\lambda \widetilde{d_r}) + \sum_{v \in \calP_t \backslash \{r\}} \cosh(\lambda d_v).
\]
\end{lemma}}

\begin{lemma}[Separation Lemma]
\label{lem:ScalarSeparationLemma}
Consider any $m$-ary tree with height $h$ and fan-out $m \geq 100$. For any $\widetilde{d}_r \in \mathbb{R}$, let  $\widetilde{L}$ (resp. $\widetilde{Q}$) be obtained from $L$ (resp. $Q$) by replacing $d_r$ with $\widetilde{d_r}$, i.e.
\[
\widetilde{L} ~\overset{\Def}{=}~ \sinh(\lambda \widetilde{d_r}) + \sum_{v \in \calP_t \backslash \{r\}} \sinh(\lambda d_v) \qquad \text{and} \qquad  \widetilde{Q} ~\overset{\Def}{=} ~\cosh(\lambda \widetilde{d_r}) + \sum_{v \in \calP_t \backslash \{r\}} \cosh(\lambda d_v).
\]
If  $\Phi(t)\leq n^{20}$, then for any $x$ such that $|x| \geq \frac{\log n}{\lambda} + \frac{h \log 10}{\lambda} $ and $\widetilde{d}_r \notin \dang(x)$,  we have 
\begin{align}
\label{eqn:ScalarSeparationLemma}
\E \big[|\widetilde{L} + \sinh( \lambda x)| \big] ~ \geq ~ \frac{1}{f(h)} \cdot \E \big[\widetilde{Q} + \cosh(\lambda x)\big] - h n^2,
\end{align}
\end{lemma}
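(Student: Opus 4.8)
The plan is to prove the Separation Lemma by induction on the height $h$ of the tree, following the outline sketched in the introduction. In the base case $h=0$ the path $\calP_t$ is a single leaf $= r$, so $\widetilde{L} = \sinh(\lambda \widetilde{d}_r)$ and $\widetilde{Q} = \cosh(\lambda \widetilde{d}_r)$, and since $|x|$ is large while $\widetilde{d}_r \notin \dang(x)$, Fact~\ref{claim:DangerousSetNoCancel} gives $|\widetilde{L} + \sinh(\lambda x)| \geq \tfrac{8}{9}\max\{|\sinh(\lambda \widetilde d_r)|, |\sinh(\lambda x)|\}$, which is comparable to $\widetilde{Q} + \cosh(\lambda x)$ up to the constant $f(0) = 4$ (using $|\sinh(\lambda y)| \geq \cosh(\lambda y) - 1$ and absorbing the additive $1$'s into the $-hn^2$ slack). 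For the inductive step, write $r$ for the root and let $s \in \child(r)$ be the child on the random path $\calP_t$; conditioned on $s$, the remaining path is uniformly random in the subtree $\T_s$ rooted at $s$, which is an $m$-ary tree of height $h-1$. Let $\widetilde{L}_s, \widetilde{Q}_s$ denote the analogues of $\widetilde L, \widetilde Q$ restricted to $\calP_t \cap \T_s$ (so $\widetilde{L} = \sinh(\lambda \widetilde{d}_r) + \widetilde{L}_s$ where now $\widetilde L_s$ already uses the true imbalance $d_s$ at the root of $\T_s$, not a modified one).

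The crux is the classification of the children of $r$ into safe and dangerous subtrees. For a fixed target value $x$ (and the modified root imbalance $\widetilde d_r$), call a child $s$ \emph{safe} if $d_s \notin \dang(x - \widetilde d_r)$ — intuitively, passing through $s$ the path's root contribution $\sinh(\lambda \widetilde d_r)$ does not get cancelled by $\sinh(\lambda d_s)$, so we may absorb $\widetilde d_r$ into the ``$x$'' slot and apply the induction hypothesis to $\T_s$ with target $x' = x - \widetilde d_r$ (one checks $|x'|$ is still large enough — this is exactly why the threshold in the statement is $\tfrac{\log n}{\lambda} + \tfrac{h\log 10}{\lambda}$, losing $\tfrac{\log 10}{\lambda}$ of headroom per level). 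Call $s$ \emph{dangerous} otherwise. The key combinatorial fact is that $\sum_{s \in \child(r)} d_s = d_r$ (children partition the arrivals at their parent), together with $\Phi(t) \le n^{20}$ bounding each $|d_s|$ by $O(\log^2 n)$; this forces at most $O(1)$ of the $m \geq 100$ children to be dangerous (all dangerous $d_s$ lie in an interval of length $O(1/\lambda)$ around $-(x-\widetilde d_r)$, but their sum is bounded, so there can be few of them, or their total $\sinh$-mass is controlled). Hence a $(1 - O(1/m))$-fraction of children are safe.

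Now split into two cases according to whether the safe subtrees are ``heavy'' or ``light'' in the quantity $\E[\widetilde Q_s + \cosh(\lambda x)]$. In the \textbf{heavy} case, the safe children already carry a constant fraction of the total $\widetilde{Q}$-mass; restricting the expectation over $\calP_t$ to those children and applying the induction hypothesis to each (with target $x - \widetilde d_r$, replaced-root-value $d_s$, which is safe hence not in $\dang(x-\widetilde d_r)$) gives $\E[|\widetilde L + \sinh(\lambda x)|] \geq \tfrac{1}{f(h-1)} \cdot (\text{const}) \cdot \E[\widetilde Q + \cosh(\lambda x)] - (h-1)n^2$, and the constant factor $\tfrac{1}{200\beta}$ in the recursion $f(h) = 200\beta f(h-1)$ is chosen to cover this loss plus the $\tfrac{8}{9}$ and $11$ factors from Facts~\ref{claim:DangerousSetNoCancel} and~\ref{claim:DangerousSetRatioBound}. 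In the \textbf{light} case the dangerous children dominate $\widetilde Q$, but we cannot directly apply induction to a dangerous $\T_s$ because $\sinh(\lambda \widetilde d_r) + \sinh(\lambda d_s)$ may nearly cancel. The fix, as flagged in the overview, is a modification argument: for each dangerous child $s$ we replace the root value used in the induction hypothesis for $\T_s$ by some $\widetilde d_s$ chosen \emph{outside} $\dang(x - \widetilde d_r)$ — e.g. shift it by $O(1/\lambda)$ — which changes $\widetilde Q_s, \widetilde L_s$ by at most a factor $11$ (Fact~\ref{claim:DangerousSetRatioBound}) plus an additive error absorbed into the $+ n^2$ per level; then induction applies to each modified dangerous subtree and we again collect a constant fraction of the mass. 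Summing the contributions of all children and using $|{\cdot}|$ is subadditive in reverse only on the signed combination handled by the safe/modified-dangerous split, we obtain the bound with constant $\tfrac{1}{f(h)}$ and additive slack $hn^2$.

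The main obstacle I expect is making the light-case modification argument fully rigorous while keeping all the error terms under control — in particular, verifying that shifting the root imbalances of dangerous subtrees (i) keeps them outside the relevant dangerous set, (ii) only distorts $\cosh$ and $|\sinh|$ by the bounded factors of Fact~\ref{claim:DangerousSetRatioBound}, and (iii) that after the modification the signed sum $\sum_s \sinh(\lambda \widetilde d_s) + \sinh(\lambda \widetilde d_r)$ does not itself suffer a new cancellation, since now the $\widetilde d_s$'s are correlated through having been pushed in the same direction. Bookkeeping the per-level additive loss $n^2$ and confirming it telescopes to $hn^2 \ll \E[Q]$ in the regime $\Phi(t) \geq n^{10}$ (needed when this lemma is fed into Lemma~\ref{lem:RandTreeBal}) is routine but must be done carefully. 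A secondary subtlety is that the expectations are over the random path $\calP_t$, so ``restricting to safe children'' means conditioning on the first step of the path lying in a safe subtree, and one must track the $\tfrac{1}{m}$ probabilities correctly so that the heavy/light dichotomy is stated in terms of the path-weighted masses $\E[\widetilde Q_s \mid s \in \calP_t]\cdot \p[s \in \calP_t]$ rather than unweighted sums.
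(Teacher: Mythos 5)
Your overall architecture — induction on height, safe/dangerous subtree classification via the dangerous set, the sum-of-children constraint to bound the dangerous fraction, a heavy/light dichotomy, and a modification trick in the light case — matches the paper's proof. But there are two precise points where your plan would break down as written.

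\textbf{The telescoping target $x'$.} You propose $x' = x - \widetilde d_r$ as the new target to feed to the subtree. This does not make the induction close: the quantity you must bound is
\[
\widetilde L + \sinh(\lambda x) \;=\; \sinh(\lambda \widetilde d_r) + L_s + \sinh(\lambda x),
\]
and to rewrite this as $L_s + \sinh(\lambda x')$ you need $\sinh(\lambda x') = \sinh(\lambda x) + \sinh(\lambda \widetilde d_r)$, which is emphatically \emph{not} achieved by $x' = x - \widetilde d_r$ since $\sinh$ is not additive. Worse, $|x - \widetilde d_r|$ can be arbitrarily small even under the lemma's hypotheses (e.g.\ $\widetilde d_r \approx x$, which is not in $\dang(x)$ since $\dang(x)$ is an interval around $-x$), so the lower bound $|x'| \geq \tfrac{\log n}{\lambda} + \tfrac{(h-1)\log 10}{\lambda}$ you need for the inductive call would fail. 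The paper defines $x'$ implicitly by $\sinh(\lambda x') = \sinh(\lambda x) + \sinh(\lambda \widetilde d_r)$ and then uses Fact~\ref{claim:DangerousSetNoCancel} (together with $\widetilde d_r \notin \dang(x)$) to verify that this $x'$ is still large and that $\cosh(\lambda x')$ captures a constant fraction of $\cosh(\lambda x) + \cosh(\lambda \widetilde d_r)$. The same issue recurs in your light-case fix: a generic ``shift by $O(1/\lambda)$'' of a dangerous $d_s$ does not produce a clean rewriting. The paper instead sets $\sinh(\lambda \widetilde d_s) = \sinh(\lambda d_s) + \sinh(\lambda x')$ and introduces a \emph{fresh} small external value $x_s' \notin \dang(\widetilde d_s)$; the light-root assumption is precisely what makes the resulting $\cosh(\lambda x_s')$ overhead affordable.

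\textbf{Two smaller inaccuracies.} First, the constraint $\sum_{s \in \child(r)} d_s = d_r$ with $\Phi(t) \leq n^{20}$ gives only $|D_{h-1}| \leq 0.99m$ (at least a $1\%$ fraction of children safe), not ``$O(1)$ dangerous children'' — the dangerous $d_s$'s are around $-x'$ which is well within the allowed range $O(\log^2 n)$, so there is no upper bound on their count beyond the averaging argument. Fortunately the paper only needs this weaker bound. Second, the heavy/light dichotomy actually needs \emph{three} cases: heavy root (where $\cosh(\lambda x')$ itself carries a $\tfrac{1}{\beta}$-fraction of the mass, and the $+\cosh(\lambda x')$ term over the $\geq 0.01m$ safe children already suffices), heavy safe $Q_s$, and light-root-and-light-safe (where the modification is invoked). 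Your two-case split conflates the first and third. Finally, your worry (iii) about sign correlations of the modified $\widetilde d_s$ across different children is not actually an obstacle, since the expectation is over the random choice of $s$ and the absolute value is taken per $s$; there is no signed sum over $s$ to suffer cancellation.
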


\IGNORE{\begin{lemma}[Separation Lemma]
\label{lem:ScalarSeparationLemma}
Consider any $m$-ary tree with height $h$ and fan-out $m \geq 100$ where the imbalance $d_r$ of the root is replaced by some other value $\widetilde{d}_r$. If $\Phi(t)\leq n^{20}$ before the replacement, then for any $x \in \mathbb{R}$ s.t. $|x| \geq \frac{\log n}{\lambda} + \frac{h \log 10}{\lambda} $ and $\widetilde{d}_r \notin \dang(x)$, we have 
\begin{align}
\label{eqn:ScalarSeparationLemma}
\E \big[|\widetilde{L} + \sinh( \lambda x)| \big] ~ \geq ~ \frac{1}{f(h)} \cdot \E \big[\widetilde{Q} + \cosh(\lambda x)\big] - h n^2,
\end{align}
where $\widetilde{L}$ (resp. $\widetilde{Q}$) is obtained from $L$ (resp. $Q$) by replacing $d_r$ with $\widetilde{d_r}$, i.e.
\[
\widetilde{L} ~\overset{\Def}{=}~ \sinh(\lambda \widetilde{d_r}) + \sum_{v \in \calP_t \backslash \{r\}} \sinh(\lambda d_v) \qquad \text{and} \qquad  \widetilde{Q} ~\overset{\Def}{=} ~\cosh(\lambda \widetilde{d_r}) + \sum_{v \in \calP_t \backslash \{r\}} \cosh(\lambda d_v).
\]
\end{lemma}}

To understand the lemma statement, think of $|\sinh(\cdot)|$ as  the same as $\cosh(\cdot)$ (this only introduces  small errors) and $\widetilde{d}_r = d_r$ (i.e.,  the imbalance of the root is not replaced). Now the quantity inside $\E[\cdot]$ on the LHS of~(\ref{eqn:ScalarSeparationLemma}) can be written as the magnitude of the summation of several $\sinh(\cdot)$ terms. In general, these $\sinh(\cdot)$ terms might cancel each other and result in the summation being small. However, Lemma~\ref{lem:ScalarSeparationLemma} states that for a randomly sampled root-leaf path $\calP_t$, the magnitude of the summation roughly {\em separates} to become the summation of the magnitudes, i.e., $|\sum_{v \in \calP_t} \sinh(\lambda d_v)| \approx \sum_{v \in \calP_t} |\sinh(\lambda d_v)| \approx Q$. The terms $\sinh(\lambda x)$ and  $hn^2$ are due to technical reasons and are not important for understanding the lemma statement.

\begin{remark}
The exponential factor $f(h) = 2^{O(h)}$ in Lemma~\ref{lem:ScalarSeparationLemma} is  tight. In Section~\ref{sec:SepLemmaTightness} we show an example where $\E[|L|] < \exp(-\Omega(h)) \cdot \E[Q] - hn^2$.
\end{remark}

Before proving the Separation Lemma, we first finish the proof of Lemma~\ref{lem:RandTreeBal}.

\begin{proofof}{Lemma~\ref{lem:RandTreeBal} from Separation Lemma}
Since $\Phi(t) \geq n^{10}$ by assumption, we have $\E[Q] \geq n^9$ because the total number of nodes in the tree is at most $n$.
Now we pick any $x \in \left[\frac{\log n}{\lambda} + \frac{h \log 10}{\lambda}, \frac{2 \log n}{\lambda} \right]$ such that the root imbalance $d_r \notin \dang(x)$.  
This can always be done because $h = O(\log \log n)$ implies that $\frac{h \log 10}{\lambda} \ll \frac{\log n}{\lambda}$ and that the width of the dangerous set is $\frac{2 \cdot \log 10}{\lambda} \ll \frac{\log n}{\lambda}$.
Now applying Lemma~\ref{lem:ScalarSeparationLemma} with $\widetilde{d}_r = d_r$ (i.e., keeping the imbalance of the root), we have 
\begin{align}
\label{eqn:InductionHypothesisTreeBal}
\E\left[ |L + \sinh(\lambda x)| \right] ~ \geq ~ \frac{1}{f(h)} \cdot \E[Q + \cosh(\lambda x)] - h n^2.
\end{align}
Since $|x| < \frac{2 \log n}{\lambda}$, we have $|\sinh(\lambda x)| \leq n^2  \leq \E[Q]/n^7$. It follows that
\[
\E[|L|] ~\geq~ \E[|L + \sinh(\lambda x)|] - |\sinh(\lambda x)| ~\overset{(\ref{eqn:InductionHypothesisTreeBal})}{\geq}~ \frac{1}{f(h)} \cdot \E[Q + \cosh(\lambda x)] - hn^2 - |\sinh(\lambda x)| ~\geq~ \frac{1}{2  f(h)} \cdot \E[Q],
\]
which finishes the proof of Lemma~\ref{lem:RandTreeBal}.
\end{proofof}



\subsection{Proof of the Separation Lemma}

Before we formally prove the Separation Lemma, we give an overview of our proof strategy. Our plan is to induct on the height  $h$ of  tree $\T$. We call $x$ the {\em entering value} for convenience. For the induction basis $h=0$ where $\T$ contains a single node $r$, the assumption that $\widetilde{d}_r \notin \dang(x)$ allows us to {\em separate} $|\sinh(\lambda \widetilde{d}_r)+\sinh(\lambda x)|$ to $|\sinh(\lambda \widetilde{d}_r)|+|\sinh(\lambda x)|$. In the induction step, we would like to argue that $\widetilde{d_r} \notin \dang(x)$ allows us to view $\widetilde{d}_r$ and $x$ together as an entering value $x'$ for the subtrees, where $\sinh(\lambda x') = \sinh(\lambda \widetilde{d}_r) + \sinh(\lambda x)$, and apply induction hypothesis. Unfortunately, this only works for the set of  ``safe subtrees'' rooted at $s \in \child(r)$ that satisfy $d_s \notin \dang(x')$ as  required by the induction hypothesis. In general, this condition may not hold for every subtree.

To deal with this problem, we crucially use the fact that the imbalance of all children $s \in \child(r)$ sum up to $d_r$, which we know is small because $\Phi(t)$ is small by assumption. This implies that at least a constant fraction of the children of $r$ are safe. If either the entering value $x'$ or the set of safe subtrees are {``heavy''} (i.e., constitute a large fraction of  $\widetilde{Q}$), then we are done by directly applying induction hypothesis on the set of safe subtrees. 
If on the other hand both the entering value $x'$ and the set of safe subtrees are {``light''},  we have to separate the $\sinh(\cdot)$ terms even when we enter a ``dangerous subtree''  with $d_s \in \dang(x')$. Our approach is to combine the entering value $x'$ and $d_s$ to obtain $\widetilde{d}_s$, and then pick some small $x_s'$ such that $\widetilde{d}_s \notin \dang(x_s')$ as the new entering value. Since $x'$ is light and $x_s'$ is small, such modifications incur only small losses on both the $\widetilde{L}$ and $\widetilde{Q}$ terms. We can now apply the induction hypothesis using $\widetilde{d}_s$ with entering value $x_s'$ to separate the $\sinh(\cdot)$ terms even when we enter a dangerous subtree. This is the reason why the Separation Lemma allows the root imbalance $d_r$ to be replaced be any arbitrary value $\widetilde{d}_r$.


\begin{proofof}{Lemma~\ref{lem:ScalarSeparationLemma}} Since $\Phi(t) \leq n^{20}$, we know that $|d_v|  \leq  \frac{21 \log n}{\lambda}$ for all $v \in \T$ before the replacement of the root imbalance. Note that after replacing the root imbalance $d_r$ by $\widetilde{d}_r$, we might not have $|\widetilde{d}_r| \leq \frac{21 \log n}{\lambda}$. However, we still have
\begin{align}
\label{eqn:SumOfChildrenDiscSmall}
\Big| \sum_{v \in \child(r)} d_v \Big| ~ \leq ~ \frac{21 \log n}{\lambda}.
\end{align}

\medskip
\noindent \textbf{Induction basis for $\mathbf{h =0 }$.} 
In this case the tree $\T$ contains only one node which is the root $r$. Therefore we have $|\widetilde{L} + \sinh(\lambda x)| ~ = ~  |\sinh(\lambda \widetilde{d}_r) + \sinh(\lambda x)|$.
Since $\widetilde{d}_r \notin \dang(x)$, it follows from Fact~\ref{claim:DangerousSetNoCancel} that
\[
|\widetilde{L} + \sinh(\lambda x)| ~ \geq ~ \frac{8}{9} \cdot \frac{1}{2} \cdot \Big(|\sinh(\lambda \widetilde{d}_r)| + |\sinh(\lambda x)| \Big) ~\geq~  \frac{1}{4} \cdot \Big(\cosh(\lambda \widetilde{d}_r) + \cosh(\lambda x)\Big) ~= ~ \frac{1}{4} \cdot \E \big[\widetilde{Q} + \cosh(\lambda x)\big],
\]
where the second inequality uses $|\sinh(\lambda y)| \geq \cosh(\lambda y) - 1$ for any $y \in \mathbb{R}$ and the assumption in Lemma~\ref{lem:ScalarSeparationLemma} that $|x| \geq \frac{\log n}{\lambda} + \frac{h \log 10}{\lambda}$.
This finishes the proof of the induction basis.

\medskip
\noindent \textbf{Induction step from height $\mathbf{(h-1)}$ to $\mathbf{h}$.}
Let $x' \in \mathbb{R}$ be the value such that $\sinh(\lambda x') = \sinh(\lambda x) + \sinh(\lambda \widetilde{d}_r)$. Since $\widetilde{d}_r \notin \dang(x)$ and $|x| \geq \frac{\log n}{\lambda} + \frac{h \log 10}{\lambda}$ by assumption in Lemma~\ref{lem:ScalarSeparationLemma}, it follows from Fact~\ref{claim:DangerousSetNoCancel} that 
\begin{align}
|x'| ~ \geq ~ \frac{\log n}{\lambda} + \frac{(h-1) \cdot \log 10}{\lambda}  \qquad \text{and} \qquad \cosh(\lambda x') ~\geq~ \frac{7}{10} \cdot \Big(\cosh(\lambda x) + \cosh(\lambda \widetilde{d}_r)\Big). \label{eqn:RootReplacement}
\end{align}
Therefore, one can imagine replacing $\widetilde{d}_r$ of the root by $x'$ and it suffices to show that
\begin{align}
\label{eqn:IgnoreRoot}
\E[|L_s + \sinh(\lambda x')|] ~ \geq ~ \frac{2}{ f(h)} \cdot \E[Q_s + \cosh(\lambda x')] - h n^2,
\end{align}
where $L_s$ (resp. $Q_s$) denote the $L$ (resp. $Q$) term in the (random) subtree $\T_s$ rooted at $s \in \child(r) \cap \calP_t$, i.e. 
\[
L_s ~\overset{\Def}{=} \sum_{v \in \calP_t \cap \T_s} \sinh(\lambda d_v) \qquad \text{and} \qquad Q_s ~\overset{\Def}{=} \sum_{v \in \calP_t \cap \T_s} \cosh(\lambda d_v).
\]
Once~(\ref{eqn:IgnoreRoot}) is established, we immediately have
\[
\E[|\widetilde{L} + \sinh(\lambda x)|] ~=~ \E[|L_s + \sinh(\lambda x')|] ~\overset{(\ref{eqn:IgnoreRoot})}{\geq} \frac{2}{f(h)} \cdot \E[Q_s + \cosh(\lambda x')] - hn^2 \overset{(\ref{eqn:RootReplacement})}{\geq} \frac{1}{f(h)} \cdot \E[\widetilde{Q} + \cosh(\lambda x)] - hn^2,
\]
which finishes the proof of the induction step. The only thing left is to prove~(\ref{eqn:IgnoreRoot}). 

To prove~(\ref{eqn:IgnoreRoot}), we assume without loss of generality that $x' \geq 0$ (the case where $x' \leq 0$ is similar). Under this assumption, we have from~(\ref{eqn:RootReplacement}) that 
\begin{align}
\label{eqn:x'Large}
x' ~ \geq ~ \frac{\log n}{\lambda} + \frac{(h-1) \log 10}{\lambda}.
\end{align}
Recall that the subtree rooted at $s \in \child(r)$ is denoted by $\T_s$. Among all the children of the root $r$, we denote the set that are dangerous for $x'$ as $D_{h-1} \overset{\Def}{=} \big\{s \in \child(r): d_s \in \dang(x')\big\}$.
Denote the set of children of root $r$ that are not dangerous for $x'$ as $S_{h-1} = \child(r) \backslash D_{h-1}$.
We first argue that 
\begin{align}
\label{eqn:SafeSetIsLarge}
|D_{h-1}|  ~\leq~  0.99m.
\end{align}
Assume for the purpose of contradiction that~(\ref{eqn:SafeSetIsLarge}) doesn't hold. We notice that for each $s \in D_{h-1}$, $d_s \in \dang(x')$ implies that  
\begin{align}
\label{eqn:DangerousChildNegative}
d_s \quad \leq \quad - x' + \frac{\log 10}{\lambda} \quad \overset{(\ref{eqn:x'Large})}{\leq} \quad -\frac{\log n}{\lambda} - \frac{(h-2) \log 10}{\lambda}.
\end{align}
Observe that 
\begin{align}
\label{eqn:SumofChildrenLight}
\Big|\sum_{v \in D_{h-1}} d_v + \sum_{v \in S_{h-1}} d_v \Big| \quad = \quad \Big|\sum_{v \in \child(r)} d_v \Big| \quad \overset{(\ref{eqn:SumOfChildrenDiscSmall})}{\leq} \quad \frac{21 \log n}{\lambda} \quad \leq \quad \frac{0.21 m \log n}{\lambda},
\end{align}
where the last inequality follows from the assumption that $m \geq 100$.
This together with the assumption that~(\ref{eqn:SafeSetIsLarge}) doesn't hold imply that
\[
\Big|\sum_{v \in S_{h-1}} d_v \Big| ~\geq~ \Big|\sum_{v \in D_{h-1}} d_v \Big| - \Big|\sum_{v \in D_{h-1}} d_v + \sum_{v \in S_{h-1}} d_v \Big| ~\geq~ \frac{0.99 m \log n}{\lambda} - \frac{0.21 m \log n}{\lambda} ~=~ \frac{0.78 m \log n}{\lambda},
\]
where the second inequality follows from~(\ref{eqn:DangerousChildNegative}) and (\ref{eqn:SumofChildrenLight}).
Notice that the assumption that~(\ref{eqn:SafeSetIsLarge}) doesn't hold also implies that $|S_{h-1}| \leq 0.01m$, so it follows that there is a node $v \in S_{h-1}$ with 
$|d_v| \geq \frac{50 \log n}{\lambda}$.
But this implies that $\cosh(\lambda d_v)  \geq  n^{49}$,
which is a contradiction to the assumption in Lemma~\ref{lem:ScalarSeparationLemma} that $\Phi(t) \leq n^{20}$. This establishes~(\ref{eqn:SafeSetIsLarge}).

We use $\E_{s \sim D_{h-1}}[\cdot]$ to denote the expectation when $s \in \child(r)$ is sampled from $D_{h-1}$ uniformly at random. 
Similarly, $\E_{s \sim S_{h-1}}[\cdot]$ is used to denote the expectation when $s \in \child(r)$ is sampled from $S_{h-1}$ uniformly at random. 
We keep the notation $\E[\cdot]$ for the case where $s$ is sampled from $\child(r)$ uniformly at random.
Recall that $D_{h-1} \cup S_{h-1} = \child(r)$ and that $D_{h-1} \cap S_{h-1} = \emptyset$.
These imply that when $s \in \child(r)$ is chosen uniformly at random, we have
\begin{align}
\label{eqn:LsDecomposition}
\E[|L_s + \sinh(\lambda x')|] ~=~ \frac{|D_{h-1}|}{m} \cdot \E_{s \sim D_{h-1}}\Big[|L_s + \sinh(\lambda x')|\Big] + \frac{|S_{h-1}|}{m} \cdot \E_{s \sim S_{h-1}}\Big[|L_s + \sinh(\lambda x')|\Big].
\end{align}
\IGNORE{
and
\[
\E[|Q_s + \cosh(\lambda x')|] = \frac{|D_{h-1}|}{m} \cdot \E_{s \sim D_{h-1}}[|Q_s + \cosh(\lambda x')|] + \frac{|S_{h-1}|}{m} \cdot \E_{s \sim S_{h-1}}[|Q_s + \cosh(\lambda x')|].
\]
}
Now we consider three different cases.

\noindent \textbf{Case 1 (heavy root): $\cosh(\lambda x') \geq \frac{1}{\beta} \cdot \E[Q_s + \cosh(\lambda x')]$.} 
In this case we directly apply induction hypothesis with $\widetilde{d}_s = d_s$ (i.e. keeping the imbalance of $s$) on the set of safe children $s \in S_{h-1}$ and get
\begin{align}
\E[|L_s + \sinh(\lambda x')|] 
&\overset{(\ref{eqn:LsDecomposition})}{\geq} \frac{|S_{h-1}|}{m} \cdot \E_{s \sim S_{h-1}}[|L_s + \sinh(\lambda x')|] \nonumber \\
& \geq \frac{|S_{h-1}|}{m} \cdot \left(\frac{1}{f(h-1)} \cdot \E_{s \sim S_{h-1}}[Q_s + \cosh(\lambda x')] - (h-1) n^2  \right) \label{eqn:SafeSetInduct}
\end{align}
Since we have $\cosh(\lambda x') \geq \frac{1}{\beta} \cdot \E[Q_s + \cosh(\lambda x')]$ in this case, we can further simplify this as
\begin{align*}
\E[|L_s + \sinh(\lambda x')|]
&\overset{(\ref{eqn:SafeSetIsLarge})}{\geq} 0.01 \cdot \frac{1}{f(h-1)} \cdot \cosh(\lambda x') - (h-1) n^2  \\
&\geq \frac{1}{100 \cdot \beta f(h-1)} \cdot  \E[Q_s + \cosh(\lambda x')] - hn^2 ~=~ \frac{2}{f(h)} \cdot \E[Q_s + \cosh(\lambda x')] - hn^2.
\end{align*}

\noindent \textbf{Case 2 (heavy safe $Q_s$): $\frac{|S_{h-1}|}{m} \cdot \E_{s \sim S_{h-1}}[Q_s + \cosh(\lambda x')] \geq \frac{1}{100} \cdot \E[Q_s + \cosh(\lambda x')]$.}
In this case we again apply induction hypothesis with $\widetilde{d}_s = d_s$ on the set $S_{h-1}$ and get~(\ref{eqn:SafeSetInduct}). Now using the assumption that $\frac{|S_{h-1}|}{m} \cdot \E_{s \sim S_{h-1}}[Q_s + \cosh(\lambda x')] \geq \frac{1}{100} \cdot \E[Q_s + \cosh(\lambda x')]$, we have
\[
\E[|L_s + \sinh(\lambda x')|] ~\geq~ \frac{1}{100 \cdot f(h-1)} \cdot \E[Q_s + \cosh(\lambda x')] - hn^2 ~\geq~ \frac{2}{f(h)} \cdot \E[Q_s + \cosh(\lambda x')] - hn^2.
\]

\noindent \textbf{Case 3 (light root and light safe $Q_s$): $\frac{|S_{h-1}|}{m} \cdot \E_{s \sim S_{h-1}}[Q_s + \cosh(\lambda x')] < \frac{1}{100} \cdot \E[Q_s + \cosh(\lambda x')]$ and $\cosh(\lambda x') < \frac{1}{\beta} \cdot \E[Q_s + \cosh(\lambda x')]$.}
The first condition implies that 
\begin{align}
\label{eqn:HeavyDangerousSet}
\frac{|D_{h-1}|}{m} \cdot \E_{s \sim D_{h-1}}[Q_s + \cosh(\lambda x')] ~ \geq ~  \frac{99}{100} \cdot \E[Q_s + \cosh(\lambda x')].
\end{align}
In this case, we cannot directly apply our induction hypothesis on the set $s \in D_{h-1}$ without replacing $d_s$ because each $s \in D_{h-1}$ is dangerous for $x'$, i.e. $d_s \in \dang(x')$.
To circumvent this problem, we define $\widetilde{d}_s \in \mathbb{R}$ to be such that  $\sinh(\lambda \widetilde{d}_s) =  \sinh(\lambda d_s) + \sinh(\lambda x')$
for each $s \in D_{h-1}$ and we pick $x_s' \in \left[ \frac{\log n}{\lambda} + \frac{(h-1) \log 10}{\lambda}, \frac{2 \log n}{\lambda} \right]$ s.t. $\widetilde{d}_s \notin \dang(x_s')$.
We notce that 
\begin{align}
\label{eqn:Lightxs'}
|\sinh(\lambda x_s')|  ~\leq~  n^2.
\end{align}
Whenever $s \in D_{h-1}$ is selected by the random root-leaf path $\calP_t$, we replace $d_s$ by $\widetilde{d}_s$ and $x'$ by $x_s'$ and then apply induction hypothesis using $\widetilde{d}_s$ and $x_s'$.
We use $\widetilde{L}_s$ (resp. $\widetilde{Q}_s$) to denote the $L_s$ (resp. $Q_s$) term when $d_s$ is replaced by $\widetilde{d}_s$.
We have
\begin{align*}
\E[|L_s + \sinh(\lambda x')|] 
&\overset{(\ref{eqn:LsDecomposition})}{\geq} \frac{|D_{h-1}|}{m} \cdot \E_{s \sim D_{h-1}}[|L_s + \sinh(\lambda x')|] \\
&= \frac{|D_{h-1}|}{m} \cdot \E_{s \sim D_{h-1}}[|\widetilde{L}_s + \sinh(\lambda x_s') - \sinh(\lambda x_s')|] \overset{(\ref{eqn:Lightxs'})}{\geq} \frac{|D_{h-1}|}{m} \cdot \E_{s \sim D_{h-1}}[|\widetilde{L}_s + \sinh(\lambda x_s')|] - n^2.
\end{align*}
Now we apply induction hypothesis on each $s \in D_{h-1}$ after replacing $d_s$ by $\widetilde{d}_s$ and $x'$ by $x_s'$ to get
\begin{align*}
\E[|L_s + \sinh(\lambda x')|] 
&\geq\frac{|D_{h-1}|}{m} \cdot \frac{1}{f(h-1)} \cdot \E_{s \sim D_{h-1}}[\widetilde{Q}_s + \cosh(\lambda x_s')] - hn^2 \\
&\geq \frac{1}{f(h-1)} \cdot \frac{|D_{h-1}|}{m} \cdot \E_{s \sim D_{h-1}}[Q_s - \cosh(\lambda d_s)] - hn^2,
\end{align*}
where the second inequality follows from $\widetilde{Q}_s =  Q_s - \cosh(\lambda d_s)+ \cosh(\lambda \widetilde{d}_s)$.
Recall from Fact~\ref{claim:DangerousSetRatioBound} that $d_s \in \dang(x') $ implies that $\cosh(\lambda d_s) \leq 11 \cosh(\lambda x')$. This gives
\begin{align*}
\E[|L_s + \sinh(\lambda x')|]  
&\geq \frac{1}{f(h-1)} \cdot \frac{|D_{h-1}|}{m} \cdot \E_{s \sim D_{h-1}}[Q_s - 11 \cosh(\lambda x')] - hn^2 \\
&\geq \frac{1}{f(h-1)} \cdot \left( \frac{|D_{h-1}|}{m} \cdot \E_{s \sim D_{h-1}}[Q_s + \cosh(\lambda x')] - 12 \cosh(\lambda x') \right) - hn^2.
\end{align*}
From~(\ref{eqn:HeavyDangerousSet}) and the assumption that $\cosh(\lambda x') < \frac{1}{\beta} \cdot \E[Q_s + \cosh(\lambda x')]$, it follows that
\begin{align*}
\E[|L_s + \sinh(\lambda x')|]  
&\geq \frac{1}{f(h-1)}  \cdot \left(\frac{99}{100} \cdot \E[Q_s + \cosh(\lambda x')] - \frac{12}{\beta} \cdot \E[Q_s + \cosh(\lambda x')] \right) - hn^2\\
&\geq \frac{2}{f(h)} \cdot \E[Q_s + \cosh(\lambda x')] - hn^2,
\end{align*}
where the second inequality follows from our choice of $\beta = 100$. This finishes the proof of Lemma~\ref{lem:ScalarSeparationLemma}.
\end{proofof}

\IGNORE{
\begin{align*}
\E[|L_s + \sinh(\lambda x')|] 
&\overset{\cirt 2}{\geq} \frac{|D_{h-1}|}{m} \cdot \left(\frac{1}{f(h-1)} \cdot \E_{s \sim D_{h-1}}[Q_s' + \cosh(\lambda x_s')] - (h-1)n^2 \right) - n^2 \\
& \geq \frac{|D_{h-1}|}{m} \cdot \frac{1}{f(h-1)} \cdot \E_{s \sim D_{h-1}}[Q_s - \cosh(\lambda d_s)] - hn^2 \\
& \overset{\cirt 3}{\geq} \frac{1}{f(h-1)} \cdot \frac{|D_{h-1}|}{m} \cdot \E_{s \sim D_{h-1}}[Q_s - 10 \cosh(\lambda x')] - hn^2 \\
& \geq \frac{1}{f(h-1)}  \cdot \left(\frac{|D_{h-1}|}{m} \cdot \E_{s \sim D_{h-1}}[Q_s + \cosh(\lambda x')] - 11 \cosh(\lambda x') \right) - hn^2 \\
&\overset{\cirt 4}{\geq} \frac{1}{f(h-1)}  \cdot \left(\frac{99}{100} \cdot \E[Q_s + \cosh(\lambda x')] - \frac{11}{\beta} \cdot \E[Q_s + \cosh(\lambda x')] \right) - hn^2\\
&\geq \frac{2}{f(h)} \cdot \E[Q_s + \cosh(\lambda x')] - hn^2,
\end{align*}
where {\cirt 1} is because $|x_s'| \leq \frac{2 \log n}{\lambda}$, {\cirt 2} is by induction hypothesis, {\cirt 3} is because $d_s \in \dang(x')$, and {\cirt 4} is by the case assumptions that $\frac{|S_{h-1}|}{m} \cdot \E_{s \sim S_{h-1}}[Q_s + \cosh(\lambda x')] < \frac{1}{100} \cdot \E[Q_s + \cosh(\lambda x')]$ and $\cosh(\lambda x') < \frac{1}{\beta} \cdot \E[Q_s + \cosh(\lambda x')]$.
}


\section{\pbone} \label{sec:1DTusnady}
Recall that in the Online Interval Discrepancy problem, $n$ points arrive uniformly at random in the interval $[0,1]$, in an online manner. Upon $t$th arrival, we need to immediately  assign it a {\em color} $\col(t) \in \{-1,+1\}$.
Our goal is to minimize the discrepancy of the set system $\calS$ consisting of all sub-intervals of $[0,1]$, i.e.,
\[
\calS ~\overset{\Def}{=}~ \{[a,b]\}_{0\leq a < b \leq 1}.
\]
This section is devoted to proving Theorem~\ref{thm:pbone} which is restated as follows for reference. 

\thmpbone*


We prove Theorem~\ref{thm:pbone} via a reduction to  the \pbtreebalance problem from Section~\ref{sec:OnlineTreeBal}. This reduction appears in Section~\ref{subsec:TreeBaltoRand1DTus}, and in Section~\ref{sec:lowerbounds} we discuss lower bounds for this problem.


\subsection{From \pbtreebalance to \pbone}
\label{subsec:TreeBaltoRand1DTus}

\begin{proofof}{Theorem~\ref{thm:pbone}}
We consider the following embedding of the interval $[0,1]$ into a complete $m$-ary tree $\T$ with height $h = \log \log n/C$ for sufficiently large constant $C$ and fan-out $m = \lceil n^{\frac{1}{h+1}} \rceil \geq 100$. For simplicity, we assume $n^{\frac{1}{h+1}}$ is an integer and write $m = n^{\frac{1}{h+1}}$. The root $r$ corresponds to the interval $[0,1]$ and the $i$th node at depth $j$ corresponds to the interval $[ \frac{i-1}{m^j}, \frac{i}{m^j} ]$. 
Note that each depth-$j$ (assume $j < h$) interval is partitioned by $m$ depth-$(j+1)$ intervals which correspond to its $m$ children.
The set of leaves is formed by $m^h = n^{\frac{h}{h+1}}$ intervals each of length $\frac{1}{m^h}$.
For each node $v \in \T$, we also use $v$ to denote the corresponding interval.

Our algorithm for \pbone simply builds the above tree $\T$ and runs the algorithm for \pbtreebalance in Theorem~\ref{thm:RandTreeBalMainLemma}. We prove that w.h.p. this algorithm has discrepancy $O(n^{c/\log\log n})$ for the set system $\calS$, where $c$ is some universal constant.
Since each arrival lands uniformly at random in $[0,1]$, in order to bound the discrepancy of the set of all intervals, we only need to bound the discrepancy of the set of $O(n^2)$ intervals $[\frac{a}{n}, \frac{b}{n}]$ where $a, b \in \{0,1,\cdots,n\}$ and $a < b$. For each such interval $I = [\frac{a}{n}, \frac{b}{n} ]$, our tree embedding ensures that there's a collection $\B(I) = \{v_i\}_{i \in [s]}$ of at most $2mh$ nodes of $\T$ and two intervals $I_1 \subseteq l_1$ and $I_2 \subseteq l_2$, where $l_1$ and $l_2$ are two leaves of $\T$, such that (1) all these $2mh + 2$ intervals are disjoint, and (2) $I = \Big(\bigcup_{i \in [s]} v_i \Big) \cup I_1 \cup I_2$.
We can therefore bound the imbalance of $I$ as
\[
|\col(I)| ~\leq~ \sum_{i \in [s]} |\col(v_i)| + |\col(I_1)| + |\col(I_2)|.
\]
By Theorem~\ref{thm:RandTreeBalMainLemma}, w.h.p. each $v_i \in \T$ has 
\begin{align}
\label{eqn:TreeNodesBalance}
|\col(v_i)| ~= ~ O(\log^2 (n)).
\end{align}
Notice that each leaf $l$ of the tree has $n/m^h = m$ arrivals in expectation, so the two leaves $l_1$ and $l_2$ have at most $\widetilde{O}(m)$ arrivals w.h.p.. Therefore in this case we have 
\begin{align}
\label{eqn:LeavesFewArrivals}
|\col(I_1)| + |\col(I_2)| ~ = ~ \widetilde{O}(m).
\end{align} 
It follows from~(\ref{eqn:TreeNodesBalance}) and~(\ref{eqn:LeavesFewArrivals})  that w.h.p.,
\[
|\col(I)| \quad = \quad O(\log^2 (n)) + \widetilde{O}(m) \quad = \quad O(n^{c/\log \log n}),
\]
for some universal constant $c$. This finishes the proof of Theorem~\ref{thm:pbone}.
\end{proofof}

\IGNORE{
\begin{lemma}
\label{lem:OnlineTreeBalto1DTusnady}
If there's a subtree-oblivious algorithm for \pbtreebalance with depth $h = O(\log \log n)$ and $m = n^{\frac{1}{h+1}}$ that has discrepancy $\widetilde{O}(1)$ w.h.p., then there's an algorithm that gets discrepancy $\widetilde{O}\left(n^{\frac{1}{2(h+1)}} \right)$ w.h.p..
\end{lemma}
\begin{proof}[sketch]
We consider the $m$-ary tree $\T$ where the root $r$ corresponds to the interval $[0,1]$ and the $i$th node at depth $j$ corresponds to the sub-interval $\left[ \frac{i-1}{m^j}, \frac{i}{m^j} \right]$. 
Note that each depth-$j$ (assume $j < h$) sub-interval contains, and in fact, is partitioned by $m$ depth-$(j+1)$ sub-intervals which correspond to its $m$ children.
The set of leaves is formed by $m^h = n^{\frac{h}{h+1}}$ sub-intervals, each of length $\frac{1}{m^h}$.
Denote the set of depth-$j$ intervals for each $j \in \{0,1,\cdots,h\}$ as $N_j$.
Our algorithm for 

For each depth $j \in \{0,1,\cdots,h\}$, we define the key notion of \emph{depth-$j$ discrepancy} in the following. 
We first define it for the case where $j < h$.
We consider any depth-$j$ sub-interval $I_j$.
Denote the set of sub-intervals of $I_j$ that are formed by union of $I_j$'s children as $\calC(I_j)$. 
The depth-$j$ discrepancy is defined to be $\disc_j \overset{\Def}{=} \max_{I_j \in N_j}\{ \disc(\calC(I_j))\}$.
Now we look at the case where $j = h$. 
For each depth-$h$ sub-interval $I_h$, we denote $\calC(I_h)$ the set of all sub-intervals of $I_h$.
The depth-$h$ discrepancy is defined to be $\disc_h \overset{\Def}{=} \max_{I_h \in N_h}\{ \disc(\calC(I_h))\}$.

Notice that $\disc(\calS) \leq \sum_{j=0}^h \disc_j$ so it suffices to prove that $\disc_j = \widetilde{O}\left(n^{\frac{1}{2(h+1)}} \right)$
for each $j \in \{0,1,\cdots, h\}$.
We first prove that w.h.p., $\disc_h = \widetilde{O}\left(n^{\frac{1}{2(h+1)}} \right)$.
We look at any depth-$h$ sub-interval $I_h$.
Although there is an infinite number of sub-intervals of $I_h$, in fact we can partition $I_h$ into $m$ smaller sub-intervals of length $1/n$ and we only need to look at the set $\calC'(I_h)$ of $O(m^2)$ sub-intervals which are formed by union of these smaller sub-intervals.
This is because w.h.p., each one of these $m$ smaller sub-intervals have at most $O(\log n)$ arrivals so the discrepancy of each one of them is at most $O(\log n)$.
For each sub-interval $I \in \calC'(I_h)$, we observe that it corresponds to a random sample of the set of arrivals into $I_h$, where each arrival is sampled independently with probability $|I|/|I_h|$, where $|I|$ denote the length of $I$.
Notice that w.h.p., the total number of arrivals into $I_h$ is at most $O\left( \frac{n \log n}{m^h} \right)  = \widetilde{O}\left(n^{\frac{1}{(h+1)}} \right)$ so by Chernoff bound, $\disc(\calC'(I_h)) = \sqrt{\widetilde{O}\left(n^{\frac{1}{(h+1)}} \right)} = \widetilde{O}\left(n^{\frac{1}{2(h+1)}} \right)$ w.h.p..

Now we prove that for each $j < h$, $\disc_j = \widetilde{O}\left( n^{\frac{1}{2(h+1)}} \right)$ w.h.p..
We look at any depth-$j$ interval $I_j$ and consider any sub-interval $I \in \calC(I_j)$.
Notice that by the subtree-oblivious property of the algorithm, each permutation (here permutation $\pi$ means moving all the arrivals into each child $i$ to child $\pi(i)$ while preserving the order of all arrivals) of the $m$ children of $I_j$ happens with the same probability, so any sub-interval $I \in \calC(I_j)$ corresponds to a uniformly random sample of the children of $I_j$.
Notice that each child of $I_j$ and $I_j$ itself has discrepancy at most $\widetilde{O}(1)$ w.h.p., therefore applying Chernoff bound for drawing with replacement, we have that $\disc_j = \widetilde{O}\left( n^{\frac{1}{2(h+1)}} \right)$ w.h.p..
This finishes the proof of the theorem.
\end{proof}
}


\subsection{Lower Bounds for \pbone}
\label{sec:lowerbounds}

\subsubsection{Adaptive Adversary}
When the arrival sequence is given by an \emph{adaptive adversary} (i.e., one who can decide the next arrival based on  the previous decisions of the algorithm) instead of being stochastic, the discrepancy can be $\Omega(n)$.

\begin{lemma}\label{lem:1DTusnadyAdapAdv}
For \pbone with an adaptive adversary, any online algorithm has discrepancy $\Omega(n)$.
\end{lemma}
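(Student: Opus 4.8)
The plan is to exhibit an explicit adaptive adversary strategy that forces linear discrepancy. The key observation is that at any point in time, if the online algorithm has colored the points seen so far, there is some region of the unit interval that contains no arrivals yet; the adversary will keep feeding points into a shrinking such region, exploiting the fact that the algorithm must commit to a color before learning whether that point will end up ``odd'' or ``even'' among its local neighbors.

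\textbf{The adversary strategy.} First I would maintain an ``active'' subinterval $[a_t,b_t]$ which is free of prior arrivals; initially $[a_0,b_0]=[0,1]$. At step $t$, suppose the algorithm has so far produced a partial coloring. The adversary places the $t$th point at the midpoint $m_t = (a_{t-1}+b_{t-1})/2$ of the active interval. The algorithm must color it $\col(t)\in\{+1,-1\}$ immediately. The adversary then \emph{shrinks the active interval to the half on which it can do the most damage}: if $\col(t)=+1$, it sets the new active interval to $[a_{t-1},m_t]$ (so all future points lie to the \emph{left} of $m_t$), and if $\col(t)=-1$ it sets it to $[m_t,b_{t-1}]$ (future points lie to the \emph{right}). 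The effect is that the points placed so far, read in left-to-right order, are a monotone sequence whose colors are correlated in a way the adversary controls: in fact one checks inductively that after $k$ steps the $k$ placed points, sorted by position, carry colors $\col(\sigma(1)),\dots,\col(\sigma(k))$ for the spatial order $\sigma$, and by construction a point colored $+1$ always has all later (in time) points to its left, i.e. it is a left-to-right \emph{prefix} boundary. The clean way to state the invariant: after $t$ arrivals, the spatially leftmost $j$ points are exactly those among $\{1,\dots,t\}$ that the adversary ``pushed left'', and these are precisely the ones colored in a correlated block. The upshot is that there is always an interval $[a,b]$ containing a consecutive (in space) block of arrivals all of whose colors the adversary has been able to align to the same sign for a constant fraction of the steps, so $|\col([a,b])| = \Omega(t)$.

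\textbf{Making the imbalance accounting precise.} The cleanest version avoids the bookkeeping above by a potential argument: let $P_t$ be the position-sorted sequence of the first $t$ points and consider the prefix sums $S_j = \sum_{i \le j}\col(\text{$i$th leftmost point})$. When the adversary places $m_t$ at the midpoint of the active interval and the algorithm colors it $c$, that point becomes the $(k+1)$th leftmost where $k$ is the number of previously-placed points lying left of the active interval; crucially \emph{all future points} go strictly to one side of $m_t$, so the identity of the ``$k$th prefix sum'' $S_k$ at the split position is frozen forever after this step. The adversary chooses the side so as to lock in the larger of $|S_k|$ and $|S_k + c|$... — and since $c=\pm1$ it can always preserve or increase a running prefix sum, so after $n$ steps some frozen prefix sum has magnitude $\Omega(n)$. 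Then the interval $[0, m^*]$ from the left endpoint up to that split point has imbalance $\Omega(n)$, giving $\disc \ge \Omega(n)$. I would write this as a short induction: there is a sequence of split positions at which the prefix sum is monotonically bounded below in absolute value by the step count, essentially because the adversary, seeing $c$, can always commit the split to a side that does not decrease $|{\cdot}|$.

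\textbf{Main obstacle.} The only delicate point is the combinatorial bookkeeping connecting ``the adversary shrinks to one side'' with ``a fixed prefix sum of the final spatial order is frozen now'' — one must be careful that once the active interval lies entirely to the left (say) of $m_t$, no later point can ever be inserted spatially between the current block and $m_t$, so the prefix sum up to $m_t$ genuinely never changes again. This is immediate from the construction (all later active intervals are nested inside one half of $[a_{t-1},b_{t-1}]$), but it is the crux and deserves an explicit sentence. Everything else — that $\pm1$ colors let the adversary maintain $|S| \ge \lfloor t/1 \rfloor$ along the chosen splits, hence $\Omega(n)$ at the end, and that an interval realizing this imbalance exists — is routine. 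I would also remark that the same construction shows the lower bound is robust: it needs only that the adversary is adaptive and that arrivals are on a line with the interval set system, matching the claim $\Omega(n)$ (and trivially this is $\Omega(\sqrt n)$, the bound invoked elsewhere for the weaker oblivious/adversarial settings).
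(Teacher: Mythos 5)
Your adversary (maintain an interval free of prior arrivals, place the next point in its interior, and shrink to the left or right half depending on the color the algorithm assigns) is exactly the paper's construction, just phrased via a nested sequence of active intervals rather than as ``place between the rightmost $+1$ and the leftmost $-1$,'' and with the $\pm$ labeling swapped. The resulting invariant — one monochromatic block to each side of the active interval — immediately gives that the larger block has size $\geq n/2$ and lies inside a clean interval, so the first and third paragraphs of your proposal are correct and match the paper.

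However, the ``cleanest version'' via a single running prefix sum $S_k$ in your second paragraph does not quite work as stated. If you only track the left prefix and greedily lock in $\max(|S_k|,|S_k+c|)$, an algorithm that always colors $c=-\sign(S_k)$ forces you to go left every time, and $|S_k|$ stays at $1$ forever; the ``after $n$ steps some frozen prefix sum has magnitude $\Omega(n)$'' claim is then false for the quantity you are tracking. What rescues the argument is that in that case the frozen \emph{suffix} sum $T_j$ (the sum on the other side of the active interval) is the one that grows, and one needs a further sign invariant — $\sign(S_k)\neq\sign(T_j)$ once both are nonzero — to conclude that $|S_k|+|T_j|$ grows by $1$ each step, so that $\max(|S_k|,|T_j|)\geq n/2$. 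Your first description (which makes the left block all $-1$ and the right block all $+1$ by construction) gives this sign invariant for free, so the ``potential argument'' reformulation is actually a step backward: it introduces exactly the cancellation risk that the monochromatic-blocks construction avoids. I'd drop that paragraph or restate it to track both $|S_k|$ and $|T_j|$ explicitly.
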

\begin{proof}

Without loss of generality, assume the first two arrivals are at $0$ and $1$ and the algorithm colors them $+1$ and $-1$, respectively. 
Consider an adaptive adversary that always makes the next arrival land between the rightmost $+1$ element and the leftmost $-1$ element. This ensures that every element left of the last arrival has color $+1$ and every element right of the last arrival has color $-1$.
Therefore the discrepancy is at least $n/2$.
\end{proof}

\subsubsection{Oblivious Adversary}
The adaptive adversary in Lemma~\ref{lem:1DTusnadyAdapAdv} can be turned into an \emph{oblivious adversary} by making a random guess of the color of the last arrival.
\begin{lemma}\label{lem:1DTusnadyOblivAdv}
For \pbone with an oblivious adversary, any (randomized) online algorithm has (expected) discrepancy  $\Omega(\sqrt{n})$.
\end{lemma}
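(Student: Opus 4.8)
The plan is to derive the $\Omega(\sqrt n)$ lower bound for an oblivious adversary by a standard averaging argument over a random parameter in the adaptive strategy from Lemma~\ref{lem:1DTusnadyAdapAdv}. The key obstacle is that an oblivious adversary must commit to the arrival sequence in advance, so it cannot literally ``place the next point between the rightmost $+1$ and the leftmost $-1$'' as the adaptive adversary does. I would get around this by having the adversary commit to a \emph{nested} sequence of points whose combinatorial order is fixed in advance, so that regardless of the algorithm's (randomized) coloring, a large prefix of the coloring is revealed on a structured set of points, and then show that with constant probability some interval among these points has large imbalance.

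Concretely, first I would fix the arrival sequence to be: $x_1 = 0$, $x_2 = 1$, and for $t \geq 3$ let $x_t$ be the midpoint of the ``current gap'' in a fixed binary-search-like schedule — e.g. $x_3 = 1/2$, then $x_4 = 1/4$, $x_5 = 3/4$, and so on, so that the points $x_1,\dots,x_n$ occupy the dyadic rationals in a fixed, adversary-chosen order. This sequence is oblivious (it does not depend on the algorithm). Next I would observe that for any fixed coloring $\col$ of these $n$ points, consider the point $x^\star$ which is the leftmost point colored $-1$ that has a point colored $+1$ immediately to its left in the sorted order; the interval between consecutive such ``sign-change'' locations, read off in sorted order, partitions $[0,1]$ into maximal monochromatic runs. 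If the number of runs is small (say $\le \sqrt n$), then some run has $\ge \sqrt n$ points of the same color, giving an interval of imbalance $\ge \sqrt n$ and we are done. So the hard case is when the coloring has many sign changes.

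To handle the many-sign-changes case, the cleaner route — and the one I would actually pursue, since the excerpt's sketch says ``by making a random guess of the color of the last arrival'' — is to make the adversary randomized over a small family: the adversary picks a uniformly random threshold $k \in \{2,\dots,n\}$ and a uniformly random bit $b\in\{+1,-1\}$ in advance, and reveals only the first $k$ points of the fixed nested sequence above as ``real'' points, with the remaining $n-k$ points all crowded into a tiny sub-interval $J$ of length $o(1/n)$ that sits exactly at the location where the adaptive adversary ``would have'' continued, i.e. adjacent to $x_k$ on the side dictated by $b$. Because the algorithm is oblivious to $k$ and $b$, its coloring of the first $k$ points is the same regardless, and its coloring decisions inside $J$ are made without knowing which side $J$ is on. Then I would argue: conditioned on the algorithm's transcript, for a random $k$ the ``frontier'' imbalance (imbalance of the interval $[0, x_k]$, say) behaves like a $\pm 1$ random-walk partial sum up to reindexing, so $\E_k[\,|\text{imbalance of }[0,x_k]|\,] = \Omega(\sqrt n)$ by the standard fact that a walk on $\{+1,-1\}^n$ — in fact \emph{any} $\pm1$ sequence — has some prefix sum of magnitude $\Omega(\sqrt n)$ (this is just $\max_k |S_k| \ge |S_n|$ is too weak, but $\sum_k S_k^2$ or the reflection/ballot bound gives $\E_k|S_k| = \Theta(\sqrt n)$ only for random walks — so here I instead use that for \emph{any} sequence of $\pm1$'s there exists a prefix with $|S_k| \ge \sqrt{n/2}$? that is false, e.g. alternating). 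This is the genuine obstacle, and it is why the random bit $b$ and the interval $J$ are needed: the alternating coloring is exactly what $b$ defeats, since crowding $n-k$ arrivals into $J$ forces the algorithm to color roughly $(n-k)/2$-imbalanced blocks adjacent to $x_k$ on a side it cannot predict. I would finish by a case analysis: either the algorithm's coloring of the first $k$ points already has an interval of imbalance $\ge \sqrt n$ for a constant fraction of $k$ (done), or else for most $k$ the coloring is near-balanced on every prefix, in which case the imbalance just to the left of $x_k$ and just to the right of $x_k$ are both near zero, so whichever side $J$ lands, the algorithm's coloring of the $n-k$ points in $J$ creates an interval (namely $J$ together with the monochromatic material it abuts) whose imbalance is within $O(1)$ of the imbalance of $\col$ restricted to $J$; taking $k = n/2$ and using that the best the algorithm can do inside $J$ online against the random order is imbalance $\Omega(\sqrt{n-k}) = \Omega(\sqrt n)$ (by the elementary online lower bound for balancing $\pm1$ with an adversarial — here random — insertion order, or simply because the algorithm's fixed coloring of those points, revealed in a random order, has some interval of imbalance $\Omega(\sqrt n)$ by the first-moment/variance bound) gives the claim. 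The main thing to be careful about is making the ``first moment over a random split point $k$'' argument rigorous, which is where I expect to spend the most effort; everything else is bookkeeping about disjoint intervals and union bounds.
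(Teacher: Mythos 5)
Your proposal takes a genuinely different route from the paper's, and it has a fatal gap. The paper's oblivious adversary does \emph{not} use a fixed combinatorial arrival order. Instead, it samples a random ``guessed'' color $\widetilde{\col}(i)\in\{\pm 1\}$ independently for each arrival $i$, then commits to the arrival sequence obtained by running the adaptive strategy of Lemma~\ref{lem:1DTusnadyAdapAdv} \emph{as if} the algorithm were coloring exactly according to $\widetilde{\col}$: arrival $i$ is placed between the rightmost guessed-$(+1)$ point and the leftmost guessed-$(-1)$ point among the first $i-1$ arrivals. The crucial observation is that the position of arrival $i$ depends only on $\widetilde{\col}(1),\dots,\widetilde{\col}(i-1)$, so the algorithm learns nothing about $\widetilde{\col}(i)$ before committing to $\col(i)$. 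Hence each agreement event $\{\col(i)=\widetilde{\col}(i)\}$ is an independent fair coin, and by anticoncentration the number of agreements exceeds $n/2+\sqrt n/100$ with probability at least $1/4$. Since every guessed-$(+1)$ arrival lies to the left of every guessed-$(-1)$ arrival by construction, one of the two intervals $I_+$ (containing all guessed-$(+1)$ points) or $I_-$ then has imbalance $\Omega(\sqrt n)$.

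Your construction replaces this with a fixed dyadic prefix plus a random threshold $k$, a bit $b$, and a crowded interval $J$ holding the last $n-k$ arrivals. This leaks far too much information. The fixed dyadic prefix is a known sequence, so the algorithm can compute each point's eventual rank the moment it arrives; there is no analogue of the paper's independence observation on the first $k$ points. More importantly, the final step --- ``the best the algorithm can do inside $J$ online against the random order is imbalance $\Omega(\sqrt{n-k})$'' --- is simply false. If the $n-k$ points in $J$ are a fixed set revealed in a uniformly random order (which is what an oblivious adversary must commit to), the algorithm can read off each arriving point's eventual rank within $J$ and color by rank parity, achieving discrepancy $1$ on $J$ with a strategy it executes online. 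This is exactly the failure mode you flag in your own parenthetical (``that is false, e.g.\ alternating'') but never actually repair; the random bit $b$ moves $J$ to one side of $x_k$ but does nothing to stop rank-parity coloring inside $J$. The missing ingredient is the paper's device of making the \emph{relative order} of future arrivals depend on fresh random bits the algorithm cannot infer from past positions --- that is what makes each arrival an independent coin flip for the algorithm and is precisely what converts the adaptive adversary into an equally powerful oblivious one.
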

\begin{remark}
Recall that random coloring gives discrepancy $O(\sqrt{n \log n})$ against an oblivious adversary. Lemma~\ref{lem:1DTusnadyOblivAdv} implies that random coloring is tight up to a $O(\sqrt{\log n})$ factor. This shows that the assumption of stochastic arrivals is crucial for obtaining sub-polynomial discrepancy bounds.
\end{remark}

\begin{proofof}{Lemma~\ref{lem:1DTusnadyOblivAdv}}
Consider a randomized oblivious adversary that produces the following distribution of arrival sequences. The oblivious adversary first guesses a sequence of colors  $\widetilde{\col} = (\widetilde{\col}(1),\cdots,\widetilde{\col}(n))$ where each $\widetilde{\col}(i)\in \{-1,+1\}$ is chosen independently and uniformly at random. It then generates the arrival sequence using the adaptive adversary in Lemma~\ref{lem:1DTusnadyAdapAdv} for coloring $\widetilde{\col}$. 
More specifically, the oblivious adversary makes the $i$th arrival land between the rightmost $+1$ element and the leftmost $-1$ element among the first $(i-1)$ elements under the coloring $\widetilde{\col}(1),\cdots, \widetilde{\col}(i-1)$. The following observation is crucial to our analysis.
\begin{obs}
\label{obs:RandomGuessColor}
The position of the $i$th arrival doesn't depend on the random guess $\widetilde{\col}(i)$. Therefore, upon seeing only the first $i$ arrivals, an algorithm obtains no information about $\widetilde{\col}(i)$.
\end{obs}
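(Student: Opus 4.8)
The plan is to trace how randomness flows through the oblivious adversary's construction and then read off the independence claim. First I would list the independent sources of randomness used to generate the arrival sequence: the guessed colors $\widetilde{\col}(1),\dots,\widetilde{\col}(n)$, which are i.i.d.\ uniform on $\{-1,+1\}$, and (in case the rule that selects an arrival's exact landing point inside its prescribed gap is randomized) auxiliary variables $U_1,\dots,U_n$; if the adversary just uses, say, the midpoint of the gap, the $U_i$ are degenerate and may be dropped. The key structural facts are that $\{\widetilde{\col}(j)\}_j$ and $\{U_j\}_j$ are mutually independent, and — because we are in the oblivious model — the adversary fixes this entire sampling procedure before the run, so none of these variables is a function of the algorithm's decisions.

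Next I would show by induction on $i$ that the vector of positions $(p_1,\dots,p_i)$ of the first $i$ arrivals is a (measurable) function of $\widetilde{\col}(1),\dots,\widetilde{\col}(i-1)$ and $U_1,\dots,U_i$ only, hence in particular does not involve $\widetilde{\col}(i),\dots,\widetilde{\col}(n)$. The base cases are immediate since $p_1=0$ and $p_2=1$ are deterministic. For the inductive step, the adversary places the $i$th arrival using a fixed rule whose only inputs are the earlier positions $p_1,\dots,p_{i-1}$, their guessed colors $\widetilde{\col}(1),\dots,\widetilde{\col}(i-1)$ (which determine the ``rightmost $+1$'' and ``leftmost $-1$'' endpoints of the target gap, and a default location in the degenerate case), and $U_i$; applying the induction hypothesis to $p_1,\dots,p_{i-1}$ then closes the induction. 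This is precisely the statement that the position of the $i$th arrival does not depend on $\widetilde{\col}(i)$.

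Finally I would conclude the ``no information'' part. Since $\widetilde{\col}(i)$ is independent of the joint collection $\big((\widetilde{\col}(j))_{j\ne i},\,(U_1,\dots,U_n)\big)$, and $(p_1,\dots,p_i)$ is a function of a sub-collection of these variables, $\widetilde{\col}(i)$ is independent of $(p_1,\dots,p_i)$. The algorithm's view after the $i$th arrival is a function of $(p_1,\dots,p_i)$ together with the algorithm's own private coins, and the latter are also independent of $\widetilde{\col}(i)$; hence the algorithm's view through time $i$ is independent of $\widetilde{\col}(i)$, so conditioned on that view $\widetilde{\col}(i)$ is still uniform on $\{-1,+1\}$. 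I do not expect a genuine mathematical obstacle here: the content is entirely bookkeeping, and the only point requiring care is to keep the adversary's randomness disjoint from the algorithm's and to be explicit about which $\sigma$-algebra each quantity is measurable with respect to, so that the final independence assertion is rigorous rather than merely intuitive.
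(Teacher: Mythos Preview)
Your proposal is correct and follows the same underlying reasoning as the paper: the $i$th arrival's position is determined by $\widetilde{\col}(1),\dots,\widetilde{\col}(i-1)$ (and the earlier positions), hence is independent of $\widetilde{\col}(i)$. The paper does not give a separate proof of this observation at all; it is stated as immediate from the preceding sentence describing the adversary's placement rule, so your inductive bookkeeping with explicit $\sigma$-algebras is a rigorous elaboration of what the paper leaves implicit.
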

Notice that for any randomly chosen $\widetilde{\col}$,
all arrivals $i$ with $\widetilde{\col}(i)= +1$ are on the left of all arrivals $j$ with $\widetilde{\col}(j) = -1$, so the discrepancy of the set $\calS$ of all intervals with respect to the coloring $\widetilde{\col}$ is at least $\Omega(n)$ according to Lemma~\ref{lem:1DTusnadyAdapAdv}.

However, the randomly guessed coloring $\widetilde{\col}$ might be different from the coloring $\col$ assigned by the online algorithm.
Nevertheless, Observation~\ref{obs:RandomGuessColor} implies that with probability at least $1/4$, the oblivious adversary makes at least $n/2 + \sqrt{n}/100$ correct guesses, i.e.
\[
\p\left[\big|\{i \in [n]: \widetilde{\col}(i) = \col(i)\} \big| \geq n/2 + \sqrt{n}/100 \right] ~\geq~ 1/4.
\]
We show in the following that whenever such an event happens, the discrepancy of the online algorithm's coloring is at least $\Omega(\sqrt{n})$. 

To prove this, we first define the following notations. We use $S_+$ (resp. $S_-$) to denote the set of arrivals with guessed color $+1$ (resp. $-1$) in $\widetilde{\col}$, i.e.
\[
S_+ ~\overset{\Def}{=}~ \{i \in [n]: \widetilde{\col}(i) = +1\} \qquad \text{and} \qquad S_- ~\overset{\Def}{=}~ \{i \in [n]: \widetilde{\col}(i) = -1\}.
\]
We use $n_+$ (resp. $n_-$) to denote the size of $S_+$ (resp. $S_-$), i.e.
\[
n_+ ~\overset{\Def}{=} ~ |S_+| \qquad \text{and} \qquad n_- ~\overset{\Def}{=}~ |S_-|.
\]
If the oblivious adversary makes at least $n/2 + \sqrt{n}/100$ correct guesses, then he makes either at least $(n_+/2 + \sqrt{n}/200)$ correct guesses among $S_+$ or at least $(n_-/2 + \sqrt{n}/200)$ correct guesses among $S_-$. Without loss of generality, we assume he makes at least $(n_+/2 + \sqrt{n}/200)$ correct guesses among arrivals in $S_+$ (the other case is similar). From this we have $\big|\{i \in S_+: \col(i) = +1\} \big| ~ \geq ~ n_+/2 + \sqrt{n}/200$.
It immediately follows that $\big|\{i \in S_+: \col(i) = -1\}\big| ~ \leq ~ n_+/2 - \sqrt{n}/200$.
Since $S_+$ is on the left of $S_-$, there's an interval $I_+$ that contains all arrivals in $S_+$. The imbalance of $I_+$ under the coloring $\col$ is at least
\[
|\col(I_+)| \quad = \quad \Big| |\{i \in S_+: \col(i) = +1\}| - |\{i \in S_+: \col(i) = -1\}|  \Big| \quad \geq \quad \sqrt{n}/100.
\]
We conclude that for any (randomized) online algorithm, when the input is drawn from such a distribution by the oblivious adversary, the discrepancy of the algorithm is at least $\Omega(\sqrt{n})$ with probability at least $1/4$. 
This also implies that for any (randomized) online algorithm, there's a {\em deterministic} oblivious adversary input such that the expected discrepancy of the algorithm is $\Omega(\sqrt{n})$.
Therefore no (randomized) online algorithm can achieve expected discrepancy better than $\Omega(\sqrt{n})$ against an oblivious adversary.
\end{proofof}

\subsubsection{Stochastic Arrivals}

Next we use the oblivious adversary lower bound in Lemma~\ref{lem:1DTusnadyOblivAdv} to prove a super-constant lower bound for \pbone with stochastic arrivals. 
\begin{lemma}
For \pbone with stochastic arrivals, any (randomized) online algorithm has expected discrepancy at least $\Omega((\log n)^{1/4})$.
\end{lemma}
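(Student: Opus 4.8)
The plan is to \emph{simulate}, inside the stochastic stream, a small instance of the oblivious‑adversary hard instance from Lemma~\ref{lem:1DTusnadyOblivAdv}. Recall that instance guesses a coloring $\widetilde{\col}$ and then places arrivals in a \emph{nested} fashion, so that after $j$ arrivals all points guessed $+1$ lie strictly to the left of all points guessed $-1$, and then runs a counting argument on how often the algorithm matches $\widetilde{\col}$. Since we cannot place the stochastic points, we instead \emph{filter}: we toss fresh guess‑coins and keep only those arrivals that, by luck, land in the shrinking windows the coins dictate. A single such attempt succeeds only with tiny probability, so we run many disjoint attempts in parallel and argue that one of them works with constant probability; the size $j$ we can afford turns out to be $\Theta(\sqrt{\log n})$, which produces the $(\log n)^{1/4}$ bound.

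Concretely, partition $[0,1]$ into $B=n^{\Theta(1)}$ equal sub‑intervals. In each sub‑interval $I_b$, take the first $j$ arrivals falling into $I_b$ (after rescaling they are i.i.d.\ uniform on $I_b$) and run the construction of Lemma~\ref{lem:1DTusnadyOblivAdv}: maintain an active window $W$, initially $I_b$, and for $r=1,\dots,j-1$ toss an independent guess‑coin $\widetilde{\col}_r$; the attempt \emph{survives step $r$} if the $(r{+}1)$‑st of these arrivals lands in $W$ and on the $\widetilde{\col}_r$‑side of the $r$‑th one, after which we shrink $W$ to that side. A short calculation shows step $r$ survives with probability $\tfrac12\,\E[|W|]$ and that, conditioned on survival, the window shrinks by a constant factor in expectation, so all $j$ steps survive with probability $e^{-\Theta(j^2)}$. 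Setting $j=\Theta(\sqrt{\log n})$ makes this $n^{-\Theta(1)}$; choosing the exponent in $B$ large enough, the expected number of sub‑intervals whose attempt fully survives is $n^{\Omega(1)}$, and since distinct sub‑intervals use independent arrivals and independent coins, a second‑moment argument shows at least one survives with probability $\Omega(1)$.

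Fix a surviving $I_b$ and look at the moment right after its $j$‑th arrival is colored. At this instant $I_b$ contains exactly the $j$ nested arrivals $q_1,\dots,q_j$; all of $S_+=\{q_r:\widetilde{\col}_r=+1\}$ lies strictly left of all of $S_-=\{q_r:\widetilde{\col}_r=-1\}$ with an empty gap between them, so some sub‑interval of $I_b$ contains exactly $S_+$ and another exactly $S_-$ (points in other sub‑intervals are disjoint from $I_b$, and the later $I_b$‑arrivals have not occurred yet, so there is no pollution). Crucially, conditioned on survival, $\widetilde{\col}_r$ equals the side on which the \emph{later} arrival $q_{r+1}$ falls relative to $q_r$, hence is a fair coin independent of the color $\col(q_r)$ the algorithm committed to; so, exactly as in the proof of Lemma~\ref{lem:1DTusnadyOblivAdv}, the number of $r$ with $\col(q_r)=\widetilde{\col}_r$ exceeds $j/2+\sqrt{j}/100$ with probability $\ge 1/4$, which forces $\max\{|\col(S_+)|,|\col(S_-)|\}\ge\sqrt j/200$. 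Hence some interval has imbalance $\ge\sqrt j/200$ at this time, so $\disc=\Omega(\sqrt j)=\Omega((\log n)^{1/4})$ with probability $\Omega(1)$, and taking expectations finishes the proof.

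The main obstacle is the trade‑off fixing $j$: larger $j$ improves the $\sqrt j$ bound, but the survival probability of an attempt decays like $e^{-\Theta(j^2)}$, and it must stay at least $n^{-O(1)}$ so that $\mathrm{poly}(n)$ parallel attempts suffice — this is precisely what caps $j$ at $\Theta(\sqrt{\log n})$ and costs us a square root relative to the $\Theta(\sqrt{\log n})$ one might hope for. A secondary technical point is making the conditioning rigorous: one should condition on all arrival positions/times and the algorithm's internal randomness before invoking independence of the guess‑coins from the committed colors, and verify that conditioning on ``all $j$ steps survive'' does not spoil the fair‑coin (and approximate independence) behaviour of the indicators $\mathbf 1[\col(q_r)=\widetilde{\col}_r]$ needed for the counting step.
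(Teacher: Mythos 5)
Your high‑level plan — find, with constant probability, a small embedded instance that looks like the random oblivious adversary of Lemma~\ref{lem:1DTusnadyOblivAdv}, then transfer its $\Omega(\sqrt j)$ lower bound — is exactly what the paper does, but the key technical step is handled differently and your version has a genuine gap.

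First, a smaller issue: the survival probability of a full attempt is not $e^{-\Theta(j^2)}$. A direct computation (peeling off one step at a time and using $\E[V^k]=\tfrac{1}{k+1}$ for a uniform $V$) gives exactly $\Pr[\text{survive}] = 1/j! = e^{-\Theta(j\log j)}$, because the events ``land in the side'' are positively correlated through the common window size. This is only a lower bound in your favor, so it does not break anything — it just means your choice $j=\Theta(\sqrt{\log n})$ is not forced by this step.

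The real gap is the sentence ``conditioned on survival, $\widetilde{\col}_r$ \dots\ is a fair coin independent of $\col(q_r)$.'' This is false, and it is precisely what the paper's discretization into $2^N$ equal sub‑pieces of length $\epsilon$ is designed to rescue. In your continuous construction, conditioning on $q_{r+1}$ landing in the $\widetilde{\col}_r$‑side of $q_r$ re‑weights the coin: given the past (i.e.\ $q_1,\dots,q_r$ and the implied windows) and the event that all $j$ steps survive, one gets
\[
\Pr\bigl[\widetilde{\col}_r = c \mid q_{\le r},\ \text{survival}\bigr] \;=\; \frac{w_c^{\,j-r}}{w_+^{\,j-r}+w_-^{\,j-r}},
\]
where $w_\pm$ are the two side lengths of $q_r$ in $W_{r-1}$ (this follows from $g(w,k)=w^k/k!$ being the probability of $k$ further nested hits from a window of size $w$). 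So the coin is biased toward the larger side, and — crucially — the bias is a function of $q_1,\dots,q_r$, which the algorithm observes before committing to $\col(q_r)$. An adversarial algorithm can compute this bias, predict $\widetilde{\col}_r$ with accuracy approaching $1$ for all but the last $O(1)$ steps (conditioned on survival, $u_r$ concentrates near $0$ or $1$), and steer the running match count to $j/2\pm O(1)$; with the match count pinned near $j/2$ the counting step that produces $\Omega(\sqrt j)$ imbalance collapses. The paper avoids this entirely by requiring each arrival to land in a \emph{fixed} sub‑piece of length $\epsilon$ determined by $\widetilde{\col}$, so both branches of every coin have the identical continuation probability $(\epsilon n)$ and the coins stay uniform and independent of everything the algorithm has seen — at the cost of the smaller acceptance probability $2^{-N^2}$, which is exactly what caps $N$ at $\Theta(\sqrt{\log n})$. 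You would need to import some analogue of this equal‑measure discretization (e.g.\ requiring $q_{r+1}$ to land in a fixed‑length sub‑window rather than anywhere in a side) to make the conditional fairness claim true.
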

\begin{proof}
Suppose $N = \frac{\sqrt{\log n}}{10}$. For any online algorithm, we know from Lemma~\ref{lem:1DTusnadyOblivAdv} that there's a deterministic oblivious adversary input of length exactly $N$ such that the online algorithm has discrepancy $\Omega(\sqrt{N})$. We show that with constant probability, there exists a sub-interval with such an oblivious adversary input of length exactly $N$.

We divide the unit interval into $n$ disjoint pieces, each of length $1/n$. 
Divide every piece further into $2^N$ sub-pieces, each of length $\epsilon$ where $\epsilon \cdot 2^N  = 1/n$, or equivalently $\epsilon n = 2^{-N}$. 
For each $i \in [n]$, we denote $X_i$ the indicator random variable that piece $i$ has exactly $N$ arrivals, i.e.
\[
X_i ~\overset{\Def}{=}~ \begin{cases}
1 \qquad &\text{if piece $i$ has exactly $N$ arrivals} \\
0 &\text{otherwise}
\end{cases}
\]
and denote $X \overset{\Def}{=} \sum_{i \in [n]} X_i$ the total number of pieces that has $N$ arrivals.
For any piece $i \in [n]$, the probability that piece $i$ has exactly $N$ arrivals is
\[
p ~\overset{\Def}{=} ~{n \choose N} \cdot \left( \frac{1}{n} \right)^N \cdot \left(1 - \frac{1}{n} \right)^{n-N} ~ = ~ \widetilde{\Omega}\left(\frac{1}{N^N} \right).
\]
Therefore the expected number of pieces with exactly $N$ arrivals is 
\[
\E[X] \quad=\quad np \quad=\quad \widetilde{\Omega}\left(\frac{n}{N^N} \right) \quad \geq \quad n^{0.9}. 
\]
We define event $\A = \{X \geq n^{0.8}\}$ to be the event that  the number of pieces with exactly $N$ arrivals is at least $n^{0.8}$.
We show in the following that
\begin{align}
\label{eqn:ManyPieceHasNArrivals}
\p[\A] ~\geq~ 0.1.
\end{align}

To prove~(\ref{eqn:ManyPieceHasNArrivals}), we need the Paley-Zygmund theorem which states that for any random variable $X \geq 0$ with finite variance and any $0 \leq s \leq 1$, 
\[
\p\left[X > s \E[X] \right] ~\geq~ (1-s^2) \E[X]^2/\E[X^2].
\]
Notice that all the $X_i$'s are identically distributed but not independent. We can calculate $\E[X]^2 / \E[X^2]$ as
\begin{align*}
\frac{\E[X]^2}{\E[X^2]} 
&= \frac{\sum_{i \in [n]} \E[X_i]^2 + \sum_{i \neq j} \E[X_i]\cdot \E[X_j]}{ \sum_{i \in [n]} \E[X_i^2] + \sum_{i \neq j} \E[X_i X_j]}\\
&= \frac{n p^2 + n(n-1) p^2}{np + n(n-1) \E[X_1X_2]} \quad = \quad \frac{n p^2 + n(n-1) p^2}{np + n(n-1) p \cdot \p\left[ X_2 = 1|X_1 = 1 \right]}.
\end{align*}
Notice that
\[
\p\left[ X_2 = 1|X_1 = 1 \right] \quad = \quad {n-N \choose N} \cdot \left( \frac{1}{n-1} \right)^N \cdot \left(1 - \frac{1}{n-1} \right)^{n-N} \quad \leq \quad 1.1 p.
\]
Therefore, we have
\[
\frac{\E[X]^2}{\E[X^2]} \quad \geq \quad \frac{n p^2 + n(n-1) p^2}{np + 1.1n(n-1) p^2} \quad \geq \quad 0.8.
\]
It follows from Paley-Zygmund theorem that 
\begin{align*}
\p \left[\A \right] \quad \geq \quad \p\left[ X \geq 0.1 \E[X]  \right] \quad \geq \quad (1-0.1)^2 \cdot \frac{\E[X]^2}{\E[X^2]} \quad \geq \quad 0.1,
\end{align*}
which finishes the proof of~(\ref{eqn:ManyPieceHasNArrivals}).

Now we condition on  event $\A$. For each piece that has exactly $N$ arrivals, the probability that it has the oblivious input from Lemma~\ref{lem:1DTusnadyOblivAdv} with length exactly $N$ is
\[		 
\left(\frac{\epsilon}{1/n} \right)^N \quad = \quad 2^{-N^2} \quad = \quad n^{-0.01},
\]
where we view each sub-piece of length $\epsilon$ as a discrete point.
Notice that condition on the number of arrivals in each piece, the arrival positions in each piece are independent of the arrival positions in other pieces. So condition on the number of arrivals in each piece such that event $\A$ holds, at least one piece has the random oblivious input of length $N$ w.h.p.. 

This combined with~(\ref{eqn:ManyPieceHasNArrivals}) imply that with constant probability, at least one piece $i$ has the oblivious adversary input of length exactly $N$. From Lemma~\ref{lem:1DTusnadyOblivAdv}, the algorithm has expected imbalance at least $\Omega(\sqrt{N})$ on the piece $i$. Therefore, any (randomized) online algorithm has expected discrepancy $\Omega((\log n)^{1/4})$ for \pbone with stochastic arrivals.
\end{proof}

\IGNORE{
Now the probability that a given piece has the oblivious adversary input from Lemma~\ref{lem:1DTusnadyOblivAdv}  of length exactly $N$ is 
\[		 
{n \choose N} \cdot \epsilon^N \cdot \Big(1 - \frac{1}{n} \Big)^{n-N} \quad = \quad \tilde{\Omega}(n \epsilon)^N \quad = \quad \tilde{\Omega}(2^{-N^2} ).
\]
Moreover, the expected number of  pieces with exactly $N$ arrivals is $\tilde{\Omega}(\frac{n}{N!})$.  Hence there is a constant probability of having an interval with oblivious adversary input of length $N$.
}


\section{\pbtwo}
\label{sec:OnlineRand2DStripe}

Recall that in the \pbtwo problem there are $n$ online arrivals where each one lands uniformly at random in the unit square $[0,1]\times [0,1]$.
Upon seeing arrival $t$, we need to immediately assign it a color  $\col(t) \in \{-1,+1\}$.
Our goal is to minimize the discrepancy of the following set  of \emph{stripes}:
\[
\calS ~=~ \Big\{[0,1] \times [a,b] \Big\}_{0 \leq a < b \leq 1} \cup \Big\{[a,b] \times [0,1] \Big\}_{0 \leq a < b \leq 1}.
\]

Our main result of this section is the following theorem.

\thmpbtwo*
\subsection{Our Algorithm}
We build two $m$-ary trees $\T_x$ and $\T_y$ of height $h = \log \log n/C$ for sufficiently large constant $C$, one for each axis, after projecting the square $[0,1] \times [0,1]$ to its corresponding axis. For each axis, the construction of the tree is the same as  in Section~\ref{sec:OnlineTreeBal}. Note that the roots of both the trees  correspond to the entire $[0,1]\times [0,1]$ square. Therefore, we use $r$ to denote roots of both the trees. For $\lambda \overset{\Def}{=} \frac{1}{\log n}$, define the  corresponding potential functions  as
\[
\Phi_x(t) ~ \overset{\Def}{=} ~ \sum_{v \in \T_x} \cosh(\lambda d_v)
\qquad \text{and} \qquad
\Phi_y(t) ~ \overset{\Def}{=} ~  \sum_{v \in \T_y} \cosh(\lambda d_v).
\]
The algorithm considers the potential function 
\[
\Phi(t) ~ \overset{\Def}{=} ~ \Phi_x(t) + \Phi_y(t)
\]
and simply assigns $\col(t) \in \{-1,+1\}$ to the $t$th arrival to minimize the  increase in the potential function.

By using the reduction from Section~\ref{subsec:TreeBaltoRand1DTus}, we show that Theorem~\ref{thm:StripeProblem} is implied by the following lemma. 

\IGNORE
{\begin{proofof}{Theorem~\ref{thm:2DStripeMain}}
Due to symmetry of the two axis, we only need to argue about stripes of the form $[a,b] \times [0,1]$ for $0 \leq a < b \leq 1$.
We are essentially projecting the square $[0,1] \times [0,1]$ to the $x$-axis and we can use the same argument as in the proof of Theorem~\ref{thm:OnlineTreeBalto1DTusnady} to finish the proof.
\end{proofof}
}

\begin{lemma}
\label{lem:2DStripePotentialBound}
The above potential-minimization algorithm satisfies that w.h.p.,
\[
d(\T_x) + d(\T_y) ~=~ O( \log^2 (n)).
\]
\end{lemma}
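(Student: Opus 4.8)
The plan is to follow the one-dimensional argument of Section~\ref{subsec:AlgTreeBal} essentially verbatim, replacing Lemma~\ref{lem:RandTreeBal} with a two-tree analogue. As in the proof of Theorem~\ref{thm:RandTreeBalMainLemma}, it suffices to show that $\E[\Delta\Phi(t)] \le 0$ whenever $n^{10} \le \Phi(t) \le n^{20}$: one then defines the stopped process $\widetilde{\Phi}$ exactly as before, and Markov's inequality plus a union bound give $\p[\exists t: \Phi(t) > n^{20}] \le n^{-8}$, while $\Phi(t) \le n^{20}$ forces $|d_v| = O(\log^2 n)$ at every node of both trees, hence $d(\T_x) + d(\T_y) = O(\log^2 n)$. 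For the drift, the greedy choice of $\col(t)$ together with the Taylor estimate from Section~\ref{subsec:AlgTreeBal} gives
\[
\Delta\Phi(t) ~\le~ \lambda(L_x + L_y)\,\col(t) + \lambda^2(Q_x + Q_y) ~\le~ -\lambda\,|L_x + L_y| + \lambda^2(Q_x + Q_y),
\]
where $L_x, Q_x$ (resp.\ $L_y, Q_y$) are the $L$- and $Q$-quantities of Section~\ref{subsec:AlgTreeBal} evaluated on the random root-leaf path $\calP_t$ inside $\T_x$ (resp.\ $\T_y$). Since $f(h) = 2^{O(h)} \ll 1/\lambda = \log n$ for $h \le \log\log n/C$ with $C$ a large enough constant, it is enough to prove
\[
\E\big[\,|L_x + L_y|\,\big] ~\ge~ \tfrac{1}{4f(h)}\cdot\E\big[\,Q_x + Q_y\,\big].
\]

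To prove this inequality, first note that a depth-$j$ node lies on $\calP_t$ with probability $m^{-j}\ge m^{-h}$ and $m^h < n$, so $\E[Q_x] + \E[Q_y] \ge (\Phi_x(t) + \Phi_y(t))/m^h = \Phi(t)/m^h > n^9$; relabelling the axes we may assume $\E[Q_x] \ge \E[Q_y]$, hence $\E[Q_x] \ge \tfrac12\E[Q_x+Q_y] > n^9/2$ while $\Phi_x(t) \le \Phi(t) \le n^{20}$. The key structural fact is that the root-leaf paths chosen in $\T_x$ and in $\T_y$ are \emph{independent}, because the two coordinates of the $t$-th arrival are independent. So I would condition on the $y$-path, which freezes $L_y$ and $Q_y$, and bound $\E_x[|L_x + L_y|]$ for this fixed value of $L_y$. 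The point is that $L_y$ can be fed into the Separation Lemma (Lemma~\ref{lem:ScalarSeparationLemma}) as the ``entering value'': writing $\sinh(\lambda Y) = L_y$, if $|L_y| \le n^2$ then $\E_x[|L_x + L_y|] \ge \E_x[|L_x|] - n^2 \ge \tfrac{1}{2f(h)}\E[Q_x] - n^2 \ge \tfrac1{4f(h)}\E[Q_x]$ by the one-tree drift bound of Lemma~\ref{lem:RandTreeBal} (whose proof only needs $\E[Q_x]\gtrsim n^9$ and $\Phi_x(t)\le n^{20}$); and if $|L_y| > n^2$, then $|Y| \ge \tfrac{\log n}{\lambda} + \tfrac{h\log 10}{\lambda}$ because $h = O(\log\log n)$, so, provided the root imbalance $d_r$ of $\T_x$ satisfies $d_r \notin \dang(Y)$, Lemma~\ref{lem:ScalarSeparationLemma} applied to $\T_x$ with entering value $Y$ gives $\E_x[|L_x + L_y|] = \E_x[|L_x + \sinh(\lambda Y)|] \ge \tfrac1{f(h)}\E_x[Q_x + \cosh(\lambda Y)] - hn^2 \ge \tfrac1{2f(h)}\E[Q_x]$. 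Averaging these pointwise-in-$y$ bounds over the $y$-path and using $\E[Q_y] \le \E[Q_x]$ then yields the desired inequality, and hence negative drift.

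The main obstacle is the remaining case, $|L_y| > n^2$ and $d_r \in \dang(Y)$: this is precisely the situation the introduction warns about, where the colour favourable for $\T_y$ works against $\T_x$, since here $d_r$ and $L_y$ nearly annihilate and neither tree's contribution dominates. I would split it into two sub-cases. If $\cosh(\lambda d_r)$ is small, say $\cosh(\lambda d_r) \le \tfrac1{44 f(h)}\E[Q_x]$, then $d_r \in \dang(Y)$ forces $|L_y| \le \cosh(\lambda Y) \le 11\cosh(\lambda d_r) \le \tfrac1{4f(h)}\E[Q_x]$ by Fact~\ref{claim:DangerousSetRatioBound}, and the crude bound $\E_x[|L_x + L_y|] \ge \E_x[|L_x|] - |L_y|$ again suffices. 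If instead $\cosh(\lambda d_r)$ is large, then I would show that $\p_y[d_r \in \dang(Y)]$ is bounded away from $1$: if $d_r \in \dang(Y)$ held for a $(1-o(1))$-fraction of the root-leaf paths in $\T_y$, then (since the children of $\T_y$'s root partition the arrivals, exactly as in the ``at least $0.01m$ safe children'' step of the proof of Lemma~\ref{lem:ScalarSeparationLemma}) some node of $\T_y$ would carry imbalance large enough to violate $\Phi(t) \le n^{20}$; on the complementary $\Omega(1)$-probability event we are back in the case $d_r \notin \dang(Y)$ and already harvest $\Omega\big(\tfrac1{f(h)}\big)\E[Q_x]$ from it alone. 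Making this last step precise---quantifying how concentrated $L_y$ can be in the narrow ``dangerous'' window under the global bound $\Phi(t)\le n^{20}$, and, where a direct application of Lemma~\ref{lem:ScalarSeparationLemma} is not possible, combining $d_r$ with $L_y$ into a single entering value for the subtrees of $\T_x$'s root and recursing as in Case~3 of the Separation Lemma---is where I expect most of the work to lie.
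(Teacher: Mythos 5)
Your reduction to a negative-drift statement, the use of the stopped process, and the decomposition $\Delta\Phi(t)\le -\lambda|L_x+L_y|+\lambda^2(Q_x+Q_y)$ all match the paper (which proves this as Lemma~\ref{lem:2DStripePotentialIncrease}). Your cases (a), (b), and (c1) are also essentially sound: conditioning on the lighter tree's path, treating $Y$ with $\sinh(\lambda Y)=L_y$ as an entering value, and falling back on the crude $\E_x[|L_x+L_y|]\ge \E_x[|L_x|]-|L_y|$ bound when $|L_y|$ is small or the root is light is a legitimate variant of the paper's Case~1. The genuine gap is your case (c2) -- heavy root, $|L_y|>n^2$, and $d_r\in\dang(Y)$ -- and that is exactly where the paper's main new idea for this lemma lives, and which you explicitly leave open.

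Your plan for (c2) is to show $\p_y[d_r\in\dang(Y)]$ is bounded away from $1$ via an analogue of the ``at most $0.99m$ dangerous children'' counting step. That step in the proof of Lemma~\ref{lem:ScalarSeparationLemma} is a statement about a single layer of child imbalances, which sum to $d_r$; here the danger condition is on the whole path-sum $Y$, which involves cancellations across all $h$ levels, so the counting does not transfer directly, and you would need a separate inductive argument. More importantly, the paper does not prove such a probabilistic claim at all. In its Case~2(ii) it explicitly permits both $p_x>0.9$ and $p_y>0.9$, and then uses independence of $\calP_x$ and $\calP_y$ only to conclude that with probability at least $0.81$ \emph{both} $q_x\in\dang(d_r)$ and $q_y\in\dang(d_r)$. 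The crucial observation is then structural rather than probabilistic: since $\dang(d_r)$ lies entirely on one side of zero (as $|d_r|\gg \tfrac{\log 10}{\lambda}$), $q_x$ and $q_y$ have the \emph{same} sign, and by Fact~\ref{claim:DangerousSetRatioBound} both satisfy $\cosh(\lambda q_x),\cosh(\lambda q_y)\ge \tfrac{1}{11}\cosh(\lambda d_r)$. Hence on this event the two trees do not fight -- they reinforce: $|L_x+L_y|=|\sinh(\lambda q_x)|+|\sinh(\lambda q_y)|\gtrsim \tfrac{1}{11}\cosh(\lambda d_r)\ge \tfrac{1}{11\beta f(h)}\E[Q_x+Q_y]$, which already yields the drift. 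This ``dangerous means aligned'' observation is the idea your argument is missing, and without it the case where the two stripes appear to cancel is not closed.
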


\begin{proofof}{Theorem~\ref{thm:StripeProblem}}
Define the set of $x$-stripes $\calS_x$ and the set of $y$-stripes $\calS_y$ as
\[
\calS_x ~\overset{\Def}{=}~ \Big\{[a,b] \times [0,1] \Big\}_{0 \leq a < b \leq 1} \qquad \text{and} \qquad \calS_y ~\overset{\Def}{=}~ \Big\{[0,1] \times [a,b] \Big\}_{0 \leq a < b \leq 1}.
\] 
By symmetry, we only need to argue that w.h.p., the set of $x$-stripes $\calS_x$ has discrepancy $O(n^{c/\log \log n})$ for some universal constant $c$. 
We project the unit square $[0,1] \times [0,1]$ onto the $x$-axis (equivalently, we only look at the $x$-coordinate of each arrival). Under such a projection, each stripe in $\calS_x$ becomes an interval $[a,b] \subseteq [0,1]$, and each arrival lands uniformly at random in the interval $[0,1]$. This reduces to the \pbone problem and the same argument in Section~\ref{subsec:TreeBaltoRand1DTus} proves that the discrepancy of $\calS_x$ is $O(n^{c/\log \log n})$ w.h.p. for some universal constant $c$.
\end{proofof}

The remaining section proves the missing Lemma~\ref{lem:2DStripePotentialBound}.

\subsection{Proof of Lemma~\ref{lem:2DStripePotentialBound} }
Denote $\calP_{x}(t)$ and $\calP_{y}(t)$ the randomly sampled root-leaf paths in $\T_x$ and $\T_y$ corresponding to the $t$th arrival. 
Notice that $\calP_x(t)$ and $\calP_y(t)$ are sampled independently and uniformly at random.
Recall from Section~\ref{sec:OnlineTreeBal},
\begin{align*}
\Delta \Phi(t) &\leq \lambda \Big(\sum_{v \in \calP_{x}(t)} \sinh(\lambda d_v) + \sum_{v \in \calP_{y}(t)} \sinh(\lambda d_v) \Big) \cdot \col(t) + \lambda^2 \Big(\sum_{v \in \calP_{x}(t)} \cosh(\lambda d_v) + \sum_{v \in \calP_{y}(t)} \cosh(\lambda d_v)  \Big)\\
&\overset{\Def}{=} \lambda (L_x + L_y) \cdot \col(t) + \lambda^2 (Q_x + Q_y)\\
&\leq -\lambda |L_x + L_y| + \lambda^2 (Q_x + Q_y).
\end{align*}

Now similar to the proof of Theorem~\ref{thm:RandTreeBalMainLemma} from Lemma~\ref{lem:RandTreeBal} for \pbtreebalance, to prove Lemma~\ref{lem:2DStripePotentialBound}  it is sufficient to prove the following Lemma~\ref{lem:2DStripePotentialIncrease}. 

\begin{lemma}
\label{lem:2DStripePotentialIncrease}
Consider the potential-minimization algorithm for \pbtwo. If $n^{10} \leq \Phi(t) \leq n^{20}$, then we have 
\[
\E[|L_x + L_y|] ~ \geq ~ \frac{1}{\beta^2 f(h)} \cdot \E[Q_x + Q_y].
\]
Since $h = \log \log n/C$ for sufficiently large constant $C$, this implies that
\[
\E[\Delta \Phi(t)] \quad \leq \quad -\lambda \cdot \E[|L_x + L_y|] + \lambda^2 \cdot \E[Q_x+Q_y] \quad \leq \quad 0.
\]
\end{lemma}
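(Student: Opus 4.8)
The plan is to reduce this to the one-dimensional Separation Lemma (Lemma~\ref{lem:ScalarSeparationLemma}) applied separately to the two trees, and then argue that the two greedy ``good colors'' agree with constant probability because the two axes are independent. First I would fix the $t$th arrival and recall that $\calP_x(t)$ and $\calP_y(t)$ are sampled independently and uniformly at random. Since $n^{10}\le\Phi(t)\le n^{20}$, at least one of $\Phi_x(t),\Phi_y(t)$ is $\ge n^{10}/2$; say it is $\Phi_x(t)$ (the other case is symmetric), so $\E[Q_x]\ge n^{8}$ by the node-count bound, exactly as in the proof of Lemma~\ref{lem:RandTreeBal}. Running the argument in ``Proof of Lemma~\ref{lem:RandTreeBal} from Separation Lemma'' verbatim on $\T_x$, I get $\E\big[|L_x|\big]\ \ge\ \tfrac{1}{2f(h)}\,\E[Q_x]$. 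I will similarly need a lower bound on $\E[Q_x+Q_y]$ in terms of $\E[Q_x]$: since $Q_y\ge$ (number of nodes on $\calP_y(t)$) $=h+1$ and $\E[Q_x]$ dominates (or, if $\Phi_y(t)$ is also large, an analogous bound holds), we have $\E[Q_x+Q_y]\le \poly(n)\cdot\E[Q_x]$, which is all that is needed since we only want to lose an $f(h)$-type factor.

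The heart of the argument is handling the sign interaction in $|L_x+L_y|$. The subtlety, as flagged in the introduction, is that $\col(t)$ must be chosen once, and the color that makes $L_x\col(t)=-|L_x|$ might make $L_y\col(t)=+|L_y|$, so the bound $\Delta\Phi(t)\le -\lambda|L_x+L_y|+\lambda^2(Q_x+Q_y)$ could fail to have negative drift if $L_x$ and $L_y$ systematically cancel. The key observation is that $L_x$ is a function only of $\calP_x(t)$ and the current imbalances in $\T_x$, while $L_y$ depends only on $\calP_y(t)$; conditioning on everything except the fresh randomness of this arrival, $L_x$ and $L_y$ are \emph{independent} random variables. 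In particular, conditioned on $|L_x|$, the sign of $L_x$ is still ``fresh'' relative to $L_y$. I would formalize this by writing $|L_x+L_y|\ge \big||L_x|-|L_y|\big|$ always, and $|L_x+L_y|=|L_x|+|L_y|$ whenever $L_x$ and $L_y$ have the same sign, which by independence happens with probability at least (roughly) $\tfrac12$ conditioned on the magnitudes. Concretely: let $p_x^+=\p[L_x>0\mid \mathcal{G}]$ etc.; then $\E[|L_x+L_y|\mid \text{magnitudes},\mathcal G]\ \ge\ (p_x^+p_y^+ + p_x^-p_y^-)\,(|L_x|+|L_y|)\ \ge\ \tfrac12\min(|L_x|,|L_y|)+\ldots$ — but to avoid the degenerate case where one of the two $L$'s is tiny, I would instead simply note
\[
\E\big[|L_x+L_y|\big]\ \ge\ \E\big[|L_x+L_y|\cdot \ind[\sign L_x=\sign L_y]\big]\ +\ \E\big[|L_x+L_y|\cdot \ind[\sign L_x\ne\sign L_y]\big]\ \ge\ \tfrac12\,\E\big[\max(|L_x|,|L_y|)\big]\ \ge\ \tfrac12\,\E[|L_x|],
\]
using that $|L_x+L_y|\ge |L_x|$ when the signs agree and $\ge 0$ otherwise, together with $\p[\sign L_x=\sign L_y\mid |L_x|,|L_y|,\mathcal G]\ge$ a constant bounded away from $0$ (here one must be a little careful when $L_x$ or $L_y$ can be exactly $0$, but $Q\ge$ node count forces the conditional sign probabilities to be $\Omega(1)$ on the event that both magnitudes are nonzero, and the zero-magnitude event only helps). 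Chaining with $\E[|L_x|]\ge\tfrac{1}{2f(h)}\E[Q_x]\ge \tfrac{1}{2f(h)\poly(n)}\E[Q_x+Q_y]$ and absorbing constants into the $\beta^2 f(h)$ factor gives the claimed inequality $\E[|L_x+L_y|]\ge \tfrac{1}{\beta^2 f(h)}\E[Q_x+Q_y]$. The final drift statement then follows exactly as in Lemma~\ref{lem:RandTreeBal}: $\E[\Delta\Phi(t)]\le -\lambda\E[|L_x+L_y|]+\lambda^2\E[Q_x+Q_y]\le (-\tfrac{\lambda}{\beta^2 f(h)}+\lambda^2)\E[Q_x+Q_y]\le 0$, since $\lambda=1/\log n$ and $f(h)=4(200\beta)^h$ with $h\le \log\log n/C$ makes $\lambda\ll 1/(\beta^2 f(h))$.

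The main obstacle I anticipate is making the ``constant probability that the signs agree'' step fully rigorous: $L_x$ and $L_y$ are only independent \emph{conditioned} on the state before arrival $t$, and one needs the conditional probability $\p[\sign L_x = \sign L_y]$ to be bounded below by an absolute constant uniformly, including handling the events $L_x=0$ or $L_y=0$ and the possibility that $L_x$ is heavily skewed to one sign (which is fine — skew only helps the ``same sign'' probability once we also use that $L_y$ is independent and likewise either skewed the same way or spread out). A clean way to sidestep the casework is the bound displayed above, $\E[|L_x+L_y|]\ge \tfrac12\E[\max(|L_x|,|L_y|)]$, whose proof needs only $\p[\sign L_x=\sign L_y\mid |L_x|,|L_y|]\ge \tfrac14$ (say), and that in turn follows because for independent $\pm$-valued random variables the collision probability $p^+q^+ + p^-q^-\ge \min(p^+,p^-)+\ldots\ge$ — actually one just uses $p^+q^++p^-q^- = \tfrac12(1+(2p^+-1)(2q^+-1))\ge$ could be $0$ if they are anti-correlated, but they are \emph{independent}, so it equals $p^+q^++(1-p^+)(1-q^+)\ge \tfrac12$ iff $(2p^+-1)(2q^+-1)\ge 0$, i.e.\ iff the two biases point the same way. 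So the genuinely correct statement is: with probability $\ge\tfrac12$ over which of the two trees is ``large,'' the biases point the same way, OR one simply picks the color greedily and observes the algorithm does at least as well as either fixed choice — and since the algorithm minimizes $\Delta\Phi$, it suffices to exhibit \emph{one} color choice with the desired bound in expectation, which the independence argument provides via $\E_{\calP_x,\calP_y}[\min_{\col(t)}(L_x+L_y)\col(t)]\le -\E[\,|L_x|\,]\cdot\Omega(1)$ after conditioning on $\sign L_y$ and using independence of $\calP_x$. I would present this last formulation as it is the cleanest and avoids the zero-probability corner cases entirely.
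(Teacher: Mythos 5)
The proposal contains a genuine gap. Your central step is the claim that, after conditioning on the current state, the independence of $\calP_x$ and $\calP_y$ forces $\p[\sign L_x = \sign L_y]$ to be bounded below by a universal constant. You compute correctly that this probability equals $p_x^+p_y^+ + p_x^-p_y^- = \tfrac12\bigl(1+(2p_x^+-1)(2p_y^+-1)\bigr)$ and then observe, in your own last paragraph, that this is $\ge\tfrac12$ \emph{only} when the biases of $L_x$ and $L_y$ point the same way. That is exactly the problem: independence alone does not preclude the state from being such that $L_x$ is biased toward $+$ and $L_y$ toward $-$, in which case the sign-agreement probability can be close to $0$, and both $|L_x+L_y|\ge\tfrac12\max(|L_x|,|L_y|)$ and your subsequent chain fail. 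Your attempted repair (condition on $\sign L_y$, use independence of $\calP_x$, and argue $\E_{\calP_x}[|L_x+\ell|]\ge\Omega(1)\E[|L_x|]$) is also not sound: for a fixed $\ell$, if $L_x$ were concentrated near $-\ell$, that conditional expectation could be close to $0$; the one-tree Separation Lemma gives you a lower bound on $\E[|L_x|]$ but says nothing about the location or spread of $L_x$ relative to the particular constant $\ell$. There is a second, more minor issue: your chaining step uses $\E[Q_x+Q_y]\le\poly(n)\,\E[Q_x]$, but a $\poly(n)$ loss is fatal here, since the drift inequality requires the final ratio to be $O(\log n)$ (one needs $\beta^2 f(h)\le 1/\lambda=\log n$). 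This part is fixable by instead choosing $\T_x$ to be the tree with the larger $\E[Q]$ (then $\E[Q_x]\ge\tfrac12\E[Q_x+Q_y]$), which is precisely the ``lighter/heavier tree'' convention the paper adopts.

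What you are missing, and what the paper uses to rule out persistent anti-alignment, is the structural fact that $\T_x$ and $\T_y$ share the \emph{same} root imbalance $d_r$ (both roots correspond to the whole square). The paper's proof is a case analysis that exploits exactly this. When the root is \emph{light} ($\cosh(\lambda d_r)$ is a small fraction of $\E[Q_x+Q_y]$), it writes $L_x=\sinh(\lambda q_x)$, strips the shared root contribution $\sinh(\lambda d_r)$ from $L_y$, and feeds $q_x$ into Lemma~\ref{lem:ScalarSeparationLemma} as the entering value for the tree $\T_y$ — so the cross-term $|L_x+L_y|$ is handled by the one-tree Separation Lemma itself, not by a sign-agreement argument. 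When the root is \emph{heavy}, it sets $p_x=\p[q_x\in\dang(d_r)]$ and $p_y=\p[q_y\in\dang(d_r)]$: if one of them is $\le 0.9$, the Separation Lemma applied cross-wise produces the bound from $\cosh(\lambda d_r)$ alone; if both exceed $0.9$, independence gives probability $\ge 0.81$ that $q_x,q_y\in\dang(d_r)$ simultaneously, and then Fact~\ref{claim:DangerousSetRatioBound} forces both $\sinh(\lambda q_x)$ and $\sinh(\lambda q_y)$ to be close to $-\sinh(\lambda d_r)$, hence \emph{equal} in sign and comparable in magnitude to $\cosh(\lambda d_r)$. So independence is used in the paper too, but only to co-occur two ``dangerous'' events, at which point the shared $d_r$ — not independence — forces the signs to agree. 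Without invoking this shared-root structure, the potentially anti-aligned case is not controlled, and the proposal does not close.
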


\begin{proofof}{Lemma~\ref{lem:2DStripePotentialBound} from Lemma~\ref{lem:2DStripePotentialIncrease}}
Initially $\Phi(0) = n$. By Lemma~\ref{lem:2DStripePotentialIncrease}, we have the following bounds on the change in $\Phi(t)$:  (1) When $\Phi(t) < n^{10}$, we have $\E[\Delta \Phi(t)] \leq \lambda^2 \Phi(t) < n^{10}$; (2) When $n^{10} \leq \Phi(t) \leq n^{20}$, we have $\E[\Delta  \Phi(t)] \leq 0$; (3) When $\Phi(t) > n^{20}$, change in potential $\Delta \Phi(t)$ can be arbitrary.

To handle Case (3), we define the following stochastic process $\widetilde{\Phi}(t)$: suppose $\widetilde{\Phi}(t)$ stays the same as $\Phi(t)$ before $\Phi(t)$ becomes larger than $n^{20}$. 
After the first $t$ where $\Phi(t) >n^{20}$, we set $\widetilde{\Phi}(t') = \Phi(t)$ for every $t' \geq t$, i.e., $\widetilde{\Phi}(t)$ stays fixed after time $t$. 
This means that whenever $\widetilde{\Phi}(t)$ exceeds $n^{20}$, we have $\Delta \widetilde{\Phi}(t') = 0$ for any $t' \geq t$.
Therefore, we always have $\E[\Delta \widetilde{\Phi}(t)]  \leq  n^{10}$,
which gives $\E[\widetilde{\Phi}(t)] \leq n^{11}$ for any $t \in [n]$.
Using Markov's inequality followed by a union bound, we have
\[
\p\Big[\exists t \in [n], \Phi(t) > n^{20} \Big] \quad = \quad \p\Big[\exists t \in [n], \widetilde{\Phi}(t) > n^{20}\Big] \quad \leq \quad \frac{1}{n^8}.
\]
Notice when $\Phi(t) \leq n^{20}$, the discrepancy of the trees $\T_x$ and $\T_y$ satisfy $d(\T_x) = O(\log^2 (n))$ and $d(\T_y) = O(\log^2 (n))$.
This finishes the proof of Lemma~\ref{lem:2DStripePotentialBound}.
\end{proofof}

\begin{proofof}{Lemma~\ref{lem:2DStripePotentialIncrease}}
Each arrival $t$ corresponds to picking independently a random root leaf path $\calP_{x}(t)$ from $\T_x$ and $\calP_{y}(t)$ from $\T_y$. For simplicity, we will denote them as  $\calP_x$ and $\calP_y$ when $t$ is clear from the context.
We assume without loss of generality that $\E[Q_x] \leq \E[Q_y]$ for arrival $t$ and we call $\T_x$ the \emph{lighter} tree.
Recall that the roots in $\T_x$ and $\T_y$  correspond to entire $[0,1]\times [0,1]$ square. Therefore, we use $r$ to denote both roots.
We consider two different cases.

\medskip
\textbf{Case 1 (light root): $\cosh(\lambda d_r) < \frac{1}{\beta \cdot f(h)} \cdot \E_{x,y}[Q_x + Q_y]$.} 
We consider $L_x$ for the randomly sampled root-leaf path $\calP_x$ of the lighter tree $\T_x$. 
Define $q_x$ to be such that $L_x = \sinh(\lambda q_x)$.
The idea is to bound $L_x + L_y = \sinh(\lambda q_x) + L_y$  using our Separation lemma in Lemma~\ref{lem:ScalarSeparationLemma}.
However, we cannot directly apply Lemma~\ref{lem:ScalarSeparationLemma} for $q_x$ and the tree $\T_y$ because of two reasons: (1) we might have $|q_x|<\frac{\log n}{\lambda} + \frac{h \log 10}{\lambda}$, or (2) $d_r$ might be dangerous for $q_x$.
The following modifications allow us to overcome these problems.

We first check condition (1). If true, then notice that $\E[Q_x+Q_y]\geq \Phi(t)/n \geq n^{9}$ so $\cosh(\lambda q_x) < n \ll \E[Q_x + Q_y]$.
So we can simply replace $q_x$ by some other discrepancy value $q_x' \in \left[\frac{\log n}{\lambda} + \frac{h \log 10}{\lambda}, \frac{2\log n}{\lambda} \right]$ such that $d_r \notin \dang(q_x')$.
Intuitively, this is fine because changing  $q_x$ to $q_x'$ only affects $|L|$ and $Q$ by a negligible amount. Moreover, if condition (1) is not true then set $q_x' = q_x$.

We then check condition~(2). If true, we simply ignore the root, i.e., replace $d_r$ by 0. Notice that after changing $q_x$ to $q_x'$ in the previous step so that condition~(1) is satisfied, we have $0 \notin \dang(q_x')$. Therefore, replacing $d_r$ by 0 will satisfy condition~(2). Intuitively, this change incurs only a tiny loss due to the case assumption that the root is ``light''.

After the above modifications, we can apply Lemma~\ref{lem:ScalarSeparationLemma} to get 
\begin{align}\label{eq:lightRootStrip}
\E_y [|(L_y - \sinh(\lambda d_r)) + \sinh(\lambda q_x')|] \geq \frac{1}{f(h)} \cdot \E_y[(Q_y - \cosh(\lambda d_r))+ \cosh(\lambda q_x')] - hn^2.
\end{align}
So we get $\E_y[|L_x + L_y|] = \E_y [|\sinh(\lambda q_x) + L_y|]$ is at least
\begin{align*}
& \qquad \E_y\Big[|L_y - \sinh(\lambda d_r) + \sinh(\lambda q_x')|\Big] - |\sinh(\lambda q_x') - \sinh(\lambda q_x)| - |\sinh(\lambda d_r)|\\
& \overset{\eqref{eq:lightRootStrip}}{\geq} \frac{1}{f(h)} \cdot \E_y \Big[Q_y - \cosh(\lambda d_r)+ \cosh(\lambda q_x')\Big] - |\sinh(\lambda q_x') - \sinh(\lambda q_x)| - |\sinh(\lambda d_r)| - hn^2.
\end{align*}
Since $\cosh(\lambda d_r) \geq |\sinh(\lambda d_r)|$ and $f(h) \geq 1$, we get 
\[ \E[|L_x + L_y|] \geq   \frac{1}{f(h)} \cdot \E\Big[Q_y + \cosh(\lambda q_x')\Big] - |\sinh(\lambda q_x') - \sinh(\lambda q_x)| - 2|\cosh(\lambda d_r)| - hn^2.
\]
After rearranging, this implies $\E_y[|L_x + L_y|]$ is at least
\[  \frac{1}{f(h)} \cdot \E_y\Big[Q_y + \cosh(\lambda q_x)\Big]  - | \cosh(\lambda q_x') -  \cosh(\lambda q_x)| - |\sinh(\lambda q_x') - \sinh(\lambda q_x)| - 2 |\cosh(\lambda d_r)| - hn^2.
\]
Now using that the root is light, i.e. $\cosh(\lambda d_r) < \frac{1}{\beta \cdot f(h)} \cdot \E_{x,y}[Q_x + Q_y]$, and that either $\cosh(\lambda q_x)$ equals $\cosh(\lambda q_x')$ or both are less than $n^2$ (similarly,  $\sinh(\lambda q_x)$ equals $\sinh(\lambda q_x')$ or both are less than $n^2$), we get 
\begin{align*}
\E_{x,y}[|L_x + L_y|] &\geq \frac{1}{f(h)} \cdot \E_{x,y}\Big[Q_y + \cosh(\lambda q_x)\Big] - (h+4)n^2 - \frac{2}{\beta f(h)} \cdot \E[Q_x + Q_y] \\
&\geq \frac{1}{\beta^2 f(h)} \cdot \E[Q_x + Q_y],
\end{align*}
where in the last inequality, we use the fact that $\T_x$ is lighter than $\T_y$, i.e. $\E[Q_x] \leq \E[Q_y]$.
This finishes the proof in this case.

\IGNORE{
\begin{align*}
&\geq \frac{1}{f(h)} \cdot \E[Q_y + \cosh(\lambda q_x)] - 2(|\sinh(\lambda q_x') - \sinh(\lambda q_x)| + |\sinh(\lambda d_r)|) - hn^2\\
&\geq \frac{1}{f(h)} \cdot \E[Q_y + \cosh(\lambda q_x)] - (h+4)n^2 - \frac{2}{\beta f(h)} \cdot \E[Q_x + Q_y] \\
&\geq \frac{1}{\beta^2 f(h)} \cdot \E[Q_y + \cosh(\lambda q_x)],
\end{align*}
so we are done in this case.
}

\medskip
\textbf{Case 2 (heavy root): $\cosh(\lambda d_r) \geq \frac{1}{\beta f(h)} \cdot \E[Q_x + Q_y]$.}
We define $q_x$ to be such that $\sinh(\lambda q_x) = L_x$ and define $p_x$ to be the probability that $q_x \in \dang(d_r)$.
Notice that 
\[ \cosh(\lambda d_r) \quad \geq \quad \frac{1}{\beta f(h)} \cdot \E[Q_x + Q_y] \quad \geq \quad \frac{1}{\beta f(h)} \cdot n^9, \] 
which implies that $d_r > \frac{2\log n}{\lambda}$. Thus the dangerous set for $d_r$ is well-defined.
We similarly define $p_y$ as the probability that $q_y \in \dang(d_r)$ where $q_y$ is such that $\sinh(\lambda q_y) = L_y$.
Now we consider two different sub-cases.

\medskip
\qquad \textbf{(i) $p_x \leq 0.9$ or $p_y \leq 0.9$.} Suppose $p_y \leq 0.9$, then we know that with at least  $0.1$ probability $q_y \notin \dang(d_r)$. 
We cannot yet apply our separation lemma in Lemma~\ref{lem:ScalarSeparationLemma} for such $q_y$ because it might be possible that $|q_y| < \frac{\log n}{\lambda} + \frac{h \log 10}{\lambda}$.
If that is the case, then we can simply replace $q_y$ by some $q_y' \in \left[\frac{\log n}{\lambda} + \frac{h \log 10}{\lambda}, \frac{2\log n}{\lambda} \right]$ such that $q_y' \notin \dang(d_r)$.
Notice that such a replacement affects both the $L$ term and $Q$ term by an additive change of at most $n^2$ which is much smaller than $\frac{1}{\beta^2 f(h)} \cdot \E[Q_x + Q_y]$.
We can therefore assume (up to additive $O(n^2)$ error) that whenever $q_y \notin \dang(d_r)$, we have $|q_y| \geq \frac{\log n}{\lambda} + \frac{h \log 10}{\lambda}$.
Now we can apply our separation lemma in Lemma~\ref{lem:ScalarSeparationLemma} for the path $\calP_y$ s.t. $d_r \notin \dang(q_y)$, we have
\[
\E[|L_x+L_y|] \quad \geq \quad 0.1 \cdot \Big( \frac{1}{f(h)} \cdot \cosh(\lambda d_r) - hn^2 \Big)\quad \geq \quad \frac{1}{\beta^2 f(h)} \cdot \E[Q_x + Q_y],
\]
so we are done in this case. The other situation where $p_x \leq 0.9$ is similar.

\medskip
\qquad \textbf{(ii) Both $p_x > 0.9$ and $p_y > 0.9$.} In this case, since $\calP_x$ and $\calP_y$ are independent, with probability at least $0.81$ we have $d_r \in \dang(q_x) \cap \dang(q_y)$. 
Since $\cosh(\lambda d_r) \geq \frac{1}{\beta f(h)} \cdot \E[Q_x + Q_y]$, we have $|d_r| \geq \frac{5 \log n}{\lambda}$. From Fact~\ref{claim:DangerousSetRatioBound}, this implies that $\cosh(\lambda q_x) \geq \frac{1}{11}  \cosh(\lambda d_r)$ and $\cosh(\lambda q_y) \geq \frac{1}{11}  \cosh(\lambda d_r)$, and that $q_x$ and $q_y$ have the same sign.
So we have with probability at least $0.81$,
\[
|L_x + L_y| ~ = ~ |\sinh(\lambda q_x)| + |\sinh(\lambda q_y)| ~> ~\frac{1}{11} \cdot \cosh(\lambda d_r) ~ \geq ~\frac{1}{11 \beta f(h)} \cdot \E[Q_x + Q_y].
\]
This shows that 
\[
\E[|L_x + L_y|] \quad \geq \quad 0.81 \cdot \frac{1}{11 \cdot \beta f(h)} \cdot \E[Q_x + Q_y] \quad \geq \quad \frac{1}{\beta^2 f(h)} \cdot \E[Q_x + Q_y],
\]
which finishes the entire proof.
\end{proofof}


\section{\pbthree via \pbtwo} 
\label{sec:envyMinim}

In this section we show how our results on \pbtwo can be used for the envy minimization problem. Our proofs go via a stronger notion of envy which we call the {\em ordinal envy}. 
To distinguish between the different notions of envy, we call the envy in terms of value (as in the definition of \pbthree problem)  the {\em cardinal envy}.

\subsection{Ordinal Envy}
\label{subsec:OrdinalEnvy}
Consider $n$ indivisible items and a single player with valuation $\valuation$. For simplicity, we assume that the player has different valuation for different items, i.e. $v_j \neq v_{j'}$ for any $j \neq j'$.
In the following, we give three \emph{equivalent}  definitions for ordinal envy of a player on being allocated a subset $S \subseteq [n]$ of items. We will prove the equivalence in Section~\ref{subsec:EquivProof}. 
We start with our first definition.

\begin{defn}[Ordinal Envy, First Definition]
\label{defn:OrdinalEnvy}
Suppose we are given $n$ indivisible items and a player's valuation $\valuation = (v_1,\cdots,v_n)$ of the items. We first relabel the items such that  $ v_1 > v_2 > \cdots > v_n$ and then define   \emph{ordinal envy} of the player on getting subset $S \subseteq [n]$ as
\[		
\envy_O(\valuation,S) ~\overset{\Def}{=}~ \max_{t \geq 0} \Big\{ |\overline{S} \cap [t]| - |S \cap [t]| \Big\}.
\]
\end{defn}
Notice that the ordinal envy  in Definition~\ref{defn:OrdinalEnvy} is always non-negative as we can take $[t]= \emptyset$. This definition considers every prefix of items after  sorting them in decreasing values. It  will therefore have direct connections to  interval discrepancy (Lemma~\ref{lem:DiscUpperBoundsOEnvy}).

Our second definition of ordinal envy is in terms of a {\em cancellation procedure} where items allocated to the player are used to cancel those that are not allocated to him and have lower prices. 
The ordinal envy is then defined to be the number of items that are not allocated to the player and not cancelled during this procedure.

\begin{defn}[Ordinal Envy, Second Definition]
\label{defn:OrdinalEnvyEquiv}
Suppose we are given $n$ indivisible items and a player's valuation $\valuation = (v_1,\cdots,v_n)$ of the items. 
Consider the following \emph{cancellation procedure}:
\begin{enumerate}[topsep=0mm,itemsep=0mm]
\item Initialize $T \leftarrow [n] \backslash S$;
\item For $i \in S$ in order of decreasing $v_i$:
	\item \qquad Set $j \leftarrow \arg\max_{j \in T} \{v_j|v_j < v_i\}$;
	\item \qquad Update $T \leftarrow T \backslash \{j\}$;
\item Return $T$.
\end{enumerate}
The ordinal envy of the player  on getting  subset $S \subseteq [n]$ is defined as
$ \envy_O(\valuation,S) \overset{\Def}{=} |T|.$
\end{defn}

For the third definition of ordinal envy, we need some notation. 
Given an ordering $\pi$ and a valuation $\valuation$ of  $n$ items, let $\valuation \sim \pi$  denote that $\valuation$ is consistent with the ordering $\pi$. Moreover,  for any valuation $\valuation$, let $\pi_{\valuation}$ denote its induced ordering. We define ordinal envy to be the worst possible cardinal envy that is consistent with $\pi_{\valuation}$. 

\begin{defn}[Ordinal Envy, Third Definition]
\label{defn:OEnvyEqualsWorstVEnvy}
Suppose we are given $n$ indivisible items and a player's valuation $\valuation = (v_1,\cdots,v_n)$ of the items. 
Let $\pi_{\valuation}$ be the ordering induced by the valuation $\valuation$, i.e. $\valuation \sim \pi_{\valuation}$.
The ordinal envy of the player  on getting  subset $S \subseteq [n]$ is defined as the worst possible cardinal envy that is consistent with $\pi_{\valuation}$, i.e.
\[
\envy_O(\valuation, S) ~\overset{\Def}{=} \sup_{\valuation' \sim \pi_{\valuation}, \valuation' \in [0,1]^n} \envy_C(\valuation', S).
\]
\end{defn}

While the above three definitions of ordinal envy appear very different, the following theorem states that they are  equivalent. We defer the proof of this theorem to Section~\ref{subsec:EquivProof}, 

\begin{theorem}
\label{thm:OrdEnvyEquiv}
The three definitions of ordinal envy in Definition~\ref{defn:OrdinalEnvy},~\ref{defn:OrdinalEnvyEquiv}, and~\ref{defn:OEnvyEqualsWorstVEnvy} are equivalent.
\end{theorem}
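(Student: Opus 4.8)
The plan is to prove a cycle of implications, establishing that all three quantities $\envy_O(\valuation,S)$ from Definitions~\ref{defn:OrdinalEnvy}, \ref{defn:OrdinalEnvyEquiv}, and \ref{defn:OEnvyEqualsWorstVEnvy} coincide. Throughout, fix the relabeling $v_1 > v_2 > \cdots > v_n$, write $\overline{S} = [n]\setminus S$, and for $t \ge 0$ let $a(t) = |\overline{S}\cap[t]| - |S\cap[t]|$ so that Definition~\ref{defn:OrdinalEnvy} reads $\envy_O = \max_{t\ge 0} a(t)$. I will first show Definition~\ref{defn:OrdinalEnvy} $=$ Definition~\ref{defn:OrdinalEnvyEquiv}, then Definition~\ref{defn:OrdinalEnvy} $\le$ Definition~\ref{defn:OEnvyEqualsWorstVEnvy}, and finally Definition~\ref{defn:OEnvyEqualsWorstVEnvy} $\le$ Definition~\ref{defn:OrdinalEnvyEquiv}, which closes the loop.

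\textbf{Step 1: Definition~\ref{defn:OrdinalEnvy} equals Definition~\ref{defn:OrdinalEnvyEquiv}.} The cancellation procedure processes items in decreasing value order; I would track the set $T$ as a function of the ``time'' index $t$, i.e. after the algorithm has considered all items with index $\le t$. One observes that the procedure greedily matches each $i\in S$ to the largest-valued so-far-unmatched item of $\overline S$ that is cheaper than $v_i$; since it scans in decreasing order, the item matched to $i$ always has index $> i$. The key invariant is that, restricted to the prefix $[t]$, the number of unmatched $\overline S$-items equals $\max(0, \max_{t'\le t} a(t'))$ — this is exactly the classical ``running-maximum of a $\pm1$ walk counts unmatched open parentheses'' fact, with $\overline S$-elements acting as open brackets and $S$-elements as close brackets (reading indices in increasing order corresponds to reading the bracket string). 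Since every unmatched $\overline S$-item in $[t]$ stays unmatched forever (future $S$-items have larger index, hence appear earlier in the scan and are already committed), $|T|$ at termination equals $\max_{t} a(t)$. This is a short induction on $t$.

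\textbf{Step 2: Definition~\ref{defn:OrdinalEnvy} $\le$ Definition~\ref{defn:OEnvyEqualsWorstVEnvy}.} Let $k = \max_t a(t)$, achieved at some prefix $[t^\star]$. I would construct a consistent valuation $\valuation' \sim \pi_\valuation$ witnessing cardinal envy at least $k$: make the top $t^\star$ items all have value close to $1$ and the remaining items have value close to $0$ (respecting strict decreasing order within each block by infinitesimal perturbations). Then $v'^{1}(\overline S) - v'^1(S) \approx |\overline S \cap [t^\star]| - |S\cap[t^\star]| = k$ in the first player's view (taking the player in question to be player $1$), so $\envy_C(\valuation',S) \ge k - o(1)$; taking the supremum and using that values lie in the closed-under-limits set gives $\sup \ge k$. (One must be slightly careful: the envy in~\eqref{eq:envyDefn} is a max over two players, but that only helps the lower bound.)

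\textbf{Step 3: Definition~\ref{defn:OEnvyEqualsWorstVEnvy} $\le$ Definition~\ref{defn:OrdinalEnvyEquiv}.} This is the direction I expect to be the main obstacle. Fix any $\valuation' \sim \pi_\valuation$ with values in $[0,1]$; I must show $v'(\overline S) - v'(S) \le |T|$ where $T$ is the output of the cancellation procedure (and symmetrically for the other player, but that case is identical by swapping $S,\overline S$ — note the procedure is not symmetric, so I should state both and verify the analogous bound). The idea: the cancellation procedure produces a partial matching $M$ between $S$ and $\overline S\setminus T$ such that each matched pair $(i,j)\in M$ has $i\in S$, $j\in\overline S$, and $v_i > v_j$, hence (by consistency of $\valuation'$ with the same order) $v'_i \ge v'_j$. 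Therefore
\[
v'(\overline S) - v'(S) = \sum_{(i,j)\in M}(v'_j - v'_i) + \sum_{j\in T} v'_j - \sum_{i\in S\setminus M(S)} v'_i \le 0 + |T|\cdot 1 - 0 = |T|,
\]
using $v'_j \le v'_i$ on matched pairs, $v'_j \le 1$ on $T$, and $v'_i \ge 0$ on the unmatched $S$-items. The one subtlety to nail down is that the procedure really does match \emph{every} item of $\overline S \setminus T$ to a distinct item of $S$ with strictly larger value; this follows from Step~1's analysis of which items end up in $T$. Combining Steps 1--3 yields the three-way equality and completes the proof.
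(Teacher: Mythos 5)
Your proposal follows essentially the same three-part structure as the paper (which proves Definition~\ref{defn:OrdinalEnvy} $=$ Definition~\ref{defn:OrdinalEnvyEquiv} as one lemma, then Definition~\ref{defn:OrdinalEnvyEquiv} $=$ Definition~\ref{defn:OEnvyEqualsWorstVEnvy} in two directions as another); your Steps~2 and~3 are, modulo notation, the paper's two directions for the second lemma (the near-extremal ``block'' valuation construction, and the matching-based decomposition of $v'(\overline S)-v'(S)$). The only substantive difference is your Step~1 framing.

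On Step~1: the ``running maximum of a $\pm 1$ walk counts unmatched opens'' intuition is the right picture, but the cancellation procedure of Definition~\ref{defn:OrdinalEnvyEquiv} is \emph{not} literally the classical stack-based left-to-right parenthesis matching — it processes $S$-items in increasing index order and matches each to the smallest available $j>i$, whereas classical matching (read in either direction) uses a ``most recent unmatched'' rule with a different processing order. These two greedy rules do produce the same set $T$, but this coincidence itself needs an argument (e.g., both are maximum matchings in the bipartite graph on $S\times\overline S$ with edges $i<j$, or one can redo essentially the paper's $j^*$-argument at each prefix). As written, you assert the invariant ``$|T\cap[t]\cap\overline S|=\max(0,\max_{t'\le t}a(t'))$'' and call it a ``short induction,'' but the induction step is exactly where the paper spends its effort (the observation that no cancellation pair straddles $j^*$, and that every $i\in S\cap[j^*]$ is matched inside $[j^*]$ because $j^*$ stays in $T$). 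Also, the parenthetical ``future $S$-items have larger index, hence appear earlier in the scan'' is backwards: larger index means smaller value, hence \emph{later} in the scan; what you actually want is that $S$-items with index $>t$ can only match to $\overline S$-items with index $>t$, so they never touch $\overline S\cap[t]$.

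Two smaller points. In Step~2, the remark about ``the max over two players'' is a red herring: Definition~\ref{defn:OEnvyEqualsWorstVEnvy} uses the single-player $\envy_C(\valuation',S)=\max\{v'(\overline S)-v'(S),0\}$, so there is no second player at this level of the argument (the two-player reduction only appears later in the proof of Theorem~\ref{thm:envyMin}). Similarly, the parenthetical in Step~3 about verifying ``the analogous bound for the other player'' is unnecessary for the same reason; you only need $v'(\overline S)-v'(S)\le|T|$, plus the trivial $0\le|T|$. Once the Step~1 gap is filled along the paper's lines, the proposal is correct.
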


We first give some simple corollaries that will be useful in later proofs.
The first corollary states that the ordinal envy only depends on the ordering of the valuations. 
This follows directly from Definition~\ref{defn:OEnvyEqualsWorstVEnvy}.
 
\begin{cor}
\label{cor:OEnvyDependsonOrder}
For any ordering $\pi$, any valuations $\valuation \sim \pi$ and $\valuation' \sim \pi$ that are consistent with $\pi$, and any subset $S \subseteq [n]$ of items, we have
\[
\envy_O(\valuation, S) ~=~ \envy_O(\valuation',S).
\]
\end{cor}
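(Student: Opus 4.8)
The plan is to read off the result directly from the third formulation of ordinal envy, Definition~\ref{defn:OEnvyEqualsWorstVEnvy}, which expresses $\envy_O(\valuation, S)$ purely in terms of the ordering induced by the valuation vector: $\envy_O(\valuation, S) = \sup_{\valuation'' \sim \pi_\valuation,\, \valuation'' \in [0,1]^n} \envy_C(\valuation'', S)$. First I would observe that, since the valuations are assumed distinct ($v_j \neq v_{j'}$ for $j \neq j'$), every valuation vector induces a well-defined total order, and the relation ``$\valuation \sim \pi$'' is exactly the statement $\pi_\valuation = \pi$. Consequently, if $\valuation \sim \pi$ and $\valuation' \sim \pi$, then $\pi_\valuation = \pi = \pi_{\valuation'}$.

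Given this, the supremum defining $\envy_O(\valuation, S)$ and the supremum defining $\envy_O(\valuation', S)$ range over literally the same set of vectors $\{\valuation'' \in [0,1]^n : \valuation'' \sim \pi\}$ and optimize the same objective $\valuation'' \mapsto \envy_C(\valuation'', S)$; hence the two suprema are equal, which gives $\envy_O(\valuation, S) = \envy_O(\valuation', S)$. There is no genuine obstacle in this argument: the only ingredient that is not completely immediate from Definition~\ref{defn:OEnvyEqualsWorstVEnvy} itself is that this definition is a legitimate characterization of the ordinal envy used elsewhere, which is precisely Theorem~\ref{thm:OrdEnvyEquiv} (proved in Section~\ref{subsec:EquivProof}). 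Once that equivalence is available, the corollary follows in one line.
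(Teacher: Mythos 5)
Your proof is correct and matches the paper's own argument: the paper likewise remarks that Corollary~\ref{cor:OEnvyDependsonOrder} "follows directly from Definition~\ref{defn:OEnvyEqualsWorstVEnvy}," and you have simply spelled out why — namely that $\valuation \sim \pi$ and $\valuation' \sim \pi$ force $\pi_{\valuation} = \pi_{\valuation'}$, so the two suprema in Definition~\ref{defn:OEnvyEqualsWorstVEnvy} range over the identical set. Your added remark that this implicitly leans on Theorem~\ref{thm:OrdEnvyEquiv} (so that Definition~\ref{defn:OEnvyEqualsWorstVEnvy} is a valid characterization of $\envy_O$) is a fair observation and consistent with the paper's sequencing, where the equivalence theorem is stated and proved before this corollary is invoked.
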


The second corollary states that for any valuation $\valuation \in [0,1]^n$, the ordinal envy always upper bounds the cardinal envy. This is again an immediate consequence of Defintion~\ref{defn:OEnvyEqualsWorstVEnvy}. 

\begin{cor}
\label{cor:OEnvyUpperBoundsVEnvy}
For any subset of items $S \subseteq [n]$ and for any valuation $\valuation \in [0,1]^n$  where $v_j \neq v_{j'}$ for all  $j \neq j'$, the  ordinal envy upper bounds the cardinal envy, i.e.,
\[
\envy_O(\valuation,S) \quad \geq \quad \envy_C(\valuation,S)   \quad \overset{\Def}{=} \quad \max\{ v(\overline{S}) - v(S), 0\}.
\]
\end{cor}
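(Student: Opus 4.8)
\textbf{Proof proposal for Corollary~\ref{cor:OEnvyUpperBoundsVEnvy}.} The plan is to read off the bound directly from the third characterization of ordinal envy, namely Definition~\ref{defn:OEnvyEqualsWorstVEnvy}, which expresses $\envy_O(\valuation, S)$ as a supremum of $\envy_C(\valuation', S)$ over all valuations $\valuation'$ that (a) induce the same ordering $\pi_{\valuation}$ as $\valuation$ and (b) lie in $[0,1]^n$. The key observation is simply that the given valuation $\valuation$ is itself an admissible point in this supremum: it trivially satisfies $\valuation \sim \pi_{\valuation}$, and it lies in $[0,1]^n$ by the hypothesis of the corollary. Hence the particular term $\envy_C(\valuation, S)$ occurs in the family over which the supremum is taken, so
\[
\envy_O(\valuation, S) ~=~ \sup_{\valuation' \sim \pi_{\valuation},\, \valuation' \in [0,1]^n} \envy_C(\valuation', S) ~\geq~ \envy_C(\valuation, S).
\]

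To finish, I would just unpack $\envy_C(\valuation, S)$. Since the cardinal envy of a single player is $\envy_C(\valuation, S) = \max\{v(\overline{S}) - v(S), 0\}$, the displayed inequality is exactly the claimed bound $\envy_O(\valuation, S) \geq \max\{v(\overline{S}) - v(S), 0\}$; in particular it also records that ordinal envy is non-negative. Strictly speaking this argument uses Definition~\ref{defn:OEnvyEqualsWorstVEnvy} as the operative definition of $\envy_O$, which is justified by Theorem~\ref{thm:OrdEnvyEquiv} (equivalence of the three definitions), so I would cite that theorem for the reader who is working with Definition~\ref{defn:OrdinalEnvy} or~\ref{defn:OrdinalEnvyEquiv} instead.

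There is essentially no obstacle here: the entire content is that a supremum dominates any single evaluation in its index set, so the proof is a one-liner once the right definition is in hand. The only thing to be mildly careful about is the requirement $\valuation' \in [0,1]^n$ in Definition~\ref{defn:OEnvyEqualsWorstVEnvy}, which is why the corollary hypothesizes $\valuation \in [0,1]^n$ (and distinct coordinates, so that $\pi_{\valuation}$ is well defined); without the range restriction the claim could fail, but with it the inclusion $\valuation$ into the feasible set of the supremum is immediate.
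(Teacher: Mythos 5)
Your argument is exactly the paper's: the paper states that the corollary is an immediate consequence of Definition~\ref{defn:OEnvyEqualsWorstVEnvy}, and you have spelled out precisely why — $\valuation$ is itself a feasible point in the supremum defining $\envy_O(\valuation,S)$, so the supremum dominates $\envy_C(\valuation,S)$. Correct and same approach.
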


\subsection{Proving the Equivalence of Ordinal Envy Definitions}
\label{subsec:EquivProof}

In this section, we prove Theorem~\ref{thm:OrdEnvyEquiv} which shows the equivalence of the three definitions of ordinal envy.

\begin{proofof}{Theorem~\ref{thm:OrdEnvyEquiv}}
We start by proving the equivalence of Definition~\ref{defn:OrdinalEnvy} and Definition~\ref{defn:OrdinalEnvyEquiv}.
\begin{lemma} 
\label{lem:OrdinalEnvyEquiv}
Definition~\ref{defn:OrdinalEnvy} is equivalent to Definition~\ref{defn:OrdinalEnvyEquiv}.
\end{lemma}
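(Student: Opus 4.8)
The statement to prove is Lemma~\ref{lem:OrdinalEnvyEquiv}: that the prefix-based Definition~\ref{defn:OrdinalEnvy} of ordinal envy coincides with the cancellation-procedure Definition~\ref{defn:OrdinalEnvyEquiv}. The plan is to relabel the items so that $v_1 > v_2 > \cdots > v_n$ once and for all, so that both definitions are phrased over the same linearly ordered ground set $[n]$. Then I want to show that the quantity $|T|$ returned by the cancellation procedure equals $\max_{t\ge 0}\{|\overline S\cap[t]| - |S\cap[t]|\}$. I will prove this by exhibiting, for the particular run of the greedy cancellation procedure, a single prefix $[t^\star]$ that simultaneously (a) witnesses the max in Definition~\ref{defn:OrdinalEnvy}, i.e. $|\overline S\cap[t^\star]| - |S\cap[t^\star]| = |T|$, and (b) is an upper bound, i.e. $|\overline S\cap[t]| - |S\cap[t]| \le |T|$ for every $t$.

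\textbf{Key steps.} First I would reformulate the cancellation procedure combinatorially: scanning the sorted list $1,2,\dots,n$ from the top (largest value) down, each element of $S$ must be matched to the next-available \emph{lower-indexed-in-value}, i.e. strictly smaller-value, element of $\overline S$ that has not yet been used. This is exactly a greedy bracket-matching: think of each $i\in S$ as an open bracket ``$($'' and each $j\in\overline S$ as a close bracket ``$)$'', written in increasing order of index $1,\dots,n$; the procedure matches each ``$($'' to the nearest later unmatched ``$)$''. (One subtlety: the procedure as written picks $j = \arg\max_{j\in T}\{v_j : v_j < v_i\}$, i.e. the largest-value available close bracket below $i$ — I should check this greedy choice yields the same matched set as nearest-later matching; it does, by a standard exchange argument, since the matched ``$)$'' set is the same regardless of which specific later brackets are chosen.) The leftover set $T$ is then precisely the set of unmatched close brackets. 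Second, I invoke the classical fact that in greedy bracket matching, the number of unmatched close brackets equals $\max_{t}\,(\#\text{close brackets in prefix } [t] - \#\text{open brackets in prefix }[t])$, i.e. $\max_{t\ge0}\{|\overline S\cap[t]| - |S\cap[t]|\}$, with the max attained at any prefix achieving the running maximum of ``closes minus opens.'' That is verbatim the right-hand side of Definition~\ref{defn:OrdinalEnvy}.

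\textbf{Carrying it out cleanly.} Rather than quoting the bracket lemma as a black box, I would give a two-line self-contained argument. For the lower bound $|T|\ge \max_t\{|\overline S\cap[t]|-|S\cap[t]|\}$: fix any $t$; every element of $\overline S\cap[t]$ that gets matched is matched to some element of $S$ with strictly smaller value, hence to some element of $S\cap[t]$ (matched partners lie below in value, hence at a larger index, but wait — I must be careful about the direction). Let me reorganize: after relabeling $v_1>\dots>v_n$, ``smaller value'' means ``larger index.'' An $i\in S$ is matched to some $j\in\overline S$ with $j>i$. So in prefix $[t]$, every matched $j\in\overline S\cap[t]$ has its partner $i\in S$ with $i<j\le t$, so $i\in S\cap[t]$; distinct $j$'s have distinct partners, giving an injection from (matched elements of $\overline S\cap[t]$) into $S\cap[t]$. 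Hence $\#(\text{matched }\overline S\cap[t]) \le |S\cap[t]|$, so $|\overline S\cap[t]| - |S\cap[t]| \le |\overline S\cap[t]| - \#(\text{matched in }\overline S\cap[t]) = \#(\text{unmatched }\overline S\cap[t]) \le |T|$. For the matching upper bound, I pick $t^\star$ to be the largest index such that every element of $\overline S\cap[t^\star]$ is unmatched (equivalently, the prefix right before the first matched $\overline S$ element after all the ``surplus'' closes — I'd identify $t^\star$ precisely via the run of the procedure) and verify $|\overline S\cap[t^\star]| - |S\cap[t^\star]| = |T|$; the cleanest choice is the position of the last element of $T$, or argue that at the moment the procedure finishes, the unmatched closes are exactly those in some prefix after deleting matched pairs.

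\textbf{Expected main obstacle.} The genuine content is the upper-bound direction — producing the specific prefix $t^\star$ that is tight — and verifying that the $\arg\max$ tie-breaking rule in Definition~\ref{defn:OrdinalEnvyEquiv} (take the \emph{largest} available value below $v_i$) produces the same leftover set $T$ as the canonical nearest-unmatched matching. I expect to handle the latter with a short exchange/induction argument (swapping two matched pairs that ``cross'' does not change which $\overline S$-elements are matched), and the former by tracking the procedure's state: the set $T$ at termination is obtained from $\overline S$ by removing $|S|$ elements, one per element of $S$, each removed element lying below its $S$-partner in value; the surviving elements of $T$ are exactly those $j\in\overline S$ such that the number of $\overline S$-elements at values $\ge v_j$ exceeds the number of $S$-elements at values $> v_j$ — which rearranges directly into the prefix expression. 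I'd then note the chain Definition~\ref{defn:OrdinalEnvy} $=$ Definition~\ref{defn:OrdinalEnvyEquiv}, and (elsewhere in the proof of Theorem~\ref{thm:OrdEnvyEquiv}) close the loop with Definition~\ref{defn:OEnvyEqualsWorstVEnvy} by a separate argument.
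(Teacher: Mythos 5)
Your approach is the same as the paper's: relabel so $v_1 > \cdots > v_n$, establish $|\overline{S}\cap[t]| - |S\cap[t]| \le |T|$ for every $t$ via the injection (each matched $j \in \overline{S}\cap[t]$ has partner $i \in S$ with $i < j \le t$), and exhibit a tight prefix $t^*$ equal to the index of the smallest-value element of $T$, which is exactly the paper's $j^*$. Two small points worth fixing in a writeup. First, your initial candidate for the tight prefix --- ``the largest index $t^*$ such that every element of $\overline{S}\cap[t^*]$ is unmatched'' --- is not correct: with $n=3$, $S = \{1\}$, $\overline{S} = \{2,3\}$, the procedure matches item $1$ to item $2$ and leaves $T = \{3\}$, so that candidate yields $t^* = 1$ with $|\overline{S}\cap[1]| - |S\cap[1]| = -1 \ne 1 = |T|$; the correct choice is the one you fall back on (index of the last, i.e.\ smallest-value, element of $T$). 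Second, your worry about reconciling the $\arg\max$ rule with nearest-unmatched matching is a non-issue: after sorting, $\arg\max_{j \in T}\{v_j : v_j < v_i\}$ is precisely the smallest index $j > i$ still in $T$, i.e.\ the nearest available close bracket to the right, so there is nothing to verify by an exchange argument. With those two corrections your plan reproduces the paper's proof; the paper also separately handles the degenerate case $T = \emptyset$ (where $t^* = 0$ is the tight prefix), which you would want to note explicitly.
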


\begin{proof}
We assume the items are relabelled such that $ v_1 > v_2 > \cdots > v_n$. To prove the equivalence, we consider the outcome $T \subseteq [n]$ of the cancellation procedure in Definition~\ref{defn:OrdinalEnvyEquiv}. We prove  that 
\begin{align}\label{eq:eqOrdDefns}
\max_{t \geq 0} \Big\{ |\overline{S} \cap [t]| - |S \cap [t]| \Big\} ~= ~|T|.
\end{align}
Denote $P$ the set of all pairs of cancellation, i.e., $(i,j) \in P$ if and only if $i \in S$, $j \in \overline{S}$ and $i$ cancels $j$ in the cancellation procedure.
We call an item $i \in S$ \emph{matched} if there exists $j \in \overline{S}$ s.t. $(i,j) \in P$.
If $T = \emptyset$, then for any $t \in [n]$ we have that each $j \in [t] \cap \overline{S}$ gets canceled by some matched item $i \in S$ with $v_i > v_j$. Since $v_i > v_j$, we know that item $i \in [t]$. Thus, we have 
\[
|\overline{S} \cap [t]| - |S \cap [t]| \quad \leq \quad 0 \quad=\quad |T|.
\]
In particular, picking $t = 0$ gives $|\overline{S} \cap [0]| - |S \cap [0]| = 0 = |T|$.
Therefore, if $T = \emptyset$ then~\eqref{eq:eqOrdDefns} is true.

From now on we assume without loss of generality that $T \neq \emptyset$.
Consider $j^* = \arg \min_{j \in T} \{v_j\}$.
Notice that there is no pair  $(i,j) \in P$ with $v_i > v_{j^*}$ but $v_j < v_{j^*}$. This is because otherwise $i \in S$ would have canceled $j^* \in \overline{S}$ instead of $j \in \overline{S}$.
This implies that any matched item $i \in S$ with value $v_i > v_{j^*}$ cancels some other item $j \in \overline{S}$ with $v_j > v_{j^*}$. 
Also notice that any matched item $i \in S$ with value $v_i < v_{j^*}$ cancels an item $j \in \overline{S}$ with $v_j < v_i < v_j^*$.
Therefore, cancellation happens either completely inside the subset of items $[j^*]$ or completely outside $[j^*]$.
By definition of $j^*$, each $j \in T$ satisfies that $v_j \in [j^*]$. 
We therefore have
\begin{align}
\label{eqn:envyOMaxInterval}
|\overline{S} \cap [j^*]| - |S \cap [j^*]|  \quad = \quad  |T|.
\end{align}
Now notice that for each $t \in [n]$ and each $j \in \overline{S} \cap [t]$ that gets canceled by some  $i \in S$, we must have $v_i > v_j$. This implies that $i \in S \cap [t]$. Thus for any $t \in [n]$, we have 
\[
\big|\overline{S} \cap [t] \big| - \big|S \cap [t]\big| ~\leq~ |T|.
\]
Since by~\eqref{eqn:envyOMaxInterval} we can achieve equality in this equation for $t = j^*$, this again implies~\eqref{eq:eqOrdDefns} is true, finishing the proof of Lemma~\ref{lem:OrdinalEnvyEquiv}. 
\end{proof}

Now we prove that Definition~\ref{defn:OrdinalEnvyEquiv} is equivalent to Definition~\ref{defn:OEnvyEqualsWorstVEnvy}. This will complete the proof of the theorem.

\begin{lemma}
\label{lem:Def2EquivDef3}
Definition~\ref{defn:OrdinalEnvyEquiv} is equivalent to Definition~\ref{defn:OEnvyEqualsWorstVEnvy}.
\end{lemma}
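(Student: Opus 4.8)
The plan is to evaluate the supremum in Definition~\ref{defn:OEnvyEqualsWorstVEnvy} in closed form and recognize the result as the quantity appearing in Definition~\ref{defn:OrdinalEnvy}; since Lemma~\ref{lem:OrdinalEnvyEquiv} already identifies Definition~\ref{defn:OrdinalEnvy} with Definition~\ref{defn:OrdinalEnvyEquiv}, this will prove the lemma. Relabel the items so that $v_1 > v_2 > \cdots > v_n$, set $c_j = +1$ if $j \in \overline{S}$ and $c_j = -1$ if $j \in S$, and for $0 \le t \le n$ write $C_t := \sum_{j=1}^{t} c_j = |\overline{S} \cap [t]| - |S \cap [t]|$, so $C_0 = 0$. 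I will show
\[
\sup_{\valuation' \sim \pi_{\valuation},\, \valuation' \in [0,1]^n} \big(v'(\overline{S}) - v'(S)\big) ~=~ \max_{0 \le t \le n} C_t .
\]
Because $\max\{\cdot,0\}$ is monotone, this gives $\envy_O(\valuation,S) = \sup_{\valuation'} \envy_C(\valuation',S) = \max_{0\le t\le n} C_t$ under Definition~\ref{defn:OEnvyEqualsWorstVEnvy}, which is exactly $\envy_O(\valuation,S)$ under Definition~\ref{defn:OrdinalEnvy} (note $\max_{t\ge 0}$ equals $\max_{0\le t\le n}$ since $[t]=[n]$ for $t\ge n$).

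For the ``$\le$'' direction, fix any $\valuation' \sim \pi_{\valuation}$, so $1 \ge v'_1 > v'_2 > \cdots > v'_n \ge 0$. Put $v'_{n+1} := 0$ and $w_k := v'_k - v'_{k+1} \ge 0$ for $k \in [n]$; then $v'_j = \sum_{k=j}^{n} w_k$ and $\sum_{k=1}^{n} w_k = v'_1 \le 1$. Exchanging the order of summation,
\[
v'(\overline{S}) - v'(S) ~=~ \sum_{j=1}^{n} c_j v'_j ~=~ \sum_{k=1}^{n} w_k\, C_k ~\le~ \Big(\sum_{k=1}^{n} w_k\Big)\cdot \max_{1\le k\le n} C_k ~\le~ \max_{1\le k\le n} C_k ,
\]
where if $\max_k C_k < 0$ the penultimate quantity is still $\le 0$ since every $w_k\ge 0$. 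Taking the maximum with $C_0 = 0$ yields $v'(\overline S) - v'(S) \le \max_{0\le t\le n} C_t$ for every admissible $\valuation'$.

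For the ``$\ge$'' direction, if $\max_{0\le t\le n} C_t \le 0$ there is nothing to prove (the supremum of $\envy_C$ is at least $0$). Otherwise choose $t^{\star} \ge 1$ with $C_{t^{\star}} = \max_{0\le t\le n} C_t \ge 1$, and for small $\epsilon > 0$ set $v'_j := 1 - \epsilon j$ for $j \le t^{\star}$ and $v'_j := \epsilon^2 (n+1-j)$ for $j > t^{\star}$. For $\epsilon$ small enough this lies in $[0,1]^n$ and is strictly decreasing (each block is internally decreasing and $v'_{t^{\star}} \ge 1 - \epsilon n > \epsilon^2 n \ge v'_{t^{\star}+1}$), hence $\valuation' \sim \pi_{\valuation}$, and
\[
v'(\overline{S}) - v'(S) ~=~ C_{t^{\star}} - \epsilon \sum_{j\le t^{\star}} c_j\, j + \epsilon^2 \sum_{j> t^{\star}} c_j (n+1-j) ~=~ C_{t^{\star}} - O(\epsilon).
\]
Letting $\epsilon \to 0$ shows the supremum is at least $C_{t^{\star}}$; with the ``$\le$'' direction this proves the displayed identity, and invoking Lemma~\ref{lem:OrdinalEnvyEquiv} completes the proof.

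The content is the standard fact that a linear functional over the order simplex $\{1\ge v'_1\ge\cdots\ge v'_n\ge 0\}$ is maximized at a $0/1$ prefix vector; I anticipate no conceptual difficulty, only the bookkeeping of the degenerate case $\max_t C_t \le 0$ and the fact that the optimal prefix vector is not strictly decreasing, which forces the small perturbation and the passage to the limit --- legitimate precisely because Definition~\ref{defn:OEnvyEqualsWorstVEnvy} is phrased with a supremum rather than a maximum.
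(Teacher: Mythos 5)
Your proof is correct. The lower-bound construction (a near--prefix-indicator valuation, letting the perturbation go to zero) is essentially the same as the paper's. The upper bound, however, takes a genuinely different and cleaner route: the paper proves $v'(\overline{S}) - v'(S) \le \envy_O(\valuation',S)$ directly from Definition~\ref{defn:OrdinalEnvyEquiv}, decomposing the signed sum over the uncancelled set $T$, the cancellation pairs $P$, and the unmatched set $S'$ and bounding each piece separately, whereas you bypass the cancellation procedure entirely via Abel summation on the order simplex: writing $v'_j = \sum_{k \ge j} w_k$ with $w_k \ge 0$ and $\sum_k w_k \le 1$ turns $\sum_j c_j v'_j$ into $\sum_k w_k C_k$, a sub-convex combination of prefix sums, hence bounded by $\max_{0\le t\le n} C_t$, which is exactly the quantity in Definition~\ref{defn:OrdinalEnvy}. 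You then invoke Lemma~\ref{lem:OrdinalEnvyEquiv} to pass from Definition~\ref{defn:OrdinalEnvy} to Definition~\ref{defn:OrdinalEnvyEquiv}; this is legitimate and parallels the paper, which itself invokes Lemma~\ref{lem:OrdinalEnvyEquiv} in its lower-bound step, so there is no circularity. Your version trades the combinatorial bookkeeping of the cancellation matching for the standard observation that the extreme points of the order simplex $\{1\ge v'_1\ge\cdots\ge v'_n\ge 0\}$ are prefix indicators, which makes extremality transparent and would adapt more readily to variants; the paper's version is more self-contained with respect to Definition~\ref{defn:OrdinalEnvyEquiv} and exhibits the optimal matching explicitly.
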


\begin{proof}
We assume that the items are relabelled such that  $ v_1 > v_2 > \cdots > v_n$. 
Throughout the proof of the lemma, we use $\envy_O(\valuation,S)$ to denote the ordinal envy as in Definition~\ref{defn:OrdinalEnvyEquiv}.
To prove Lemma~\ref{lem:Def2EquivDef3}, we only need to prove 
\begin{align}
\label{eqn:GetDefn3}
\envy_O(\valuation, S) ~= \sup_{\valuation' \sim \pi_{\valuation}, \valuation' \in [0,1]^n} \envy_C(\valuation', S).
\end{align}

We start by showing that for any $\valuation' \in [0,1]^n$ such that $\valuation' \sim \pi_{\valuation}$, we have
\begin{align}
\label{eqn:OEnvyUpperBoundsVEnvy}
\envy_O(\valuation,S) \quad \geq \quad \envy_C(\valuation',S)   \quad \overset{\Def}{=} \quad \max\{ v'(\overline{S}) - v'(S), 0\}.
\end{align}

Notice we only need to prove $v'(\overline{S}) - v'(S) \leq \envy_O(\valuation',S)$ since $\envy_O(\valuation, S) \geq 0$ and that $\envy_O(\valuation, S) = \envy_O(\valuation', S)$ by Definition~\ref{defn:OrdinalEnvyEquiv}.
Consider the outcome $T$ of the cancellation procedure in Definition~\ref{defn:OrdinalEnvyEquiv} on valuation $\valuation'$. Denote $P$ the set of all pairs of cancellation, i.e. $(i,j) \in P$ if and only if $i \in S$, $j \in \overline{S}$ and $i$ cancels $j$ during the cancellation process. Denote
\[
S' ~\overset{\Def}{=}~ \big\{i \in S \mid \nexists j \in \overline{S} \text{ s.t. } (i,j) \in P\big\}
\] 
the set of items in $S$ that are not used to cancel any item in $\overline{S}$. Also recall that for each pair $(i,j) \in P$, we have $v'_i > v'_j$. 
Therefore, we have
\begin{align}
\label{eqn:CardinalEnvyDecomposition}
v'(\overline{S}) - v'(S) ~=~ \sum_{j \in T} v'_j + \sum_{(i,j) \in P} (v'_j - v'_i) - \sum_{i \in S'} v'_i.
\end{align}
Since each $v'_j \in [0,1]$, the first term in~(\ref{eqn:CardinalEnvyDecomposition}) is bounded as $\sum_{j \in T} v'_j \leq |T|$.
The second term in~(\ref{eqn:CardinalEnvyDecomposition}) is bounded as $\sum_{(i,j) \in P} (v'_j - v'_i) \leq 0$. Finally, the third term in~(\ref{eqn:CardinalEnvyDecomposition}) is bounded as $\sum_{i \in S'} v'_i \geq 0$.
We therefore have
\[
v'(\overline{S}) - v'(S) \quad \leq \quad |T| \quad = \quad \envy_O(\valuation', S),
\]
which establishes~(\ref{eqn:OEnvyUpperBoundsVEnvy}).

Notice that~(\ref{eqn:OEnvyUpperBoundsVEnvy}) immediately implies that the ordinal envy as in Definition~\ref{defn:OrdinalEnvyEquiv} satisfies that
\[
\envy_O(\valuation, S) ~\geq~ \sup_{\valuation' \sim \pi_{\valuation}, \valuation' \in [0,1]^n} \envy_C(\valuation', S).
\] 
In the following we show that for any $\delta > 0$, we have
\begin{align}
\label{eqn:WorstVEnvyApproachesOEnvy}
\sup_{\valuation' \sim \pi_{\valuation}, \valuation' \in [0,1]^n} \envy_C(\valuation', S) ~\geq ~ \envy_O(\valuation, S) - \delta.
\end{align}
Once~(\ref{eqn:WorstVEnvyApproachesOEnvy}) is established, (\ref{eqn:GetDefn3}) follows immediately by taking $\delta \rightarrow 0$ and this will finish the proof of the lemma.

We take $\epsilon = \delta/n^2$. By Definition~\ref{defn:OrdinalEnvy} (which is equivalent to Definition~\ref{defn:OrdinalEnvyEquiv} by Lemma~\ref{lem:OrdinalEnvyEquiv}), there exists $t^*$ such that $\envy_O(\valuation, S) = \big|\overline{S} \cap [t^*]\big| - \big|S \cap [t^*] \big|$. We consider the following valuation $\valuation' \in [0,1]^n$ with\footnote{To satisfy $v'_j \neq v'_{j'}$ for all $j \neq j'$, we can instead take $v'_i$ for each $i \notin [t^*]$ to be some different value arbitrarily close to 0 which is consistent with $\pi_{\valuation}$.}
\begin{align}
\label{eqn:BadValuation}
v'_i ~= ~\begin{cases}
1 - i\epsilon \qquad &\text{if $i \in [t^*]$}\\
0 &\text{otherwise}.
\end{cases}
\end{align}
Notice that $\valuation' \sim \pi_{\valuation}$.
Since each item $i \notin [t^*]$ has 0 value in $\valuation'$, we have
\begin{align*}
\envy_C(\valuation', S) 
&= \sum_{i \in \overline{S} \cap [t^*]} v'_i - \sum_{i \in S \cap [t^*]} v'_i \\
&\overset{(\ref{eqn:BadValuation})}{\geq} \sum_{i \in \overline{S} \cap [t^*]} (1- n\epsilon) - \sum_{i \in S \cap [t^*]} 1 \\
&\geq \big|\overline{S} \cap [t^*]\big| - \big|S \cap [t^*] \big| - \epsilon n^2 \quad = \quad \envy_O(\valuation, S) - \delta.
\end{align*}
This establishes~(\ref{eqn:WorstVEnvyApproachesOEnvy}) and finishes the proof of the lemma.
\end{proof}
The last two lemmas imply Theorem~\ref{thm:OrdEnvyEquiv}.
\end{proofof}

\subsection{Discrepancy Upper Bounds Ordinal Envy}
Consider $n$ indivisible items and  a player with valuation $\valuation = (v_1,\cdots, v_n) \in [0,1]^n$ where $v_j \neq v_{j'}$ for  $j \neq j'$. Suppose the player is allocated a subset $S \subseteq [n]$ of  items. If for each item $i \in [n]$ we think of it as an arrival at $v_i \in [0,1]$ and color each element $i \in S$ with $\col(i) = -1$ and each element $i \in \overline{S}$ with $\col(i) = +1$, then we arrive at a coloring $\col$ of an \pbintdisc instance. We denote the discrepancy of the set of all intervals under the coloring $\col$ as $\disc_O(\valuation,S)$.
We show that $\disc_O(\valuation, S)$ always an upper bounds the player's ordinal envy $\envy_O(\valuation,S)$. 

\begin{lemma}
\label{lem:DiscUpperBoundsOEnvy}
For any valuation $\valuation$ of the player and any subset of items $S \subseteq [n]$, we have
\[
\disc_O(\valuation,S)  ~\geq~  \envy_O(\valuation,S).
\]
\end{lemma}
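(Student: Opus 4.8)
The plan is to work with the first characterization of ordinal envy (Definition~\ref{defn:OrdinalEnvy}) and, for each prefix threshold $t$, exhibit a single interval whose imbalance under the coloring $\col$ equals the prefix quantity $|\overline{S}\cap[t]| - |S\cap[t]|$. Concretely, I would first relabel the items so that $v_1 > v_2 > \cdots > v_n$, as in Definition~\ref{defn:OrdinalEnvy}, so that
\[
\envy_O(\valuation,S) ~=~ \max_{t \ge 0}\Big\{ |\overline{S}\cap[t]| - |S\cap[t]| \Big\}.
\]
Recall that in the \pbintdisc instance associated to $(\valuation,S)$ we place item $i$ at the point $v_i \in [0,1]$ and set $\col(i) = -1$ for $i \in S$ and $\col(i) = +1$ for $i \in \overline{S}$, and $\disc_O(\valuation,S)$ is the largest imbalance $|\col(I)|$ over all intervals $I \subseteq [0,1]$, where $\col(I) = \sum_{i : v_i \in I} \col(i)$.

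Next I would fix $t \in [n]$ and choose an interval $I_t = [a_t, 1]$ whose left endpoint $a_t$ lies strictly between $v_{t+1}$ and $v_t$ (with the convention $v_{n+1} := 0$); this is possible since the valuations are distinct, and it guarantees that the points lying in $I_t$ are exactly the items $1, 2, \ldots, t$, i.e.\ the $t$ items of highest value. Then
\[
\col(I_t) ~=~ \sum_{i \in [t]} \col(i) ~=~ |\overline{S}\cap[t]| - |S\cap[t]|,
\]
so that $\disc_O(\valuation,S) \ge |\col(I_t)| \ge |\overline{S}\cap[t]| - |S\cap[t]|$. The boundary case $t = 0$ is covered by the trivial bound $\disc_O(\valuation,S) \ge 0 = |\overline{S}\cap[0]| - |S\cap[0]|$. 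Taking the maximum over all $t \ge 0$ gives $\disc_O(\valuation,S) \ge \envy_O(\valuation,S)$, which is the claim.

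There is essentially no substantial obstacle in this argument; it is a direct translation between the two definitions. The only point deserving a word of care is the use of distinctness of the valuations to ensure that the interval $[a_t,1]$ isolates \emph{exactly} the top-$t$ items, which is precisely what makes the interval imbalance coincide with the prefix imbalance appearing in Definition~\ref{defn:OrdinalEnvy}; since $\envy_O$ (by Theorem~\ref{thm:OrdEnvyEquiv}) depends only on the induced ordering, this suffices.
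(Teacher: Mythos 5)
Your proof is correct and follows essentially the same route as the paper: relabel so values are decreasing, then for each $t$ exhibit an interval of the form $[\cdot,1]$ capturing exactly the top-$t$ items, so that its imbalance equals the prefix quantity in Definition~\ref{defn:OrdinalEnvy}. The only cosmetic difference is that the paper takes the interval to be $[v_t,1]$ directly, whereas you place the left endpoint strictly between $v_{t+1}$ and $v_t$; both isolate $[t]$ because the valuations are distinct.
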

\begin{proof}
We assume that the items are relabelled such that  $ v_1 > v_2 > \cdots > v_n$. We first observe that 
\[
\disc_O(\valuation, S) ~\geq~ \max_{t \geq 0} \Big\{\left| |\overline{S} \cap [t]| - |S \cap [t]| \right| \Big\}.
\]
This is because for any $t \in [n]$, the interval $I = [v_t,1]$ is such that $[t] = \{i \in [n]: v_i \in I\}$. Now by Definition~\ref{defn:OrdinalEnvy}, we have
\[
\envy_O(\valuation,S) \quad = \quad  \max_{t \geq 0} \{|\overline{S} \cap [t]| - |S \cap [t]|\} \quad \leq \quad \max_{t \geq 0} \Big\{\Big| |\overline{S} \cap [t]| - |S \cap [t]| \Big| \Big\}.
\]
This implies that
\[
\envy_O(\valuation,S) \quad \leq \quad \max_{t \geq 0} \Big\{\Big| |\overline{S} \cap [t]| - |S \cap [t]| \Big|\Big\} \quad \leq \quad \disc_O(\valuation, S),
\]
which finishes the proof of the lemma.
\end{proof}

\subsection{Reducing \pbthree to  \pbtwo}
We consider the \pbthree problem defined in Section~\ref{subsec:OnlineEnvyMin}. Recall,  we are given two probability distributions $D_i$ over $[0,1]$ for $i\in \{1,2\}$ and the goal is to allocate $n$ items to  minimize envy where item $j$'s valuation  $v^i_j \sim D_i$. We restate our main result for this problem.

\thmpbthree*

\begin{proofof}{Theorem~\ref{thm:envyMin}}
We assume without loss of generality that the valuation distributions $D_1,D_2$ are continuous\footnote{If either valuation distribution $D_i$ has a point mass, we replace the point mass with a uniform distribution around the neighborhood with the same mass. This affects the cardinal envy by a tiny amount.}. Denote $F_i$ the cumulative density function (CDF) of distribution $D_i$ for $i=1,2$. Recall that by probability integral transformation, $F_i(X)$ has a uniform distribution over $[0,1]$ when $X \sim D_i$. Consider any outcome of the valuations $\valuation^1,\valuation^2$ and any subset $S \subseteq [n]$ of items that are allocated to the first player. Notice that w.p. $1$, we have: (1) $v^i_j \neq v^i_{j'}$ for any $i \in \{ 1,2\}$ and $j \neq j'$, and (2) $v^i_j$ is in the interior of the support of $D_i$ for any $i  \in \{ 1,2\}$ and $j \in [n]$. Now by Corollary~\ref{cor:OEnvyUpperBoundsVEnvy}, we have 
\[
\envy_O(\valuation^1, S) ~\geq~ \envy_C(\valuation^1, S) \qquad \text{and} \qquad \envy_O(\valuation^2, \overline{S}) ~\geq~ \envy_C(\valuation^2, \overline{S}).
\]

We consider valuations $\widetilde{\valuation}^i $ where $\widetilde{v}^i_j = F_i(v^i_j)$ for $i  \in \{ 1,2\}$ and $j \in [n]$. Since the function $F_i:[0,1] \rightarrow [0,1]$ is monotone and each $v^i_j$ is in the interior of the support of $D_i$, we have that  $v^i_j > v^i_{j'}$ implies that $\widetilde{v}^i_j > \widetilde{v}^i_{j'}$ for $i  \in \{ 1,2\}$ and any $j,j' \in [n]$. Therefore, we have from Corollary~\ref{cor:OEnvyDependsonOrder} that 
\[
\envy_O(\widetilde{\valuation}^1, S) ~=~ \envy_O(\valuation^1, S) \qquad \text{and} \qquad \envy_O(\widetilde{\valuation}^2, \overline{S}) ~=~ \envy_O(\valuation^2, \overline{S}).
\]
It immediately follows that 
\[
\envy_O(\widetilde{\valuation}^1, S) ~\geq~ \envy_C(\valuation^1, S) \qquad \text{and} \qquad \envy_O(\widetilde{\valuation}^2, \overline{S}) ~\geq~ \envy_C(\valuation^2, \overline{S}).
\]
Notice that $\widetilde{v}^i_j$ is an independent random variable with uniform distribution on $[0,1]$. If we think of each item $j \in [n]$ as an element that arrives at $(\widetilde{v}^1_j, \widetilde{v}^2_j) \in [0,1] \times [0,1]$, then each element arrives uniformly at random in the unit square $[0,1] \times [0,1]$. Our Theorem~\ref{thm:StripeProblem} for \pbtwo problem gives an algorithm that achieves discrepancy $\disc = O( n^{c/\log \log n} )$ w.h.p. for some universal constant $c$ over the set of stripes $[a,b]\times [0,1]$ and $[0,1]\times [a,b]$ for $0\leq a <b \leq 1$. Using Lemma~\ref{lem:DiscUpperBoundsOEnvy}, we have that 
\[
\disc ~ \geq ~ \envy_O(\widetilde{\valuation}^1, S) \qquad \text{and} \qquad \disc ~ \geq ~ \envy_O(\widetilde{\valuation}^2, \overline{S}),
\]
which implies that 
\[
\disc ~ \geq ~ \envy_C(\valuation^1, S) \qquad \text{and} \qquad \disc ~ \geq ~ \envy_C(\valuation^2, \overline{S}).
\]
It follows that the envy, as defined in~(\ref{eq:envyDefn}), is bounded by $O(n^{c/\log \log n})$ w.h.p. for some universal constant $c$.
This completes the proof of Theorem~\ref{thm:envyMin}.
\end{proofof}

\medskip
\noindent
{\bf Acknowledgments}.
We are grateful to Alex Psomas  and Ariel Procaccia 
for introducing us to the online envy minimization problem. Part of this research was done while the first and the last author were visiting Microsoft Research, Redmond.

\appendix


\section{Missing Proofs in Section~\ref{sec:OnlineTreeBal}}
\label{sec:MissProofSec2}
\begin{proofof}{Fact~\ref{claim:DangerousSetRatioBound}}
We assume without loss of generality that $x \geq \frac{\log n}{\lambda}$ (the case where $x \leq - \frac{\log n}{\lambda}$ is similar). Since $y \in \dang(x)$, we have $|y| \in \left[x - \frac{\log 10}{\lambda}, x + \frac{\log 10}{\lambda} \right]$. It follows that 
\begin{align*}
\frac{\cosh(\lambda x)}{\cosh(\lambda y)} 
\quad \leq \quad \frac{\cosh(\lambda x)}{\cosh(\lambda x-\log 10)} \quad = \quad \frac{(e^{\lambda x} + e^{-\lambda x})/2}{(e^{\lambda x-\log 10} + e^{-\lambda x + \log 10})/2} 
\quad \leq \quad \frac{e^{\lambda x} + 1}{\frac{1}{10} \cdot e^{\lambda x}} 
\quad \leq \quad 11,
\end{align*} 
where in the last inequality we use the fact that $e^{\lambda x} \geq n \gg 1$. Similarly we also have 
\begin{align*}
\frac{\cosh(\lambda y)}{\cosh(\lambda x)} 
\quad \leq \quad \frac{\cosh(\lambda x + \log 10)} {\cosh(\lambda x)}
\quad= \quad \frac{(e^{\lambda x + \log 10} + e^{-\lambda x - \log 10})/2}{(e^{\lambda x} + e^{-\lambda x})/2} 
\quad \leq \quad \frac{10 \cdot e^{\lambda x} + 1}{e^{\lambda x}} 
\quad \leq \quad 11.
\end{align*} 
This finishes the proof of the inequality
$\max \left\{ \frac{\cosh(\lambda x)}{\cosh(\lambda y)} ~,~ \frac{\cosh(\lambda y)}{\cosh(\lambda x)} \right\} ~ \leq ~ 11.$

The proof of the other inequality for $|\sinh(\cdot)|$ is similar.
\end{proofof}

\begin{proofof}{Fact~\ref{claim:DangerousSetNoCancel}}
We assume without loss of generality that $x \geq \frac{\log n}{\lambda}$ (the case where $x \leq - \frac{\log n}{\lambda}$ is similar). Since $y \notin \dang(x)$, we have the following three cases: (1)  $y \geq 0$, or (2) $y \in \big(-x + \frac{\log 10}{\lambda}, 0 \big)$, or (3) $y < -x - \frac{\log 10}{\lambda}$.

\noindent \textbf{Case 1: $y \geq 0$.} In this case $\sinh(\lambda x)$ and $\sinh(\lambda y)$ don't cancel each other and we have
\[
|\sinh(\lambda x) + \sinh(\lambda y)| \quad = \quad |\sinh(\lambda x)| + |\sinh(\lambda y)| \quad \geq \quad \max \big\{|\sinh(\lambda x)|~,~ |\sinh(\lambda y)|\big\}.
\]
So we are done in this case.

\noindent \textbf{Case 2: $y \in \left( -x + \frac{\log 10}{\lambda}, 0 \right)$.} In this case we have $|x| \geq |y|$ which implies that $|\sinh(\lambda x)| \geq |\sinh(\lambda y)|$.
Notice that 
\begin{align*}
\frac{|\sinh(\lambda x)|}{|\sinh(\lambda y)|} 
\quad\geq \quad \frac{|\sinh(\lambda x)|}{|\sinh(\lambda x - \log 10)|} \quad \geq \quad \frac{(e^{\lambda x} - e^{-\lambda x}) / 2}{(e^{\lambda x - \log 10} - e^{-\lambda x + \log 10}) / 2} \quad \geq \quad  \frac{e^{\lambda x} - 1}{\frac{1}{10} \cdot e^{\lambda x}} \quad \geq \quad 9,
\end{align*}
where in the last inequality we use the fact that $e^{\lambda x} \geq n \gg 1$. It follows that
\[
|\sinh(\lambda x) + \sinh(\lambda y)| ~=~ |\sinh(\lambda x)| - |\sinh(\lambda y)| ~\geq~ \frac{8}{9} \cdot |\sinh(\lambda x)| ~=~ \frac{8}{9} \cdot \max \big\{|\sinh(\lambda x)|, |\sinh(\lambda y)|\big\}.
\]
So we are done in this case.

\noindent \textbf{Case 3: $y < -x - \frac{\log 10}{\lambda}$.} In this case we have $|x| \leq |y|$ which implies that $|\sinh(\lambda x)| \leq |\sinh(\lambda y)|$. A similar argument as in the previous case gives $\frac{|\sinh(\lambda y)|}{|\sinh(\lambda x)|} \geq 9$. It follows that
\[
|\sinh(\lambda x) + \sinh(\lambda y)| = |\sinh(\lambda y)| - |\sinh(\lambda x)| \geq \frac{8}{9} \cdot |\sinh(\lambda y)| = \frac{8}{9} \cdot \max \big\{|\sinh(\lambda x)|, |\sinh(\lambda y)|\big\}.
\]
This finishes the proof of Fact~\ref{claim:DangerousSetNoCancel}.
\end{proofof}


\section{Tightness of the Separation Lemma} \label{sec:SepLemmaTightness}

 Consider  an $m$-ary tree with height $h = o(\log n / \log \log n)$ and $m = n^{1/h}$.
We choose $x = \frac{\log n + h \log 10}{\lambda}$ and $d_r = \frac{5 \log n}{\lambda}$. 
This ensures that $\cosh(\lambda x) \ll \cosh(\lambda d_r)$ so we can ignore $\sinh(\lambda x)$ and $\cosh(\lambda x)$ while losing only a tiny additive factor.

For each $i \in [h]$, define $d_{i,-}$ to be such that $\sinh(\lambda d_{i,-}) = -\left( 1 + \frac{1}{\log n} \right)^{i-1} \sinh(\lambda d_r)$ and $d_{i,+}$ to be such that $\sinh(\lambda d_{i,+}) = \left(1 + \frac{1}{\log n} \right)^{i-1} \cdot \frac{\sinh(\lambda d_r)}{\log n}$.
Notice that for any $i \in [h]$, we have $d_{i,-} < 0$ and $d_{i,+} > 0$.
Define $p_i = \frac{|d_{i,-}|}{|d_{i,+}| + |d_{i,-}|}$.

Now we are ready to define the imbalance of the nodes in the tree $\T$.
Denote the depth-$i$ node as $L_i(\T)$ for $i \in [h]$ and denote $L_i(\T,v) \subseteq L_i(\T)$ for $v \in L_{i-1}(\T)$ the set of children of $v$.
Each $s \in L_1(\T)$ has imbalance either $d_{1,-}$ or $d_{1,+}$.
Notice the fraction of $L_1(\T)$ with imbalance $d_{1,+}$ is roughly $p_1$.
For each $s \in L_1(\T)$ with $d_s = d_{1,-}$, we make the rest of the sub-tree $\T_s$ rooted at $s$ to be roughly empty.
Essentially whenever an $s \in L_1(\T)$ with $d_s = d_{1,-}$ is picked by the root-leaf path $\calP_t$, then the $L$ term cancels out and becomes almost 0.
For $s \in L_1(\T)$ with $d_s = d_{1,+}$, we assign imbalance either $d_{2,-}$ or $d_{2,+}$ to nodes in $L_2(\T, s)$.
The fraction of nodes in $L_2(\T,s)$ with imbalance $d_{2,+}$ is roughly $p_2$.
We continue this pattern until we reach depth $h$: whenever we enter a node with negative imbalance, we make it's sub-tree almost empty and if we continue to enter a depth-$i$ node $v$ with positive imbalance, we assign imbalance $d_{i+1,-}$ or $d_{i+1,+}$ to $L_{i+1}(\T,v)$ and we have that the fraction of children with imbalance $d_{i+1,+}$ is roughly $p_{i+1}$.

Now suppose we pick a uniformly random root-leaf path $\calP$ in such a tree $\T$. 
Notice that whenever $\calP$ enters a node with negative imbalance, the $L$ term will become close to 0. 
So the only case where the $L$ term is large is when every node in $\calP$ has positive imbalance but this happens with probability $\prod_{i \in [h]} p_i =  O(1) \cdot 2^{-h}$.
The $L$ term in this case is $\cosh(\lambda d_r) + \sum_{i \in [h]} \cosh(\lambda d_{i,+}) = \left( 1 + \frac{1}{\log n} \right)^h \cosh(\lambda d_r) = O(1) \cdot \cosh(\lambda d_r)$.
Therefore, we have $\E[|L|] = O(1) \cdot 2^{-h} \cdot \cosh(\lambda d_r) \leq  2^{-\Omega(h)} \cdot \E[Q] - hn^2$, where the last inequality follows from the fact that $\cosh(\lambda d_r) \gg hn^2$.

\bibliographystyle{alpha}
\bibliography{bib}

\begin{thebibliography}{AGMW15}

\bibitem[AGMW15]{AzizGMW-AI15}
Haris Aziz, Serge Gaspers, Simon Mackenzie, and Toby Walsh.
\newblock Fair assignment of indivisible objects under ordinal preferences.
\newblock {\em Artificial Intelligence}, 227:71--92, 2015.

\bibitem[AS16]{alonspencer}
Noga Alon and Joel~H Spencer.
\newblock {\em The probabilistic method}.
\newblock John Wiley \& Sons, 2016.

\bibitem[Ban98]{Banaszczyk-Journal98}
Wojciech Banaszczyk.
\newblock {Balancing vectors and Gaussian measures of n-dimensional convex
  bodies}.
\newblock {\em Random Structures \& Algorithms}, 12(4):351--360, 1998.

\bibitem[Ban10]{Bansal-FOCS10}
Nikhil Bansal.
\newblock {Constructive Algorithms for Discrepancy Minimization}.
\newblock In {\em 51th Annual {IEEE} Symposium on Foundations of Computer
  Science, {FOCS} 2010, October 23-26, 2010, Las Vegas, Nevada, {USA}}, pages
  3--10, 2010.

\bibitem[Ban19]{Bansal-Notes19}
Nikhil Bansal.
\newblock {Discrepancy and Combinatorial Optimization Lecture 1-IPCO summer
  school}.
\newblock 2019.

\bibitem[BDG16]{BansalDG16}
Nikhil Bansal, Daniel Dadush, and Shashwat Garg.
\newblock An algorithm for koml{\'{o}}s conjecture matching banaszczyk's bound.
\newblock In {\em {IEEE} 57th Annual Symposium on Foundations of Computer
  Science, {FOCS} 2016, 9-11 October 2016, Hyatt Regency, New Brunswick, New
  Jersey, {USA}}, pages 788--799, 2016.

\bibitem[BDGL18]{BansalDGL18}
Nikhil Bansal, Daniel Dadush, Shashwat Garg, and Shachar Lovett.
\newblock The gram-schmidt walk: a cure for the banaszczyk blues.
\newblock In {\em Proceedings of the 50th Annual {ACM} {SIGACT} Symposium on
  Theory of Computing, {STOC} 2018, Los Angeles, CA, USA, June 25-29, 2018},
  pages 587--597, 2018.

\bibitem[Bec81]{Beck-Combinatorica81}
J{\'o}zsef Beck.
\newblock {Balanced two-colorings of finite sets in the square I}.
\newblock {\em Combinatorica}, 1(4):327--335, 1981.

\bibitem[BF81]{BeckFiala-DAM81}
J{\'o}zsef Beck and Tibor Fiala.
\newblock {``Integer-making'' theorems}.
\newblock {\em Discrete Applied Mathematics}, 3(1):1--8, 1981.

\bibitem[BG17]{BansalG17}
Nikhil Bansal and Shashwat Garg.
\newblock Algorithmic discrepancy beyond partial coloring.
\newblock In {\em Proceedings of the 49th Annual {ACM} {SIGACT} Symposium on
  Theory of Computing, {STOC} 2017, Montreal, QC, Canada, June 19-23, 2017},
  pages 914--926, 2017.

\bibitem[BKPP18]{BenadeKPP-EC18}
Gerdus Benade, Aleksandr~M. Kazachkov, Ariel~D. Procaccia, and
  Christos{-}Alexandros Psomas.
\newblock {How to Make Envy Vanish Over Time}.
\newblock In {\em Proceedings of the 2018 {ACM} Conference on Economics and
  Computation, Ithaca, NY, USA, June 18-22, 2018}, pages 593--610, 2018.

\bibitem[BM19]{BansalMeka-SODA19}
Nikhil Bansal and Raghu Meka.
\newblock {On the discrepancy of random low degree set systems}.
\newblock In {\em Proceedings of the Thirtieth Annual {ACM-SIAM} Symposium on
  Discrete Algorithms, {SODA} 2019, San Diego, California, USA, January 6-9,
  2019}, pages 2557--2564, 2019.

\bibitem[BS13]{Bansal2013}
Nikhil Bansal and Joel Spencer.
\newblock Deterministic discrepancy minimization.
\newblock {\em Algorithmica}, 67(4):451--471, Dec 2013.

\bibitem[BS19]{BansalSpencer-arXiv19}
Nikhil Bansal and Joel~H. Spencer.
\newblock On-line balancing of random inputs.
\newblock {\em CoRR}, abs/1903.06898, 2019.

\bibitem[Bud11]{Budish-Journal11}
Eric Budish.
\newblock The combinatorial assignment problem: Approximate competitive
  equilibrium from equal incomes.
\newblock {\em Journal of Political Economy}, 119(6):1061--1103, 2011.

\bibitem[Cha01]{Chazelle-Book01}
Bernard Chazelle.
\newblock {\em {The discrepancy method: randomness and complexity}}.
\newblock Cambridge University Press, 2001.

\bibitem[EL19]{EzraL19}
Esther Ezra and Shachar Lovett.
\newblock On the beck-fiala conjecture for random set systems.
\newblock {\em Random Struct. Algorithms}, 54(4):665--675, 2019.

\bibitem[ES18]{EldanS18}
Ronen Eldan and Mohit Singh.
\newblock Efficient algorithms for discrepancy minimization in convex sets.
\newblock {\em Random Struct. Algorithms}, 53(2):289--307, 2018.

\bibitem[Fol67]{FoleyEssay67}
Duncan~K Foley.
\newblock Resource allocation and the public sector.
\newblock {\em Yale Econ Essays}, 7:45--98, 1967.

\bibitem[FS18]{Franks19}
Cole Franks and Michael Saks.
\newblock On the discrepancy of random matrices with many columns.
\newblock {\em arXiv}, 1807.04318, 2018.

\bibitem[HR17]{HobergR17}
Rebecca Hoberg and Thomas Rothvoss.
\newblock A logarithmic additive integrality gap for bin packing.
\newblock In {\em Proceedings of the Twenty-Eighth Annual {ACM-SIAM} Symposium
  on Discrete Algorithms, {SODA} 2017, Barcelona, Spain, Hotel Porta Fira,
  January 16-19}, pages 2616--2625, 2017.

\bibitem[HR19]{HobergRothvoss-SODA19}
Rebecca Hoberg and Thomas Rothvoss.
\newblock {A Fourier-Analytic Approach for the Discrepancy of Random Set
  Systems}.
\newblock In {\em Proceedings of the Thirtieth Annual {ACM-SIAM} Symposium on
  Discrete Algorithms, {SODA} 2019, San Diego, California, USA, January 6-9,
  2019}, pages 2547--2556, 2019.

\bibitem[LM15]{Lovett-Meka-SICOMP15}
Shachar Lovett and Raghu Meka.
\newblock {Constructive Discrepancy Minimization by Walking on the Edges}.
\newblock {\em {SIAM} J. Comput.}, 44(5):1573--1582, 2015.

\bibitem[LMMS04]{LiptonMMS-EC04}
Richard~J. Lipton, Evangelos Markakis, Elchanan Mossel, and Amin Saberi.
\newblock On approximately fair allocations of indivisible goods.
\newblock In {\em Proceedings 5th {ACM} Conference on Electronic Commerce
  (EC-2004), New York, NY, USA, May 17-20, 2004}, pages 125--131, 2004.

\bibitem[LRR17]{LevyRR17}
Avi Levy, Harishchandra Ramadas, and Thomas Rothvoss.
\newblock Deterministic discrepancy minimization via the multiplicative weight
  update method.
\newblock In {\em Integer Programming and Combinatorial Optimization - 19th
  International Conference, {IPCO} 2017, Waterloo, ON, Canada, June 26-28,
  2017, Proceedings}, pages 380--391, 2017.

\bibitem[Mat09]{Matousek-Book09}
Jiri Matousek.
\newblock {\em {Geometric discrepancy: An illustrated guide}}, volume~18.
\newblock Springer Science \& Business Media, 2009.

\bibitem[Nik14]{Nikolov-Thesis14}
Aleksandar Nikolov.
\newblock {\em New computational aspects of discrepancy theory}.
\newblock PhD thesis, Rutgers University-Graduate School-New Brunswick, 2014.

\bibitem[Nik17]{Nikolov-Mathematika19}
Aleksandar Nikolov.
\newblock Tighter bounds for the discrepancy of boxes and polytopes.
\newblock {\em CoRR}, abs/1701.05532, 2017.

\bibitem[NNN12]{NewmanNN-FOCS12}
Alantha Newman, Ofer Neiman, and Aleksandar Nikolov.
\newblock Beck's three permutations conjecture: {A} counterexample and some
  consequences.
\newblock In {\em 53rd Annual {IEEE} Symposium on Foundations of Computer
  Science, {FOCS} 2012, New Brunswick, NJ, USA, October 20-23, 2012}, pages
  253--262, 2012.

\bibitem[Rot14]{Rothvoss14}
Thomas Rothvo{\ss}.
\newblock Constructive discrepancy minimization for convex sets.
\newblock In {\em 55th {IEEE} Annual Symposium on Foundations of Computer
  Science, {FOCS} 2014, Philadelphia, PA, USA, October 18-21, 2014}, pages
  140--145, 2014.

\bibitem[Spe77]{Spencer77}
Joel Spencer.
\newblock Balancing games.
\newblock {\em Journal of Combinatorial Theory, Series B}, 23(1):68--74, 1977.

\bibitem[Spe85]{Spencer85}
Joel Spencer.
\newblock Six standard deviations suffice.
\newblock {\em Transactions of the American Mathematical Society}, pages
  289(2):679–706,, 1985.

\bibitem[Spe87]{Spencer-Book87}
Joel~H. Spencer.
\newblock {\em Ten lectures on the probabilistic method}, volume~52.
\newblock Society for Industrial and Applied Mathematics Philadelphia, 1987.

\bibitem[SST97]{SpencerST-SODA97}
Joel~H. Spencer, Aravind Srinivasan, and Prasad Tetali.
\newblock The discrepancy of permutation families.
\newblock In {\em SODA}, 1997.

\bibitem[TV85]{ThomsonVarian-Essay85}
William Thomson and Hal Varian.
\newblock Theories of justice based on symmetry.
\newblock {\em Social goals and social organizations: essays in memory of
  Elisha Pazner}, 126, 1985.

\end{thebibliography}

\end{document}